\documentclass[acmsmall]{acmart}


\usepackage[utf8]{inputenc} 
\usepackage[T1]{fontenc}    
\usepackage{hyperref}       
\usepackage{url}            
\usepackage{booktabs}       
\usepackage{amsfonts}       
\usepackage{amstext}
\usepackage{amsmath,amsthm}
\usepackage{nicefrac}       
\usepackage{microtype}      
\usepackage{xcolor}         
\usepackage{natbib}
\setcitestyle{numbers,square}
\usepackage[linesnumbered,ruled]{algorithm2e}   
\usepackage{multirow}
\usepackage{graphicx}
\usepackage{comment}
\usepackage{float}
\usepackage{tablefootnote}
\usepackage{hyperref}

 \newtheorem{theorem}{\textbf{Theorem}}
 \newtheorem{proposition}{\textbf{Proposition}} \newtheorem{lemma}{\textbf{Lemma}}
 \newtheorem{definition}{\textbf{Definition}}
\newtheorem{assumption}{\textbf{Assumption}}
\newtheorem{remark}{Remark}
\newtheorem{claim}{\textbf{Claim}}
\usepackage{hyperref} 
\usepackage{lineno}


\AtBeginDocument{%
  \providecommand\BibTeX{{%
    \normalfont B\kern-0.5em{\scshape i\kern-0.25em b}\kern-0.8em\TeX}}}

\begin{document}

\setcopyright{acmlicensed}
\acmJournal{PACMMOD}
\acmYear{2024} \acmVolume{2} \acmNumber{3 (SIGMOD)} \acmArticle{178} \acmMonth{6}\acmDOI{10.1145/3654981}

\title{Towards Metric DBSCAN: Exact, Approximate, and Streaming Algorithms}


\author{Guanlin Mo}
\authornote{Both authors contributed equally to this research.}
\email{moguanlin@mail.ustc.edu.cn}
\author{Shihong Song}
\authornotemark[1]
\email{shihongsong@mail.ustc.edu.cn}
\affiliation{%
  \institution{University of Science and Technology of China}
  \streetaddress{No.96, JinZhai Road Baohe District}
  \city{Hefei}
  \state{Anhui}
  \country{China}
  \postcode{230026}
}

\author{Hu Ding}
\authornote{Corresponding author.}
\affiliation{%
  \institution{University of Science and Technology of China}
  \streetaddress{No.96, JinZhai Road Baohe District}
  \city{Hefei}
  \state{Anhui}
  \country{China}}
  \postcode{230026}
\email{huding@ustc.edu.cn}

\begin{abstract}
    DBSCAN is a popular density-based clustering algorithm that has  many different applications in practice. However, 
    the running time of DBSCAN in high-dimensional space or general metric space ({\em e.g.,} clustering a set of texts by using edit distance) can be as large as quadratic in the input size. Moreover, most of existing accelerating techniques for DBSCAN are only available for low-dimensional Euclidean space. In this paper, we study the DBSCAN problem under the assumption that the inliers (the core points and border points) have a low intrinsic dimension (which is a realistic assumption for many high-dimensional applications), 
    where the outliers can locate anywhere in the space without any assumption.
    First, we propose a $k$-center clustering based algorithm that can reduce the time-consuming labeling and merging tasks of DBSCAN to be linear. Further, we propose a linear time approximate DBSCAN algorithm, where the key idea is building a novel small-size summary for the core points. Also, our algorithm can be efficiently implemented for streaming data and the required memory is independent of the input size.  Finally, we conduct our experiments and compare our algorithms with several popular DBSCAN algorithms. The experimental results suggest that our proposed approach can significantly reduce the computational complexity in practice. Our source code can be found at \href{https://github.com/MoGuanlin/Towards-Metric-DBSCAN}{https://github.com/MoGuanlin/Towards-Metric-DBSCAN}.
\end{abstract}

\begin{CCSXML}
<ccs2012>
   <concept>
       <concept_id>10003752.10003809.10010055</concept_id>
       <concept_desc>Theory of computation~Theory and algorithms for application domains</concept_desc>
       <concept_significance>500</concept_significance>
       </concept>
 </ccs2012>
\end{CCSXML}

\ccsdesc[500]{Theory of computation~Theory and algorithms for application domains}

\keywords{Density-based Clustering, Outliers, Approximation, Doubling dimension, Streaming, $k$-center Clustering}



\maketitle
\section{INTRODUCTION}
\label{sec-intro}
{\em Density-based clustering} is an important clustering model for many real-world applications~\citep{kriegel2011density}. Roughly speaking, the goal of density-based clustering is to identify the dense subsets from a given data set in some metric space. 
Unlike other clustering algorithms ({\em e.g.,} the center-based clustering algorithms like $k$-means clustering~\citep{tan2016introduction}), 
the density-based clustering algorithms are particularly useful to extract clusters of arbitrary shapes with outliers. So they also can be applied to   solve outlier recognition problems~\citep{chandola2009anomaly}. 
In the past decades, a number of  density-based clustering algorithms have been proposed~\citep{ester1996density,ankerst1999optics,mcinnes2017hdbscan,bhattacharjee2021survey}.
\textbf{DBSCAN} (Density Based Spatial Clustering
of Applications with Noise), which was proposed by 
\citet{ester1996density}, is one of the most popular density-based clustering algorithms and has been extensively applied to various practical areas~\citep{gan2015dbscan,schubert2017dbscan,he2014mr}. Due to its substantial contributions in  practice, the original DBSCAN paper was awarded {\em the test of time award} in ACM KDD 2014.


Though the DBSCAN method often achieves promising clustering performance in practice, the high computational complexity can be prohibitive for   applications in large-scale data. 
 Suppose the input data size is $n$. The worst-case time complexity of DBSCAN is $\Theta(n^2)$. 
 Obviously this quadratic complexity is far from satisfying for large $n$ in practice, and thus it motivated several faster implementations of DBSCAN in the past years. 
\citet{gunawan2013faster} proposed a grid-based DBSCAN algorithm,
which can reduce the time complexity to $O(n\log n)$ in $2D$. 
For general low-dimensional Euclidean space $\mathbb{R}^d$ (assume $d$ is small),  \citet{chen2005geometric} and  \citet{gan2015dbscan} independently proposed similar sub-quadratic complexity algorithms by using some computational geometry techniques. For example, the time complexity of \citep{gan2015dbscan} is $O(n^{2-\frac{2}{\lceil d/2\rceil+1}+\delta})$ with $\delta$ being any arbitrarily small non-negative constant. 
However, as $d$ increases, 
the time complexities of \citep{chen2005geometric,gan2015dbscan} approach to $O(n^2)$.
Furthermore, \citet{gan2015dbscan} proved that the DBSCAN problem is at least as hard as the {\em unit-spherical emptiness checking (USEC)} problem \citep{agarwal1990euclidean},
where it is widely believed that the time complexity for solving the USEC problem is  $\Omega(n^{\frac{4}{3}})$ for any $d\geq 3$~\citep{erickson1995relative,erickson1995new}.
Therefore, most recent works { mainly} focus on the low-dimensional DBSCAN problem~\citep{sarma2019mudbscan,song2018rp,wang2020theoretically,de2019faster}.

Due to the  hardness of the exact DBSCAN problem, another line of research on approximate DBSCAN attracts more attentions recently. 
For example, the aforementioned works 
\citep{chen2005geometric}  and \citep{gan2015dbscan} also 
proposed approximate DBSCAN algorithms with theoretical guarantees.  
The connectivity between points within each cluster, which is the most time-consuming step in exact DBSCAN,  is relaxed in their algorithms.
Their algorithms both achieve the $O(n/\rho^{d-1})$ time complexity, where $\rho>0$ is the given parameter that measures the approximation quality (we provide the formal definition in Section \ref{dbscan_def}).
Other variants of approximate DBSCAN methods include sampling based DBSCAN++~\citep{jang2019dbscan++},
LSH based DBSCAN~\citep{wu2007linear}, and KNN based NG-DBSCAN~\citep{lulli2016ng}. 

However, the research on DBSCAN for abstract metric space and high-dimensional space is still quite limited, to the best of our knowledge. 
In this big data era, we often confront with high-dimensional data~\citep{fan2014challenges} and even non-Euclidean data~\citep{bronstein2017geometric}. 
For example, the input data can be texts, images, or biological sequences, 
which cannot be embedded into a low-dimensional Euclidean space; 
and moreover, the distance between different data items can be more complicated than Euclidean distance ({\em e.g.,} edit distance~\citep{navarro2001guided}).  
We should emphasize that one can always run the original DBSCAN algorithm~\citep{ester1996density} for solving DBSCAN in abstract metric space or high-dimensional space, but the previously developed accelerating techniques cannot be directly applied to reduce the complexity for these cases. 
\citet{ding2021metric} proposed a fast exact DBSCAN algorithm for abstract metric space by  using the $k$-center clustering heuristic~\citep{ding2019greedy}; but their result lacks strict theoretical analysis. \citet{lulli2016ng} and \citet{yang2019dbscan} considered distributed DBSCAN for metric space, but their methods mainly focus on the communication cost and load balance for distributed system rather than the time complexity ({\em e.g.}, the approach of \cite{yang2019dbscan} just directly uses the orginal DBSCAN algorithm in each local machine). 

\textbf{Our main results and the key ideas.} 
In this paper, we systematically study the DBSCAN problem in abstract metric space (we call it ``\textbf{metric DBSCAN}'' problem for short). Our results can be also extended to the case in high-dimensional Euclidean space, if the intrinsic dimension of the input data 
(except for the outliers) 
is low, which is a reasonable assumption for a   range of applications in real world ({\em e.g.,} image datasets~\citep{roweis2000nonlinear}).
We use the ``doubling dimension'' $D$ to measure the intrinsic dimension (the formal definition is shown in Section~\ref{dbscan_def}). Also note that we do not require the value of $D$ to be explicitly given in our algorithms. 
We do not add any constraint to the outliers, because in practice they can locate anywhere in the space. {For example, in the area of AI security, an  attacker may add some carefully crafted outliers to mislead the machine learning model, where some recent works showed that the outliers often have significantly higher intrinsic dimension than   normal data~\cite{houle2018characterizing,weerasinghe2021defending}. }
 We consider both the exact and approximate metric DBSCAN problems.  Our main contributions are twofold {\color{black}(the results can be easily extended to the case that both inliers and outliers have low doubling dimensions).} 

\textbf{(1)} The exact DBSCAN problem involves two major  tasks: identify the core points and merge the core points to form the clusters. 
Both of these two tasks take $O(n^2)$ time in the worst case if using the original DBSCAN algorithm.
First, we show that the first task can be completed with the runtime linear in $n$. 
The key idea is inspired by a novel insight that relates the task to a {\em radius-guided $k$-center clustering} method in metric space, which can reduce the search range for identifying the core points. After labeling the core points, we then build a set of cover trees~\cite{beygelzimer2006cover} for the local regions, and solve the second task for merging the core points with the time linear in $n$ as well.   

\textbf{(2)} We also consider the  $\rho$-approximate DBSCAN problem proposed in~\cite{gan2015dbscan}. 
Our idea is to construct a ``summary'' for the set of core points. 
The summary should be much smaller than the core points set (which can be as large as $n$); 
also the summary should be able to approximately represent the core points, 
so that we can efficiently recover the $\rho$-approximate DBSCAN clusters without dealing with all the core points. The approximate algorithm also has the time linear in $n$; but comparing with our proposed exact DBSCAN algorithm, it  compresses the data into a small-size summary, so the method 
can be easily extended for developing streaming algorithm 
where the required memory is independent of $n$. To the best of our knowledge, the current research for streaming DBSCAN (exact or approximate DBSCAN with theoretical quality guarantee) are still quite limited, though several other types of density-based streaming clustering algorithms were proposed before~\citep{hahsler2016clustering,chen2007density,carnein2018evoStream,fichtenberger2013bico}.
 
We also take a comprehensive set of experiments to evaluate the performance of our proposed algorithms. Our algorithms can achieve significant speedups over several existing DBSCAN algorithms. Another advantage of  our method is that it is   friendly for parameter tuning. For example, one usually needs to try different  values for the parameters ({\em e.g,} the ball radius and density threshold) to achieve a satisfied DBSCAN solution; the radius-guided $k$-center clustering, which is a key part in our algorithms,  do not need to be performed repeatedly during the parameter tuning, and therefore our overall running time can be  further reduced   in practice.

\subsection{Preliminaries}

{
\color{black}

}

\label{dbscan_def}
\textbf{Notations.} We denote the metric space  by $(X,\mathtt{dis})$,  where $X$ is the set of $n$ input data items and $\mathtt{dis}(\cdot,\cdot)$ is the distance function on $X$. 
For any $p \in X$ and $Y \subseteq X$, we define $\mathtt{dis}(p, Y) = \min_{y \in Y} \mathtt{dis}(p,y)$.
Let $\mathbb{B}(p,r)$ be the ball centered at a point $p\in X$ with radius $r \ge 0$ in the space. For any  $Y\subseteq X$, we use $|Y|$ to denote the number of data items in $Y$. 
We assume that it takes $t_{\mathtt{dis}}$ time to compute the distance $\mathtt{dis}(p_1, p_2)$ between any two points $p_1, p_2\in X$. For example, $t_{\mathtt{dis}}=O(d)$ for $d$-dimensional Euclidean space.  

\subsubsection{Overview for DBSCAN}
\label{sec-overviewdbscan}
For the sake of completeness, we briefly overview the DBSCAN algorithm~\citep{ester1996density} and the $\rho$- approximation \citep{gan2015dbscan}. 
DBSCAN has two input parameters: $\epsilon \in \mathbb{R}^+$ and $MinPts \in \mathbb{Z}^+$.
All the input data points of $X$ are divided  into three categories (see Figure~\ref{dbscan-fig} for an illustration): for any $p\in X$, 
{\color{black}
\begin{enumerate}
      \item $p$ is a \textbf{core point} if it has at least $MinPts$ neighbors within the distance $\epsilon$, {\em i.e.,} $|\mathbb{B}(p,\epsilon) \cap X| \ge MinPts$; 
  \item $p$ is a \textbf{border point} if $p$ is not a core point, but {\color{black} there exists at least one core point $q$ satisfying $p\in \mathbb{B}(q,\epsilon)$;}
  \item  $p$ is an \textbf{outlier}, if it does not satisfy neither (1) nor (2).
\end{enumerate}
}
\vspace{-5pt}
\begin{figure}[H]
    \centering
    \includegraphics[width=0.3\textwidth]{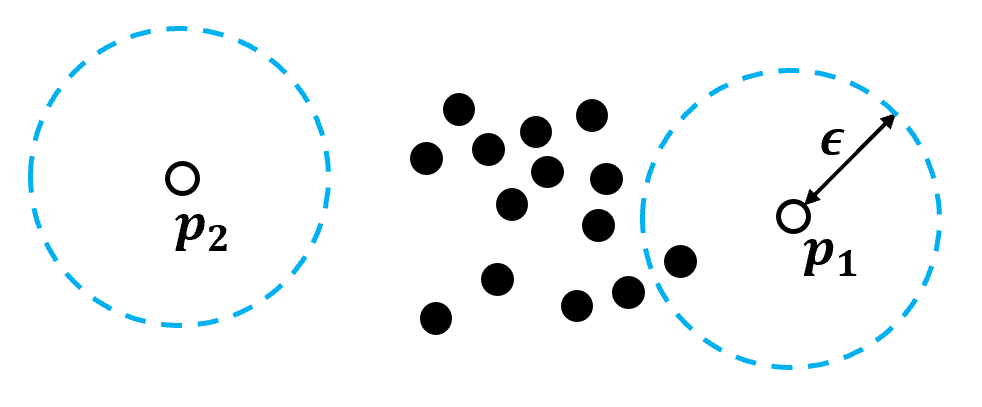}
    \vspace{-5pt}
    \caption{{An example for DBSCAN with $MinPts=4$: the solid points are core points, the point $p_1$ is a border point, and the point  $p_2$ is an outlier.}}
    \label{dbscan-fig}
\end{figure}
\vspace{-5pt}


{\color{black} We also need to consider the ``reachability'' to define DBSCAN clusters. Let $q$ be a core point. A point $p \in X$ is \textbf{density-reachable} from $q$ if we can find a sequence of points $p_1,p_2,\cdots,p_{t} \in X$ satisfying: all of them  (with the possible exception of $p_t$) are core points, and
\begin{itemize}
        \item $p_{i+1} \in \mathbb{B}(p_i,\epsilon)$ for each $i\in [1,t - 1]$;
    \item $p_1 = q$ and $p_t = p$.
\end{itemize}
}

\begin{definition}[DBSCAN~\citep{ester1996density}]
\label{def-dbscan}
  We say a non-empty set of points $C \subseteq X$ is a \textbf{cluster}, if $C$ satisfies the following two conditions:
\textbf{(1) maximality:} $\forall p, q$, if $q \in C$ and $p$ is density-reachable from $q$, then $p \in C$;
    \textbf{(2) connectivity:} for any pair of points $p,q\in C$, there is a point $o \in C$ such that both $p$ and $p$ are  
    density-reachable from $o$.
  The \textbf{DBSCAN problem} is to find the  set of DBSCAN clusters of $X$\footnote{In this definition, a border point may be assigned to multiple clusters.}
  \end{definition}

Let $\rho>0$. We say a point $p \in X$ is \textbf{$\rho$-approximate density-reachable} from a core point $q$ if   the third condition of density-reachable ``$p_{i+1} \in \mathbb{B}(p_i,\epsilon)$'' is relaxed to be  ``$p_{i+1} \in \mathbb{B}(p_i,(1+\rho)\epsilon)$'' for each $i=1,2,\cdots,t-1$.

\begin{definition}[$\rho$-approximate DBSCAN~\citep{gan2015dbscan}]
\label{def-rhoapprox}
By using ``$\rho$- approximate density-reachable'' instead of ``density-reachable'' in the  \textbf{connectivity}  condition, one can define  \textbf{$\rho$-approximate cluster} via the similar manner of Definition~\ref{def-dbscan}. 
The \textbf{$\rho$-approximate DBSCAN problem} is to find a set of $\rho$-approximate clusters such that 
  every core point of $X$ belongs to exactly one cluster.
\end{definition}

It is easy to see that $\rho$-approximate DBSCAN is equivalent to the exact DBSCAN problem if $\rho=0$. Also, Gan and Tao~\citep{gan2015dbscan} proved a ``Sandwich Theorem'' for it. Namely, the $\rho$-approximate DBSCAN solution $S$ with $(\epsilon, MinPts)$ is sandwiched by the exact DBSCAN solutions $S_1$ and $S_2$ with $(\epsilon, MinPts)$ and $((1+\rho)\epsilon, MinPts)$ respectively: for any two points $p, q\in X$, if they fall into the same cluster of $S_1$, they must also fall into the same cluster of $S$; if they fall into the same cluster of $S$, they must also fall into the same cluster of $S_2$.

%
%
\begin{remark}
\label{rem-minpts}
The parameter $MinPts$ is usually set to be a constant (say, less than $ 10$) as recommended in previous articles like~\citep{ester1996density, gunawan2013faster, de2019faster}. So we assume that $MinPts$ is a small number throughout this paper. 
\end{remark}

\subsubsection{Doubling Dimension}
\label{sec-dd}
We  take ``doubling dimension'', which is a natural and robust measure that has been widely used in machine learning and data analysis~\citep{robinson2010dimensions,gupta2003bounded}, to indicate the intrinsic dimension of data. Intuitively, the doubling dimension measures how fast the data volume is growing in the space. 
{\color{black}
\begin{definition}[Doubling dimension \cite{gupta2003bounded}]
  \label{Doubling Dimension}
  Given a metric space $(X, \mathtt{dis})$, let $\Lambda$ be the smallest positive integer such that for any $p\in X$ and any $r \ge 0$,
  $X \cap \mathbb{B}(p,2r)$ can always be covered by the union of at most $\Lambda$ balls with radius $r$. Then the doubling dimension of $(X, \mathtt{dis})$ is $D=\lceil\log_2 \Lambda\rceil$. 
\end{definition}}
We also have the following proposition for doubling dimension that is used in our following analysis. 

{
\color{black}
\begin{proposition}[ \citep{talwar2004bypassing,krauthgamer2004navigating} ]
  \label{Ap_lemma}
  Suppose the doubling dimension of a metric space $(X, \mathtt{dis})$ is $D$. For any point set $Y \subseteq X$, we have $|Y|\le 2^{D\lceil \log \alpha \rceil}$, where $\alpha$ is the aspect ratio of $Y$, \em{i.e.,} 
  $\alpha = \frac{\max_{y,y'\in Y}\mathtt{dis}(y,y')}{\min_{y,y'\in Y}\mathtt{dis}(y,y')}$. 
\end{proposition}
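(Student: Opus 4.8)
The plan is to run the standard packing argument for doubling metrics: iterate the doubling property to cover $Y$ by a bounded number of very small balls, each of which can contain at most one point of $Y$ because the points of $Y$ are pairwise far apart. We may assume $|Y|\ge 2$, as the claim is trivial otherwise. Put $r_{\min}=\min_{y,y'\in Y}\mathtt{dis}(y,y')$, which is positive since the points are distinct, and $r_{\max}=\max_{y,y'\in Y}\mathtt{dis}(y,y')$, so that $\alpha=r_{\max}/r_{\min}$. Fixing any $p\in Y$, every point of $Y$ is within distance $r_{\max}$ of $p$, hence $Y\subseteq X\cap\mathbb{B}(p,r_{\max})$.

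The next step is to iterate Definition~\ref{Doubling Dimension}. By that definition $X\cap\mathbb{B}(p,r_{\max})$ is covered by at most $\Lambda$ balls of radius $r_{\max}/2$; applying the definition again to each of these balls, and repeating, after $k$ rounds $Y$ is covered by at most $\Lambda^{k}$ balls of radius $r_{\max}/2^{k}$. Now take $k$ just large enough that $r_{\max}/2^{k}<r_{\min}/2$, which needs only $k=\lceil\log\alpha\rceil$ rounds (up to a small additive constant). For such $k$, every ball $B$ of radius $r_{\max}/2^{k}$ satisfies $|B\cap Y|\le 1$: two distinct points $y,y'\in Y\cap B$ would give $r_{\min}\le\mathtt{dis}(y,y')\le 2\cdot r_{\max}/2^{k}<r_{\min}$ by the triangle inequality, a contradiction. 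Hence $|Y|$ is at most the number of covering balls, and since $D=\lceil\log_2\Lambda\rceil$ implies $\Lambda\le 2^{D}$, we get $|Y|\le\Lambda^{k}\le 2^{Dk}\le 2^{D\lceil\log\alpha\rceil}$.

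The conceptual ingredients — iterating the doubling condition and the one-line triangle-inequality bound on $|B\cap Y|$ — are routine. I expect the main friction to be the ceiling bookkeeping in the choice of $k$: a literal count of the halvings needed to drive the radius below $r_{\min}/2$ gives an exponent of the form $\lceil\log\alpha\rceil+O(1)$, so landing exactly on $\lceil\log\alpha\rceil$ requires being slightly careful about the recursion depth (this is the additive constant that the cited references fold into an ``$O(D)$'' factor in the exponent). Beyond that, no new idea should be needed.
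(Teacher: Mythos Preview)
The paper does not actually prove Proposition~\ref{Ap_lemma}; it is quoted from \cite{talwar2004bypassing,krauthgamer2004navigating} without proof, so there is no in-paper argument to compare against. Your outline is the standard packing argument used in those references and is correct in substance: iterate the doubling definition to cover $Y\subseteq\mathbb{B}(p,r_{\max})$ by $\Lambda^{k}$ balls of radius $r_{\max}/2^{k}$, then observe that once the radius drops below $r_{\min}/2$ each ball meets $Y$ in at most one point.

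The only point worth flagging is the one you already raise: the literal inequality $r_{\max}/2^{k}<r_{\min}/2$ forces $k>1+\log_2\alpha$, so a naive count gives an exponent $\lceil\log\alpha\rceil+1$ (or $+2$) rather than $\lceil\log\alpha\rceil$. This is exactly the ``$O(D)$ slack in the exponent'' that the cited references carry, and for every use the paper makes of Proposition~\ref{Ap_lemma} (Lemmas~\ref{kcenter_iteration}, \ref{Ap_num}, \ref{ap_const}) the bound is immediately absorbed into a big-$O$, so the additive constant is immaterial. If you want the stated constant exactly, you would need to tighten the last step (e.g., argue directly that a ball of radius $r_{\min}$ meets an $r_{\min}$-packing in at most $\Lambda$ points and adjust the recursion depth accordingly), but nothing in the paper depends on that refinement.
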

}

 We let $X_{\mathtt{out}}$ be the set of DBSCAN outliers and $X_{\mathtt{in}}=X\setminus X_{\mathtt{out}}$.  
For convenience, we let $z = |X_{\mathtt{out}}|$. 
We have Assumption~\ref{ass-doubling2} following the previous articles on algorithms design in doubling metric~\citep{DBLP:conf/focs/HuangJLW18,ceccarello2019solving,ding2021metric}. 

\begin{assumption}
\label{ass-doubling2}
We assume that the doubling dimension $D$ of the sub-metric space $(X_{\mathtt{in}}, \mathtt{dis})$ is constant. 
\end{assumption}
 For the outliers $X_{\mathtt{out}}$, we do not make any assumption for their doubling dimension. In other words, the outliers can scatter arbitrarily without any constraint in the space ({\em e.g.,} the outliers may be manipulated by an attacker~\cite{houle2018characterizing,weerasinghe2021defending}).

\subsubsection{Cover Tree}
\label{sec-ct}
We also introduce a spatial query structure ``cover tree'' which was proposed by \citet{beygelzimer2006cover} for data with low intrinsic dimension. To state the definition, we need to introduce  ``$r$-net'' first~\cite{clarkson2006building}.

 \begin{definition}[$r$-net]
 Given a number $r>0$ and a set $P$ in some metric space, 
    a subset $Q \subseteq P$ is an $r$-net of $P$ if it satisfies two requirements: 
    \begin{enumerate}
        \item {$r$-packing: $\mathtt{dis}(q, Q\setminus \{q\})\ge r$ for all $q\in Q$.}
        \item  $r$-covering: $\mathtt{dis}(p,Q)\le r$ for all $p\in P$.
    \end{enumerate}
    \label{r-net}
\end{definition}
\vspace{-5pt}

A \textbf{Cover Tree} is a hierarchical tree structure $T$ for storing a given data set $P$, where each node corresponds to a point of $P$. 
The set of nodes at the $i$-th level is denoted by $T_i$ for   $l_{\mathtt{bottom}}\leq i\leq l_{\mathtt{top}}$, where $l_{\mathtt{bottom}}$ and $l_{\mathtt{top}}$ indicate the lowest and  highest   levels of $T$, respectively; $T_{l_{\mathtt{top}}}$ is the root node, and $T_i \subset T_{i-1}$ for each $i>l_{\mathtt{bottom}}$. Also, we require that each $T_i$ is a $2^i$-net of $T_{i-1}$.

 For convenience, let $\Delta_P$ and $\delta_P$ denote the maximum and minimum pairwise distances of $P$, respectively.
From the above definition of the cover tree, we can derive that $l_{\mathtt{top}}\leq \lceil \log \Delta_P \rceil$ and $l_{\mathtt{bottom}}\geq \lfloor \log \delta_P \rfloor$ (so $l_{\mathtt{bottom}}$ can be negative if $\delta_P <1$).


  
{
\begin{claim}[complexities of cover tree]
\label{cover_insert}
    Let  the aspect ratio of $P$ be $\Phi_P=\Delta_P/\delta_P$. Suppose the dataset $P$ has a doubling dimension $D$. The construction time of the cover tree of $P$ is {$O(2^{O(D)}|P|\log \Phi_P \cdot t_{\mathtt{dis}})$.} 
    For any query point $p$,   finding its nearest neighbor  in $P$ takes $O(2^{O(D)}\log \Phi_P \cdot t_{\mathtt{dis}})$ time.
    \end{claim}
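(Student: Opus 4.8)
The plan is to follow the cover-tree analysis of Beygelzimer, Kakade and Langford~\cite{beygelzimer2006cover}, reducing both bounds to three ingredients: (i) the tree has only $O(\log \Phi_P)$ levels; (ii) every node has $2^{O(D)}$ children, and the set of ``live'' candidates maintained during any root-to-leaf traversal has size $2^{O(D)}$; and (iii) a single insertion or nearest-neighbor query touches $O(\log\Phi_P)$ levels and spends $2^{O(D)}$ distance computations per level. For (i): since $T_i$ is a $2^i$-net of $T_{i-1}$, the packing condition implies any two distinct nodes of $T_i$ lie at distance $\ge 2^i$, forcing $l_{\mathtt{bottom}} \ge \lfloor \log \delta_P\rfloor$; the covering condition together with $T_{l_{\mathtt{top}}}$ being the single root forces $l_{\mathtt{top}} \le \lceil \log \Delta_P\rceil$. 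Hence the number of levels is $l_{\mathtt{top}} - l_{\mathtt{bottom}} + 1 = O(\log(\Delta_P/\delta_P)) = O(\log \Phi_P)$, where I use the fact that the per-operation work along a branch is linear in the number of levels.

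For (ii), I would invoke Proposition~\ref{Ap_lemma} twice. The children of a node $q\in T_i$ all lie in $T_{i-1}$ within distance $\le 2^i$ of $q$ (by the $2^i$-covering of $T_{i-1}$ used to assign parents), so the child set has diameter $\le 2^{i+1}$ while any two of its members are at distance $\ge 2^{i-1}$ (the packing radius of $T_{i-1}$); its aspect ratio is therefore at most $4$, and Proposition~\ref{Ap_lemma} bounds the number of children by $2^{2D}=2^{O(D)}$. Likewise, the candidate set $Q_i$ maintained at level $i$ during a traversal is a subset of the $2^i$-packed net $T_i$ and, by the pruning rule, is contained in a ball of radius $O(2^i)$ around the target point once the descent has reached level $i$; its aspect ratio is $O(1)$, so Proposition~\ref{Ap_lemma} again gives $|Q_i| = 2^{O(D)}$. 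The triangle-inequality fact underlying the pruning rule is that any descendant of a level-$i$ node lies within $\sum_{j\le i}2^j \le 2^{i+1}$ of it; hence retaining every candidate whose distance to the target is within an additive $2^{i+1}$ of the current best cannot discard the branch that contains the true nearest neighbor (for queries) or the correct insertion location (for construction), which also yields correctness.

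For (iii): both insertion and the nearest-neighbor search process levels from $l_{\mathtt{top}}$ downward; at level $i$ they expand the children of every node in $Q_{i+1}$, evaluate the distances of these children to the new point (resp.\ the query point), and keep those surviving the threshold to form $Q_i$. By (ii) this is $|Q_{i+1}|\cdot 2^{O(D)} = 2^{O(D)}$ distance evaluations, each costing $t_{\mathtt{dis}}$, and by (i) there are $O(\log\Phi_P)$ levels, for a total of $O(2^{O(D)}\log\Phi_P\cdot t_{\mathtt{dis}})$ time; this is exactly the query bound. Inserting the $|P|$ points one at a time then gives construction time $O(2^{O(D)}|P|\log\Phi_P\cdot t_{\mathtt{dis}})$.

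The main obstacle is item (ii), and specifically the assertion that the live candidate set stays of size $2^{O(D)}$ throughout the descent: this hinges on designing the pruning threshold so that the surviving candidates at level $i$ provably lie within $O(2^i)$ of the target (rather than merely within $O(\Delta_P)$), after which the $2^i$-packing of $T_i$ is what converts ``a net restricted to a small ball'' into the $2^{O(D)}$ bound via Proposition~\ref{Ap_lemma}. The level count, the per-level bookkeeping, and the final summation over the $|P|$ insertions are then routine.
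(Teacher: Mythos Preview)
Your sketch is correct and in fact supplies more than the paper does: the paper offers no proof of Claim~\ref{cover_insert} at all, only a remark that the stated doubling-dimension bounds ``can be obtained by using the Navigating net idea~\cite{elkin2023new,krauthgamer2004navigating}'' (the original cover-tree paper~\cite{beygelzimer2006cover} works with the stronger expansion-rate assumption). What you have written is precisely the standard navigating-net/cover-tree argument those citations point to, instantiated with Proposition~\ref{Ap_lemma} to convert ``$2^i$-packed subset of a radius-$O(2^i)$ ball'' into a $2^{O(D)}$ size bound; so there is no methodological difference, only a difference in level of detail --- you spell out the three ingredients (level count $O(\log\Phi_P)$, per-node fan-out $2^{O(D)}$, candidate-set size $2^{O(D)}$) that the paper leaves entirely to the references. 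Your identification of the pruning-threshold design as the crux of item~(ii) is exactly right and is where the navigating-net analysis does its real work.
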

}


    

\begin{remark}
In the original article~\cite{beygelzimer2006cover}, the authors assumed that the input data has a constant ``expansion rate'', which is a similar but more stringent  intrinsic dimension measure than  the doubling dimension~\cite[Proposition 1.2]{gupta2003bounded}. For example, a data set has a constant expansion rate must also have a constant doubling dimension, but not vice versa; in other words, doubling dimension is a more general measure than the expansion rate. The complexities in terms of the dobuling dimension of Claim~\ref{cover_insert} can be obtained by using the Nevigating net idea \cite{elkin2023new,krauthgamer2004navigating}.

\end{remark}

In recent years, several new variants of the cover tree, such as the simplified cover tree \citep{izbicki2015faster} and the compressed cover tree \citep{elkin2023new}, were also proposed. In this paper, we only use the vanilla cover tree for simplicity in our analysis, and one can also replace it by those variants in practice.

\textbf{The rest of this paper is organized as follows.}
In Section~\ref{gonzalez}, we propose the radius-guided $k$-center clustering algorithm that is the key technique for our algorithms. In Section~\ref{exact_dbscan}, we show our algorithm for exact metric DBSCAN problem. In Section~\ref{approx_dbscan}, we show our algorithm and its streaming version for $\rho$-approximate DBSCAN. 
Finally, we illustrate our experimental results in Section~\ref{sec-exp}. 
\section{Radius-guided Gonzalez's algorithm}
\label{gonzalez}
In this section, we revisit the Gonzalez's algorithm \citep{gonzalez1985clustering} which is a popular method for solving the $k$-center clustering problem. 
The goal of $k$-center clustering is 
 to find $k$ balls to cover the whole data with minimizing the maximum radius. 
The Gonzalez's algorithm is a standard greedy procedure. It iteratively selects $k$ points, where each point has the largest distance to the set of the points that were selected in the previous rounds. The algorithm yields a $2$-approximate solution, that is, the obtained radius is no larger than two times the optimal one. It was also shown that designing a $(2-\gamma)$-approximate algorithm for any $\gamma > 0$ is NP hard~\citep{hochbaum1986unified}.


In this section, we propose a radius-guided version of the Gonzalez's algorithm (Algorithm~\ref{exp_kcenter}).
The key difference between our algorithm and the vanilla Gonzalez's algorithm is that 
the former one needs an upper bound $\bar{r}$ of the radius instead of  the number of centers $k$.



\vspace{-5pt}
\begin{algorithm}
  \SetAlgoLined
  \caption{\sc{Radius-guided Gonzalez}}
  \label{exp_kcenter}
  \KwIn{The dataset $X$ and an upper bound $\bar{r}>0$}

  Take an arbitrary point $p_0\in X$ and set $E=\{p_0\}$
  \tcp*[r]{$E$ stores the center points.}
  Set $d_{\mathtt{max}} = \max_{p \in X} \mathtt{dis}(p, E)$.

  For any $p \in X$,  define its closest center $c_p = \arg \min_{e\in E} \mathtt{dis}(p,e)$ (if there is a tie, we arbitrarily pick one as $c_p$). \tcp*[r]{We call $c_p$ as the center of $p$.}   
  
  
  \While{$d_{\mathtt{max}} > \bar{r}$}{
    Let $q = \arg \max_{p \in X} \mathtt{dis}(p, E)$, and 
    add $q$ to $E$.

    Update $d_{\mathtt{max}}$, $c_p$ and $\mathtt{dis}(p, E)$ for each $p\in X$, 
    and the set $\mathcal{C}_e=\{p\mid p\in X \text{ \& } c_p=e \}$ for each $e\in E$. \tcp*[r]{We call the set $\mathcal{C}_e$ as the cover set of $e$, and it is used in our following DBSCAN algorithms.}  }
  \Return{E}
\end{algorithm}
\vspace{-5pt}
%
%
%
  Below we analyze the complexity of Algorithm~\ref{exp_kcenter} under  Assumption~\ref{ass-doubling2} introduced in Section~\ref{sec-dd}. 
  \begin{lemma}
  \label{kcenter_iteration}
  {Suppose $\Delta$ is the maximum pairwise distance of $X_{\mathtt{in}}$} and 
  $z=|X_{\mathtt{out}}|$.
  Algorithm \ref{exp_kcenter} runs at most $O\big{(}(\frac{\Delta}{\bar{r}})^D\big{)} + z$ iterations, and each iteration takes $O(nt_{\mathtt{dis}})$ time.
\end{lemma}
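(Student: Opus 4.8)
The plan is to establish the two assertions of the lemma separately: the running time of a single iteration, and the total number of iterations. For the per-iteration cost, I would use the fact that Algorithm~\ref{exp_kcenter} maintains throughout its run the array of distances $\mathtt{dis}(p,E)$ together with the pointers $c_p$ and the cover sets $\mathcal{C}_e$. Given these, computing $q=\arg\max_{p\in X}\mathtt{dis}(p,E)$ is merely a scan over $n$ stored numbers, i.e.\ $O(n)$ time. Once $q$ is appended to $E$, the only center that can become newly closest to a point $p$ is $q$ itself, so the update step reduces to: for each $p\in X$, compute $\mathtt{dis}(p,q)$ once (total $O(n\,t_{\mathtt{dis}})$), compare it against the stored $\mathtt{dis}(p,E)$, and whenever it is smaller, reset $\mathtt{dis}(p,E)$, move $p$ into $\mathcal{C}_q$, and refresh $d_{\mathtt{max}}$. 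All of this is $O(n\,t_{\mathtt{dis}})$ per iteration, absorbing the $O(n)$ bookkeeping since $t_{\mathtt{dis}}\ge 1$; the one-time initialization before the loop fits in the same bound.

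The main work is bounding the iteration count. I would first observe that the current center set $E$ is always a $\bar r$-packing: each time a point $q$ is added inside the \textbf{while} loop, the loop guard forces $\mathtt{dis}(q,E)=d_{\mathtt{max}}>\bar r$ at that moment, so $q$ lies at distance more than $\bar r$ from every previously chosen center (including the initial $p_0$); hence all pairs of centers are more than $\bar r$ apart. Partition $E=(E\cap X_{\mathtt{in}})\cup(E\cap X_{\mathtt{out}})$. Since the chosen centers are distinct, $|E\cap X_{\mathtt{out}}|\le |X_{\mathtt{out}}|=z$. For $Y:=E\cap X_{\mathtt{in}}\subseteq X_{\mathtt{in}}$, its minimum pairwise distance exceeds $\bar r$ while its maximum pairwise distance is at most $\Delta$, so its aspect ratio is at most $\Delta/\bar r$; applying Proposition~\ref{Ap_lemma} to the sub-metric space $(X_{\mathtt{in}},\mathtt{dis})$ --- whose doubling dimension $D$ is constant by Assumption~\ref{ass-doubling2} --- yields $|Y|\le 2^{D\lceil\log(\Delta/\bar r)\rceil}=O\big((\Delta/\bar r)^D\big)$. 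Hence $|E|=O\big((\Delta/\bar r)^D\big)+z$, and since exactly one new center appears per iteration (and $q$ is never reselected, as $\mathtt{dis}(q,E)=0$ afterward), the number of iterations is $|E|-1\le O\big((\Delta/\bar r)^D\big)+z$.

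The one slightly delicate point I anticipate is the aspect-ratio estimate when $\bar r\ge\Delta$: in that regime $Y$ can contain at most one point, so Proposition~\ref{Ap_lemma} must be read as the trivial bound $|Y|\le 1$, which is still consistent with the stated $O\big((\Delta/\bar r)^D\big)+z$ (alternatively one folds this corner case into the additive constant). Apart from this, the argument is a routine combination of the packing property of Gonzalez-style centers with the doubling-dimension volume bound of Proposition~\ref{Ap_lemma}, so I do not expect any further obstacle.
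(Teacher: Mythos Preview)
Your proposal is correct and follows essentially the same argument as the paper: establish the $\bar r$-packing property of $E$ from the loop guard, split $E$ into inlier and outlier centers, bound $|E\cap X_{\mathtt{out}}|\le z$ trivially, and bound $|E\cap X_{\mathtt{in}}|$ via Proposition~\ref{Ap_lemma} using the aspect ratio $\Delta/\bar r$; the per-iteration cost analysis is likewise identical. Your treatment is in fact slightly more careful than the paper's (you note the strict inequality $>\bar r$ and flag the degenerate case $\bar r\ge\Delta$), but there is no substantive difference in approach.
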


\begin{proof}
The time complexity for each iteration is easy to obtain. When a new point $q$ is added to $E$, we can update $\mathtt{dis}(p, E)=\min\{\mathtt{dis}(p, E\setminus\{q\}), \mathtt{dis}(p, q)\}$ in $O(t_{\mathtt{dis}})$ time for each $p\in X$; the value $d_{\mathtt{max}}$ then can be updated in $O(n)$ time. Similarly, we can update    all the centers $c_p$s  and cover sets $\mathcal{C}_e$s  in $O(n) $ time.  So the overall time complexity for each iteration is $O(nt_{\mathtt{dis}})$.

Now we focus on the maximum number of iterations.
We first consider the inliers $X_{\mathtt{in}}$.
According to Algorithm~\ref{exp_kcenter}, for any $e_1, e_2 \in E$, we claim that $\mathtt{dis}(e_1, e_2) \ge \bar{r}$; otherwise, Algorithm~\ref{exp_kcenter} should terminate earlier, and $e_1$ or $e_2$ cannot be added to $E$. Thus $\min_{e_1, e_2 \in E}$ $ \mathtt{dis}(e_1,e_2) \ge \bar{r}$.
The upper bound of the distance between two points in $ E \cap X_{\mathtt{in}} $ is $\Delta$. 
So, according to Proposition~\ref{Ap_lemma}, $ |E \cap X_{\mathtt{in}}| \le 2^{D\lceil \log \frac{\Delta}{\bar{r}} \rceil}$. Also,  we have $ |E \cap X_{\mathtt{out}}| \le z$, where $z = |X_{\mathtt{out}}|$. 
Therefore, $ |E|=|E \cap X_{\mathtt{in}}|+|E \cap X_{\mathtt{out}}| \le 2^{D\lceil \log \frac{\Delta}{\bar{r}} \rceil} + z = O\big{(}(\frac{\Delta}{\bar{r}})^D\big{)} + z$, and Algorithm~\ref{exp_kcenter} runs for $|E|$ iterations.
\end{proof}  
{
\begin{remark}
We would like to emphasize that the number of iterations of Algorithm~\ref{exp_kcenter} usually is much lower than the theoretical bound in Lemma~\ref{kcenter_iteration}, especially when  applying it to the DBSCAN problem. 
For example, a DBSCAN cluster can have any shape {({\em e.g.,} a ``banana'' shape as shown in Figure \ref{perform})} rather than a ball shape, {\em i.e.,} does not fill an entire ball; so the size upper bound ``$2^{D\lceil \log \alpha \rceil}$'' in Proposition~\ref{Ap_lemma} can be much higher than the real size, which also implies a lower number of iterations in Algorithm~\ref{exp_kcenter}. 
 \end{remark}
}

 We also introduce two other lemmas relevant to Algorithm~\ref{exp_kcenter}, which are used in our following analysis.  For any $p \in X$, its \textbf{neighbor ball center set} is:
  \begin{eqnarray}
 \mathcal{A}_p = \{e | e \in E, \mathtt{dis}(e,c_p) \le 2 \bar{r} + \epsilon\}, \label{for-ap}
 \end{eqnarray}
  where $\epsilon$ is  the parameter of DBSCAN described in Section~\ref{dbscan_def}. Note that for each $p\in X$, the set $\mathcal{A}_p$ can be simultaneously  obtained without increasing the complexity of Algorithm~\ref{exp_kcenter}. For each $q\in E$, we just keep a set $\{e\mid e\in E, \mathtt{dis}(e, q)\leq 2\bar{r}+\epsilon\}$. Note that the center set $E$ is generated incrementally in Algorithm~\ref{exp_kcenter}. When a new point $e$ is added to $E$, we just update the set for each $q\in E$. Then the set $\mathcal{A}_p$ is also obtained for each $p\in X$ when the algorithm terminates.  {\color{black}The approach of ~\citep{ding2021metric} also defines the neighbor region for each point $p$ in a similar way as (\ref{for-ap}). But a major difference is that their approach needs  an estimated upper bound ``$\tilde{z}$'' for the number of outliers  in their definition, which could be hard to obtain; moreover, it also needs to manually set the termination condition (please see Section~\ref{remark33} for the detailed discussion).}

%
\begin{lemma}
  \label{neighbor_lemma}
  For any $p\in X$, we have $X \cap \mathbb{B}(p, \epsilon) \subseteq \cup_{e\in \mathcal{A}_p}\mathcal{C}_e$.
\end{lemma}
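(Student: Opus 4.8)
The plan is to show that any point $x \in X$ with $\mathtt{dis}(p,x) \le \epsilon$ must lie in the cover set $\mathcal{C}_e$ of some center $e \in \mathcal{A}_p$; since $x$ always lies in $\mathcal{C}_{c_x}$ (its own assigned center), it suffices to prove that $c_x \in \mathcal{A}_p$, i.e. that $\mathtt{dis}(c_x, c_p) \le 2\bar{r} + \epsilon$. First I would invoke the termination guarantee of Algorithm~\ref{exp_kcenter}: when the algorithm stops, $d_{\mathtt{max}} \le \bar{r}$, so every point of $X$ is within distance $\bar{r}$ of its assigned center. In particular $\mathtt{dis}(x, c_x) \le \bar{r}$ and $\mathtt{dis}(p, c_p) \le \bar{r}$.

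Next I would chain these bounds with the triangle inequality. We have
\begin{eqnarray*}
\mathtt{dis}(c_x, c_p) &\le& \mathtt{dis}(c_x, x) + \mathtt{dis}(x, p) + \mathtt{dis}(p, c_p)\\
&\le& \bar{r} + \epsilon + \bar{r} = 2\bar{r} + \epsilon.
\end{eqnarray*}
By the definition~(\ref{for-ap}) of $\mathcal{A}_p$, this exactly says $c_x \in \mathcal{A}_p$. Hence $x \in \mathcal{C}_{c_x} \subseteq \cup_{e \in \mathcal{A}_p} \mathcal{C}_e$, and since $x$ was an arbitrary point of $X \cap \mathbb{B}(p,\epsilon)$, the inclusion $X \cap \mathbb{B}(p,\epsilon) \subseteq \cup_{e \in \mathcal{A}_p}\mathcal{C}_e$ follows.

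There is essentially no hard part here — the statement is a short triangle-inequality argument — but the one point to be careful about is making sure the $\bar{r}$ bound on $\mathtt{dis}(\cdot, E)$ is actually in force. This relies on the fact that the \textbf{while} loop of Algorithm~\ref{exp_kcenter} only exits once $d_{\mathtt{max}} = \max_{p\in X}\mathtt{dis}(p,E) \le \bar{r}$, and that $c_p$ realizes $\mathtt{dis}(p,E)$ by construction; I would state this explicitly at the start of the proof so that the two applications of "$\mathtt{dis}(\cdot, \text{its center}) \le \bar r$" are justified. No properties of the doubling dimension are needed for this particular lemma.
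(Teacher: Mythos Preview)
Your proof is correct and is essentially the same triangle-inequality argument as the paper's: both bound $\mathtt{dis}(c_p,c_x)$ via the chain $c_p \to p \to x \to c_x$ using $\mathtt{dis}(p,c_p),\mathtt{dis}(x,c_x)\le \bar r$. The only cosmetic difference is that the paper phrases it contrapositively (if $c_q\notin\mathcal{A}_p$ then $\mathtt{dis}(p,q)>\epsilon$), whereas you argue the inclusion directly.
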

Lemma~\ref{neighbor_lemma} can be proved by using the triangle inequality. 
Fix a point $p\in X$, and we suppose $q\in X$ with $\mathtt{dis}(c_p, c_q) > 2\bar{r} + \epsilon$  (\emph{i.e.}, $q\in \cup_{e\notin \mathcal{A}_p} \mathcal{C}_e$).
  Then we have $\mathtt{dis}(p,q) \ge \mathtt{dis}(c_p, c_q) - \mathtt{dis}(p, c_p) - \mathtt{dis}(q, c_q) > 2\bar{r} + \epsilon-2\bar{r}=\epsilon$. Therefore, we know $q\notin X \cap \mathbb{B}(p, \epsilon)$, and consequently $X \cap \mathbb{B}(p, \epsilon) \subseteq X \setminus \cup_{e\notin \mathcal{A}_p} \mathcal{C}_e =\cup_{e\in \mathcal{A}_p}\mathcal{C}_e$.
From Lemma~\ref{neighbor_lemma}, we know that it is sufficient to only consider the set $\cup_{e\in \mathcal{A}_p}\mathcal{C}_e$  for determining that whether $p$ is a core point.
This is particularly useful to reduce the neighborhood query complexity for DBSCAN.
\begin{lemma}
  \label{Ap_num}
  For each point $p\in X$, we have $|\mathcal{A}_p| = O\big{(}(\frac{\epsilon}{\bar{r}})^D + z\big{)}$.
\end{lemma}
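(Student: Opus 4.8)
The plan is to partition $\mathcal{A}_p$ into its inlier part $\mathcal{A}_p\cap X_{\mathtt{in}}$ and its outlier part $\mathcal{A}_p\cap X_{\mathtt{out}}$ and bound the two parts separately. The outlier part is immediate: $\mathcal{A}_p\subseteq E$, so $|\mathcal{A}_p\cap X_{\mathtt{out}}|\le |X_{\mathtt{out}}|=z$. Hence all the real work is in bounding the number of \emph{inlier} centers lying within distance $2\bar r+\epsilon$ of $c_p$, and this will be a pure packing argument.

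For the inlier part I would invoke the two structural facts already available. First, in the proof of Lemma~\ref{kcenter_iteration} it was observed that any two distinct points of $E$ are at distance at least $\bar r$; in particular the points of $\mathcal{A}_p\cap X_{\mathtt{in}}$ are pairwise $\bar r$-separated, so their minimum pairwise distance is $\ge \bar r$. Second, by the definition of $\mathcal{A}_p$ every $e\in\mathcal{A}_p$ satisfies $\mathtt{dis}(e,c_p)\le 2\bar r+\epsilon$, so by the triangle inequality the maximum pairwise distance inside $\mathcal{A}_p\cap X_{\mathtt{in}}$ is at most $2(2\bar r+\epsilon)$. Therefore the aspect ratio of $\mathcal{A}_p\cap X_{\mathtt{in}}$ is at most $\alpha:=\frac{2(2\bar r+\epsilon)}{\bar r}$. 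Since $(X_{\mathtt{in}},\mathtt{dis})$ has doubling dimension $D$ by Assumption~\ref{ass-doubling2}, Proposition~\ref{Ap_lemma} applied to $Y=\mathcal{A}_p\cap X_{\mathtt{in}}$ yields $|\mathcal{A}_p\cap X_{\mathtt{in}}|\le 2^{D\lceil\log\alpha\rceil}$.

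It then remains to simplify $2^{D\lceil\log\alpha\rceil}$ into the stated form. Using $\lceil\log\alpha\rceil\le\log\alpha+1$ gives $2^{D\lceil\log\alpha\rceil}\le 2^D\alpha^D$, and since in the DBSCAN regime of interest $\bar r\le\epsilon$ we have $\alpha=\frac{4\bar r+2\epsilon}{\bar r}\le\frac{6\epsilon}{\bar r}$, hence $2^D\alpha^D\le 12^D(\epsilon/\bar r)^D=O\big((\epsilon/\bar r)^D\big)$ because $D$ is a constant. Combining with the outlier bound gives $|\mathcal{A}_p|=|\mathcal{A}_p\cap X_{\mathtt{in}}|+|\mathcal{A}_p\cap X_{\mathtt{out}}|=O\big((\epsilon/\bar r)^D\big)+z=O\big((\frac{\epsilon}{\bar r})^D+z\big)$, as claimed.

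I do not expect a genuine obstacle: the lemma is essentially a separation/packing estimate and everything it needs has already been set up. The only mildly delicate points are (i) confirming that the $\bar r$-separation of the full center set $E$ really transfers to $\mathcal{A}_p$ — which it does, as $\mathcal{A}_p\subseteq E$ — and (ii) cleaning up the ceiling in the exponent of Proposition~\ref{Ap_lemma}; in the complementary regime $\bar r>\epsilon$ the inlier count is simply an absolute constant depending only on the (constant) doubling dimension, which is again $O\big((\frac{\epsilon}{\bar r})^D+z\big)$ under the usual convention that the $O(\cdot)$ hides factors depending on $D$.
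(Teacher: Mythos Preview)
Your proposal is correct and follows essentially the same route as the paper: split $\mathcal{A}_p$ into inlier and outlier parts, bound the outlier part trivially by $z$, and bound the inlier part via Proposition~\ref{Ap_lemma} using the $\bar r$-separation of $E$ together with the diameter bound $4\bar r+2\epsilon$ from the triangle inequality. Your treatment is in fact slightly more careful than the paper's, since you explicitly handle the ceiling in the exponent and the regime $\bar r>\epsilon$.
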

\begin{proof}
Note that $|\mathcal{A}_p|=|\mathcal{A}_p\cap X_{\mathtt{in}}|+|\mathcal{A}_p\cap X_{\mathtt{out}}|$. Since $z=|X_{\mathtt{out}}|$, we have $|\mathcal{A}_p\cap X_{\mathtt{out}}|\leq z$. So we only need to consider the size $|\mathcal{A}_p\cap X_{\mathtt{in}}|$.

Similar to the proof of Lemma~\ref{kcenter_iteration}, we have $\min_{e_1, e_2 \in E} \mathtt{dis}(e_1,e_2) \ge \bar{r}$. 
For any $p_1, p_2 \in \mathcal{A}_p$, we have
  $\mathtt{dis}(p_1, p_2) \le \mathtt{dis}(p_1, p) + \mathtt{dis}(p, p_2)
      \le 4 \bar{r} + 2\epsilon$, where 
  the first inequality comes from the triangle inequality and the second inequality comes from the definition of $\mathcal{A}_p$. 
  According to Proposition~\ref{Ap_lemma}, we have $|\mathcal{A}_p \cap X_{\mathtt{in}}| \le 2^{D\lceil \log \alpha \rceil}$,
  where
  \begin{equation}
    \label{alpha_forlula}
    \begin{aligned}
      \alpha &= \frac{\max_{e_1,e_2 \in \mathcal{A}_p \cap X_{in}} \mathtt{dis}(e_1,e_2)}{\min_{e_1,e_2 \in \mathcal{A}_p \cap X_{in}} \mathtt{dis}(e_1,e_2)} \\
      &\le \frac{\max_{e_1,e_2 \in \mathcal{A}_p \cap X_{in}} \mathtt{dis}(e_1,e_2)}{\min_{e_1,e_2 \in E} \mathtt{dis}(e_1,e_2)} 
      \le \frac{4 \bar{r} + 2\epsilon}{\bar{r}}.
    \end{aligned}
  \end{equation} 
  So we have 
  $|\mathcal{A}_p \cap X_{\mathtt{in}}| \le 2^{\lceil \log \frac{4 \bar{r} + 2\epsilon}{\bar{r}} \rceil D}
  = O\big{(}(\frac{\epsilon}{\bar{r}})^D\big{)}$. 
  Therefore, $|\mathcal{A}_p| =   O\big{(}(\frac{\epsilon}{\bar{r}})^D + z\big{)}$.
\end{proof}


%

\begin{remark}
\label{rem-ap}
In the proof of Lemma~\ref{Ap_num}, we directly count all the $z$ points of $X_{\mathtt{out}}$ to $\mathcal{A}_p$. Actually, this is overly conservative in practice since the outliers usually are scattered and far away from the inliers. So $|\mathcal{A}_p\cap X_{\mathtt{out}}|$ usually is  much less than $ z$,  and thus  $|\mathcal{A}_p|$ is also less than the theoretical bound $O\big{(}(\frac{\epsilon}{\bar{r}})^D + z\big{)}$.
 \end{remark}

\section{A Faster Exact Metric DBSCAN Algorithm}
\label{exact_dbscan}
From Section~\ref{sec-overviewdbscan}, we know that the DBSCAN algorithm contains three key steps: \textbf{(1)} label all the core points; \textbf{(2)} merge the core points to form the clusters; \textbf{(3)} identify the border points and outliers. In this section, we analyze these three steps separately and show that our proposed radius-guided Gonzalez's algorithm can help us to significantly reduce the overall complexity.  First, we consider the problem under Assumption~\ref{ass-doubling2}, and then show that the algorithm can be simplified if Assumption~\ref{ass-doubling2} is replaced by a stronger assumption that the whole input data (including inliers and outliers) has a low doubling dimension.

\subsection{Our Algorithm under Assumption~\ref{ass-doubling2}}
\label{exact_dbscan-1}
As the pre-processing, we run Algorithm~\ref{exp_kcenter} with $\bar{r} = \epsilon / 2$ on the given instance $X$.
We also obtain the set $\mathcal{A}_p$   for each $p\in X$. 
According to Lemma~\ref{Ap_num} and the assumption that $D$ is a constant, we have 
\begin{equation}
\label{approx_ap}
    |\mathcal{A}_p| = O(z).
\end{equation}

\textbf{Step (1): label the core points.} First, for any point $p\in X$, from the triangle inequality  we know that 
\begin{eqnarray}
\mathcal{C}_{c_p}  \subseteq\mathbb{B}(p, \epsilon ) \cap X, 
\end{eqnarray}
where $\mathcal{C}_{c_p}$ is the cover set of $c_p$ stored in Algorithm~\ref{exp_kcenter} and  the size $|\mathcal{C}_{c_p} |$  can be obtained immediately in $O(1)$ time. So if $|\mathcal{C}_{c_p} | \ge MinPts$, we know $|\mathbb{B}(p, \epsilon  ) \cap X| \ge MinPts$ as well; then we can safely label $p$ as a core point. 
For the  remaining points, we check their local regions based on $\mathcal{A}_p$. Let $p$ be an unlabeled point. According to Lemma~\ref{neighbor_lemma}, we only need to count the size
\begin{equation}
    \Big|\mathbb{B}(p, \epsilon) \cap \big(\cup_{e\in \mathcal{A}_p}\mathcal{C}_e\big) \Big|.\label{for-labelcore-1}
\end{equation}
The point $p$ is a core point if and only if the size is $\geq MinPts$. Thus a natural idea for bounding the complexity of step (1) is to prove an upper bound for the size $|\cup_{e\in \mathcal{A}_p}\mathcal{C}_e|$. Unfortunately, this size can be large in reality ({\em e.g.,} one ${C}_e$ can be very dense and  contain a large number of points). To resolve  this issue, a cute idea here is to consider the amortized query complexity over all the points of $X$; 
we  show that the time for labeling the core points is linear in $n$. 
 \begin{lemma}
 \label{lem-step1}
 The time complexity of Step (1)  is $O(  n z t_\mathtt{dis})$.
 \end{lemma}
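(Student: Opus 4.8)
The plan is to break Step~(1) into the two phases matching the two cases in the algorithm's description, and to control the second phase by an amortized argument rather than a worst‑case per‑point argument. In the first phase we label every point $p$ with $|\mathcal{C}_{c_p}|\ge MinPts$ as a core point; since Algorithm~\ref{exp_kcenter} already stores the cover sets, $|\mathcal{C}_{c_p}|$ is read in $O(1)$ time, and the inclusion $\mathcal{C}_{c_p}\subseteq\mathbb{B}(p,\epsilon)\cap X$ (which holds because $\bar r=\epsilon/2$ and the triangle inequality) makes this labeling correct, so the first phase costs $O(n)$ overall. The remaining points are exactly those with $|\mathcal{C}_{c_p}|<MinPts$; for each such $p$ we decide whether it is a core point by scanning $\bigcup_{e\in\mathcal{A}_p}\mathcal{C}_e$ and counting the points falling in $\mathbb{B}(p,\epsilon)$, which is sufficient by Lemma~\ref{neighbor_lemma}. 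The naive cost of this scan for a single $p$ is $O\!\big(t_{\mathtt{dis}}\sum_{e\in\mathcal{A}_p}|\mathcal{C}_e|\big)$, and here lies the obstacle: some $\mathcal{C}_e$ can be huge, so $\sum_{e\in\mathcal{A}_p}|\mathcal{C}_e|$ may be $\Theta(n)$, and summing this worst case over all $p$ would only give $O(n^2 t_{\mathtt{dis}})$.

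The key step is to exchange the order of summation and exploit that ``remaining'' points sit in small cover sets. Writing the total second‑phase cost as
\[
t_{\mathtt{dis}}\!\!\sum_{p:\,|\mathcal{C}_{c_p}|<MinPts}\ \sum_{e\in\mathcal{A}_p}|\mathcal{C}_e|
\;=\;t_{\mathtt{dis}}\sum_{e\in E}|\mathcal{C}_e|\cdot N_e,\qquad
N_e:=\big|\{p:\ |\mathcal{C}_{c_p}|<MinPts,\ e\in\mathcal{A}_p\}\big|,
\]
I would bound $N_e$ as follows. The condition $e\in\mathcal{A}_p$ means $\mathtt{dis}(e,c_p)\le 2\bar r+\epsilon$, so the centers of the points counted in $N_e$ all lie in $B_e:=\{e'\in E:\mathtt{dis}(e,e')\le 2\bar r+\epsilon\}$; and for a fixed center $e'$, the number of remaining points with $c_p=e'$ is at most $|\mathcal{C}_{e'}|<MinPts$. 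Hence $N_e\le MinPts\cdot|B_e|$.

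Finally I would bound $|B_e|$: since every center is its own nearest center, $B_e$ coincides with $\mathcal{A}_e$ of~(\ref{for-ap}), so Lemma~\ref{Ap_num} together with~(\ref{approx_ap}) gives $|B_e|=O\!\big((\epsilon/\bar r)^D+z\big)=O(z)$ (equivalently: the inlier centers in $B_e$ are pairwise $\ge\bar r$ apart and lie within diameter $\le 4\bar r+2\epsilon$, hence number $2^{O(D)}=O(1)$ by Proposition~\ref{Ap_lemma}, plus at most $z$ outlier centers). Plugging this in, and using $MinPts=O(1)$ (Remark~\ref{rem-minpts}) and $\sum_{e\in E}|\mathcal{C}_e|=n$ since the cover sets partition $X$, the second‑phase cost is $t_{\mathtt{dis}}\cdot O(MinPts\cdot z)\cdot\sum_{e\in E}|\mathcal{C}_e|=O(nz\,t_{\mathtt{dis}})$. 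Adding the $O(n)$ first‑phase cost (and the $O(\sum_p|\mathcal{A}_p|)=O(nz)$ bookkeeping of enumerating the $\mathcal{A}_p$'s, which is absorbed) yields the claimed $O(nz\,t_{\mathtt{dis}})$. The main obstacle is exactly this amortized analysis — recognizing that although $\bigcup_{e\in\mathcal{A}_p}\mathcal{C}_e$ can be large for an individual $p$, the only points that trigger such a scan are those in cover sets of size $<MinPts$, which limits how often each $\mathcal{C}_e$ is re‑scanned.
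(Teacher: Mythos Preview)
Your proposal is correct and follows essentially the same approach as the paper: split into the ``dense'' centers $E_1=\{e:|\mathcal{C}_e|\ge MinPts\}$ and the ``sparse'' centers $E_2=E\setminus E_1$, then control the scan cost over sparse cover sets by exchanging the order of summation and using $|\mathcal{C}_{e'}|<MinPts$ on one side together with $|\mathcal{A}_e|=O(z)$ and $\sum_e|\mathcal{C}_e|=n$ on the other. Your explicit introduction of $N_e$ and the identification $B_e=\mathcal{A}_e$ is just a repackaging of the paper's swap $\sum_{e\in E_2}\sum_{e'\in\mathcal{A}_e}\leftrightarrow\sum_{e'\in E}\sum_{e\in\mathcal{A}_{e'}\cap E_2}$ via the symmetry $e'\in\mathcal{A}_e\iff e\in\mathcal{A}_{e'}$.
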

\begin{proof}


  Define $E_1=\{e\in E\mid |\mathcal{C}_e|\ge MinPts\}$ and $E_2=E\setminus E_1$, where $E$ is the set returned by Algorithm~\ref{exp_kcenter}.
  According to the previous analysis, the time complexity of step (1) is 
  \begin{equation}
  \label{step2_com}
    \begin{aligned}
     &\sum_{e\in E_1}O(|\mathcal{C}_e|) + \sum_{e\in E_2}\sum_{{e'}\in \mathcal{A}_e}O({|\mathcal{C}_e||\mathcal{C}_{e'}|}\cdot t_\mathtt{dis}) \\
      &=O(n) +O(MinPts) \cdot \sum_{e\in E_2}\sum_{e'\in \mathcal{A}_e}O(|\mathcal{C}_{e'}|\cdot t_\mathtt{dis}),
      \end{aligned}
    \end{equation}
    where the second equality comes from the facts $\sum_{e\in E_1}O(|\mathcal{C}_e|)\leq n$ and $|\mathcal{C}_e|<MinPts$ for any $e\in E_2$.
    We focus on the  term ``$ \sum_{e\in E_2}\sum_{e'\in \mathcal{A}_e}O(|\mathcal{C}_{e'}|\cdot t_\mathtt{dis})$'' of (\ref{step2_com}). {A key observation is that $e'\in \mathcal{A}_{e}$ if and only if  $e\in \mathcal{A}_{e'}$.} 
    Using this property, 
     we can exchange the order of summation for $e$ and $e'$, {\em i.e.,}
    \begin{equation}
    \label{exchange}
        \begin{aligned}
      \sum_{e\in E_2}\sum_{e'\in \mathcal{A}_e}O(|\mathcal{C}_{e'}|\cdot t_\mathtt{dis})
      &=    \sum_{e'\in E_1 \cup E_2}\sum_{{e\in \mathcal{A}_{e'}\cap E_2}}O(|\mathcal{C}_{e'}|\cdot t_\mathtt{dis}).  
            \end{aligned}
    \end{equation}
    Note that $O(|\mathcal{C}_{e'}|\cdot t_\mathtt{dis})$ is independent with $e$, so we have
    \begin{equation}
     \label{exchange-2}
        \begin{aligned}
      (\ref{exchange}) &=  \sum_{e'\in E_1 \cup E_2}O(|\mathcal{C}_{e'}|\cdot t_\mathtt{dis}) \cdot O(|\mathcal{A}_{e'}|).\\
      &= O(n\cdot t_\mathtt{dis}) \cdot O(z) ,
      \end{aligned}
    \end{equation}
   where the last equality comes from $\sum_{e'\in E_1 \cup E_2}O(|\mathcal{C}_{e'}|) = O(n)$ and $|\mathcal{A}_{e'}|=O(z)$ in (\ref{approx_ap}). Therefore, through combining (\ref{step2_com}) and (\ref{exchange-2}), we have the total complexity of Step (1)
    \begin{equation}
        \begin{aligned}
        &= O(n) + O(MinPts) \cdot O(n\cdot t_\mathtt{dis}) \cdot O(z) \\
      &= O(MinPts \cdot n \cdot z\cdot t_\mathtt{dis}) \\
      &= O(nzt_{\mathtt{dis}}),
    \end{aligned}
  \end{equation} 
 { if assuming $MinPts$ is a constant as discussed in Remark \ref{rem-minpts}. }
\end{proof}
 
\textbf{Step (2): merge the core points to form the DBSCAN clusters.} We consider the set $E$ obtained in Algorithm~\ref{exp_kcenter}. For each $e\in E$, denote by $\tilde{\mathcal{C}}_e$   the set of core points in $ \mathcal{C}_e$ which are labeled in Step (1) (if $\tilde{\mathcal{C}}_e=\emptyset$, we just simply ignore $ \mathcal{C}_e$). 
First, all the core points of $\tilde{\mathcal{C}}_e$ should be merged into the same cluster; then through the triangle inequality, we only need to consider the core points of $\tilde{\mathcal{C}}_{e'}$ with $e' \in \mathcal{A}_{e}$. The core points of $\tilde{\mathcal{C}}_e$ and $\tilde{\mathcal{C}}_{e'}$ should be merged into the same cluster, if and only if their smallest pairwise distance $\mathtt{dis}(\tilde{\mathcal{C}}_{e}, \tilde{\mathcal{C}}_{e'})=\min_{p\in \tilde{\mathcal{C}}_{e},q\in \tilde{\mathcal{C}}_{e'}}\mathtt{dis}(p, q) \le \epsilon$. Please see Figure~\ref{merge-core-fig} for an illustration. 

\vspace{-5pt}
\begin{figure}[H]
    \centering
    \includegraphics[width=0.45\textwidth]{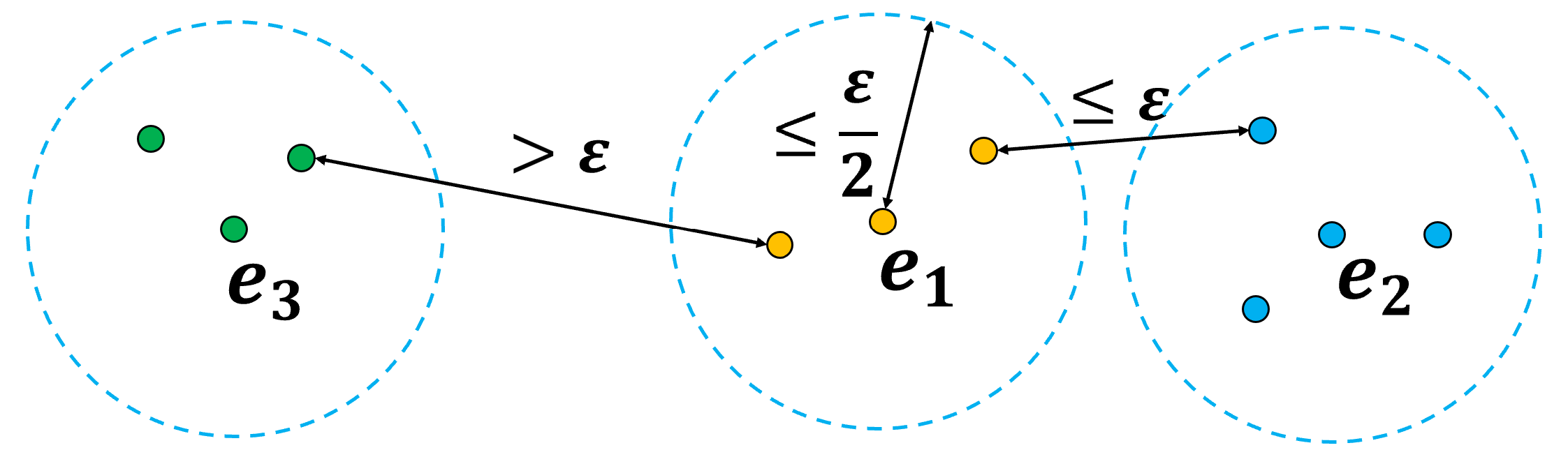}
    \vspace{-5pt}
    \caption{  {The sets $\tilde{C}_{e_1}$ (yellow points),  $\tilde{C}_{e_2}$ (blue points), and $\tilde{C}_{e_3}$ (green points) are shown in the figure. $\tilde{C}_{e_1}$ and $\tilde{C}_{e_2}$ should be merged into the same cluster, because their closest pair distance is less than $\epsilon$; on the other hand, $\tilde{C}_{e_3}$ should be merged to a different cluster}}
    \label{merge-core-fig}
\end{figure}
\vspace{-8pt}

Actually this is the  classical {\em bichromatic closest pair (BCP)} problem in computational geometry~\citep{agarwal1990euclidean}. Directly solving the BCP problem in a general metric space or a high-dimensional space can be quite challenging, {\em e.g.,} recently \citet{rubinstein2018hardness} showed that it requires at least nearly quadratic time to compute even only an approximate solution. A good news is that we have already distinguished the core points from other points in Step (1); that is, we can apply the indexing structure cover tree (Section~\ref{sec-ct}) to solve the BCP problem efficiently under Assumption~\ref{ass-doubling2}.  
Specifically, we build a cover tee for $\tilde{\mathcal{C}}_{e}$, and traverse all the elements in $\tilde{\mathcal{C}}_{e'}$  and search for their nearest neighbors in $\tilde{\mathcal{C}}_{e}$; we return the smallest distance among all these nearest neighbor queries over $\tilde{\mathcal{C}}_{e'}$ as the solution of the BCP problem for the couple $(\tilde{\mathcal{C}}_{e}, \tilde{\mathcal{C}}_{e'})$. 


\begin{lemma} 
\label{lem-step2}
     The time complexity of Step (2) 
      is $O(nz \log \frac{\epsilon}{\delta} \cdot t_\mathtt{dis})$,  where $\delta$ is the smallest pairwise distance of $X_{\mathtt{in}}$.
\end{lemma}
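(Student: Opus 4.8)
The plan is to bound the total cost of Step (2) by charging the work to pairs of cover sets $(\tilde{\mathcal{C}}_e, \tilde{\mathcal{C}}_{e'})$ with $e' \in \mathcal{A}_e$, and then exploiting the same exchange-of-summation trick used in Lemma~\ref{lem-step1}. First I would account for the preprocessing: for each $e \in E$ with $\tilde{\mathcal{C}}_e \neq \emptyset$ we build one cover tree on $\tilde{\mathcal{C}}_e$. By Claim~\ref{cover_insert} this costs $O(2^{O(D)} |\tilde{\mathcal{C}}_e| \log \Phi_{\tilde{\mathcal{C}}_e} \cdot t_{\mathtt{dis}})$, where $\Phi_{\tilde{\mathcal{C}}_e}$ is the aspect ratio of $\tilde{\mathcal{C}}_e$. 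Since $\tilde{\mathcal{C}}_e \subseteq \mathcal{C}_e \subseteq \mathbb{B}(e,\bar r) \cap X_{\mathtt{in}}$ (core points are inliers), the maximum pairwise distance inside $\tilde{\mathcal{C}}_e$ is at most $2\bar r = \epsilon$, and the minimum pairwise distance is at least $\delta$, so $\Phi_{\tilde{\mathcal{C}}_e} \le \epsilon/\delta$. With $D$ constant this is $O(|\tilde{\mathcal{C}}_e| \log\frac{\epsilon}{\delta} \cdot t_{\mathtt{dis}})$, and summing over $e$ gives $O(n \log\frac{\epsilon}{\delta}\cdot t_{\mathtt{dis}})$ for all cover-tree constructions — already within the claimed bound.

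Next I would handle the query cost. For a fixed pair $(e, e')$ with $e' \in \mathcal{A}_e$, we run $|\tilde{\mathcal{C}}_{e'}|$ nearest-neighbor queries into the cover tree of $\tilde{\mathcal{C}}_e$; by Claim~\ref{cover_insert} each query costs $O(2^{O(D)} \log \Phi_{\tilde{\mathcal{C}}_e} \cdot t_{\mathtt{dis}}) = O(\log\frac{\epsilon}{\delta}\cdot t_{\mathtt{dis}})$. Hence the total query cost is
\begin{equation*}
\sum_{e \in E}\ \sum_{e' \in \mathcal{A}_e} O\!\left(|\tilde{\mathcal{C}}_{e'}| \log\tfrac{\epsilon}{\delta} \cdot t_{\mathtt{dis}}\right).
\end{equation*}
Now I invoke the symmetry $e' \in \mathcal{A}_e \iff e \in \mathcal{A}_{e'}$ (from the definition in~(\ref{for-ap})) to swap the summation order, writing the sum as $\sum_{e'\in E} |\tilde{\mathcal{C}}_{e'}| \cdot |\mathcal{A}_{e'}| \cdot O(\log\frac{\epsilon}{\delta}\cdot t_{\mathtt{dis}})$. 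Since $\sum_{e'\in E}|\tilde{\mathcal{C}}_{e'}| \le n$ and, by~(\ref{approx_ap}), $|\mathcal{A}_{e'}| = O(z)$, this telescopes to $O(nz\log\frac{\epsilon}{\delta}\cdot t_{\mathtt{dis}})$. Adding the construction cost $O(n\log\frac{\epsilon}{\delta}\cdot t_{\mathtt{dis}})$ and the $O(n)$ overhead for the union-find merging of the clusters (each BCP result triggers at most one merge, and there are $O(nz)$ pairs but only $O(n)$ cover sets, so union-find work is subsumed) yields the stated bound $O(nz\log\frac{\epsilon}{\delta}\cdot t_{\mathtt{dis}})$.

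I expect the main subtlety — rather than a deep obstacle — to be the bookkeeping of the aspect ratio $\Phi_{\tilde{\mathcal{C}}_e}$: one must argue carefully that both numerator ($\le \epsilon$, from $\tilde{\mathcal{C}}_e \subseteq \mathbb{B}(e,\epsilon/2)$) and denominator ($\ge \delta$, using that core points lie in $X_{\mathtt{in}}$ so their pairwise distances are lower-bounded by $\delta = \min_{y,y'\in X_{\mathtt{in}}}\mathtt{dis}(y,y')$) are controlled, so that the $\log\Phi$ factor in Claim~\ref{cover_insert} is uniformly $O(\log\frac{\epsilon}{\delta})$ across all the local cover trees. A second point requiring a line of justification is that we only need to examine pairs $(e,e')$ with $e'\in\mathcal{A}_e$: this follows from Lemma~\ref{neighbor_lemma}, since if $\mathtt{dis}(c_p,c_q) > 2\bar r + \epsilon$ then no point of $\mathcal{C}_{c_p}$ is within distance $\epsilon$ of any point of $\mathcal{C}_{c_q}$, so such cover-set pairs never trigger a merge and can be skipped.
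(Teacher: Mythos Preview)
Your proposal is correct and follows essentially the same approach as the paper: bound the aspect ratio of each $\tilde{\mathcal{C}}_e$ by $O(\epsilon/\delta)$ to control the cover-tree construction and query costs, then sum over all pairs $(e,e')$ with $e'\in\mathcal{A}_e$ and use $|\mathcal{A}_e|=O(z)$ together with $\sum_e|\tilde{\mathcal{C}}_e|\le n$. The only cosmetic difference is that the paper indexes the per-pair query cost as $|\tilde{\mathcal{C}}_e|\cdot O(\log\Phi_{e'}\cdot t_{\mathtt{dis}})$ so the inner sum over $e'\in\mathcal{A}_e$ collapses directly to $|\mathcal{A}_e|$ without invoking the $e'\in\mathcal{A}_e\iff e\in\mathcal{A}_{e'}$ symmetry you use; both routes land on the same expression $\sum_e |\tilde{\mathcal{C}}_e|\,|\mathcal{A}_e|\,O(\log\tfrac{\epsilon}{\delta}\cdot t_{\mathtt{dis}})$.
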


\begin{proof}

    First, we need to build a cover tree for each $\tilde{\mathcal{C}}_e$, $e \in E$. According to Claim \ref{cover_insert}, we know that it takes $O(|\tilde{\mathcal{C}}_e| \log \Phi_{e} \cdot t_\mathtt{dis})$ time to build a cover tree on $\tilde{\mathcal{C}}_e$, where $\Phi_e$ is the aspect ratio of $\tilde{\mathcal{C}}_e$. Obviously, $\Phi_e \le \frac{\epsilon}{2\delta}$.  Therefore, the total time complexity  to build the cover trees for all $\tilde{\mathcal{C}}_e$s is:
    \begin{equation}
        \begin{aligned}
            \sum_{e\in E}  O(|\tilde{\mathcal{C}}_e|\log \Phi_e t_\mathtt{dis})= O( n \log \frac{\epsilon}{\delta} \cdot t_\mathtt{dis}).\label{for-time-step2-1}
        \end{aligned}
    \end{equation}
    Next, for each $\tilde{\mathcal{C}}_e$ and its each neighbor  $e'\in \mathcal{A}_e$, we solve the BCP problem of 
     $\tilde{\mathcal{C}}_e$ and $\tilde{\mathcal{C}}_{e'}$.  
    According to Claim \ref{cover_insert}, the overall time complexity of this procedure is 
    \begin{equation}
        \begin{aligned}
            &\sum_{e\in E} \sum_{e'\in \mathcal{A}_e} |\tilde{\mathcal{C}}_e| O(\log \Phi_{e'} \cdot t_\mathtt{dis})\\
            &= \sum_{e\in E} |\tilde{\mathcal{C}}_e| |\mathcal{A}_e| O(\log \frac{\epsilon}{\delta} \cdot t_\mathtt{dis}) \\
            &= O( nz \log \frac{\epsilon}{\delta} \cdot t_\mathtt{dis}).\label{for-time-step2-2}        \end{aligned}
    \end{equation}
    It is easy to see    the complexity (\ref{for-time-step2-2}) dominates the complexity (\ref{for-time-step2-1}). So we obtain the complexity of Step (2) that is $O( nz \log \frac{\epsilon}{\delta} \cdot t_\mathtt{dis})$. 
\end{proof}

\textbf{Step (3): identify the border points and outliers.} We can apply the similar idea used in Step (1). 
For a non-core point $p$, just find its nearest core point $c$ in  $\cup_{e\in\mathcal{A}_p}\tilde{C}_e$.  
If $\mathtt{dis}(p,c)\le \epsilon$, 
we put $p$ to the same cluster as $c$ and label it as a border point; otherwise, label $p$ as an outlier.
\begin{lemma}
\label{lem-step3}
The time complexity of Step (3)   is  $O( n z t_\mathtt{dis})$.
\end{lemma}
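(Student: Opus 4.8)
The plan is to reuse, almost verbatim, the amortized charging argument from the proof of Lemma~\ref{lem-step1}. First I would settle correctness: since Step~(3) processes a non-core point $p$ by scanning $\cup_{e\in\mathcal{A}_p}\tilde{\mathcal{C}}_e$ for its closest core point $c$, Lemma~\ref{neighbor_lemma} guarantees $X\cap\mathbb{B}(p,\epsilon)\subseteq\cup_{e\in\mathcal{A}_p}\mathcal{C}_e$, so every core point within distance $\epsilon$ of $p$ already lies in $\cup_{e\in\mathcal{A}_p}\tilde{\mathcal{C}}_e$; hence the test ``$\mathtt{dis}(p,c)\le\epsilon$'' correctly decides whether $p$ is a border point (assigned to the cluster of $c$) or an outlier. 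The work spent on $p$ is $O\big(\sum_{e'\in\mathcal{A}_p}|\tilde{\mathcal{C}}_{e'}|\cdot t_{\mathtt{dis}}\big)$ plus $O(1)$ bookkeeping, and the $O(1)$ terms sum to $O(n)$, which is negligible.

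Next I would isolate a structural fact that confines the non-core points. Because Algorithm~\ref{exp_kcenter} was run with $\bar r=\epsilon/2$, the triangle inequality gives $\mathcal{C}_{c_p}\subseteq\mathbb{B}(p,\epsilon)\cap X$ for every $p\in X$; therefore if $|\mathcal{C}_{c_p}|\ge MinPts$ then $p$ is already a core point. Contrapositively, every non-core point $p$ satisfies $c_p\in E_2=\{e\in E:|\mathcal{C}_e|<MinPts\}$, and moreover $\mathcal{A}_p=\mathcal{A}_{c_p}$ depends only on $c_p$. Grouping the non-core points by their centers and using $|\{p\ \text{non-core}:c_p=e\}|\le|\mathcal{C}_e|<MinPts$ for each $e\in E_2$, the total running time of Step~(3) is bounded by
\[
\sum_{e\in E_2}|\mathcal{C}_e|\sum_{e'\in\mathcal{A}_e}O\big(|\tilde{\mathcal{C}}_{e'}|\cdot t_{\mathtt{dis}}\big)\ \le\ O(MinPts)\sum_{e\in E_2}\sum_{e'\in\mathcal{A}_e}O\big(|\tilde{\mathcal{C}}_{e'}|\cdot t_{\mathtt{dis}}\big).
\]

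Then I would finish exactly as in Lemma~\ref{lem-step1}: invoke the symmetry $e'\in\mathcal{A}_e\Leftrightarrow e\in\mathcal{A}_{e'}$ to exchange the order of summation, obtaining $O(MinPts)\sum_{e'\in E}O\big(|\tilde{\mathcal{C}}_{e'}|\cdot t_{\mathtt{dis}}\big)\cdot O(|\mathcal{A}_{e'}|)$, and bound this using $\sum_{e'\in E}|\tilde{\mathcal{C}}_{e'}|\le n$, $|\mathcal{A}_{e'}|=O(z)$ from~(\ref{approx_ap}), and $MinPts=O(1)$ (Remark~\ref{rem-minpts}); this yields $O(nz\,t_{\mathtt{dis}})$. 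The main obstacle is the same one that arose in Step~(1): a single cover set $\mathcal{C}_{e'}$ (hence $\tilde{\mathcal{C}}_{e'}$) can hold a constant fraction of all points, so $|\cup_{e'\in\mathcal{A}_p}\tilde{\mathcal{C}}_{e'}|$ may be $\Theta(n)$ and no per-point bound can work; everything rests on the amortization over all non-core points together with the summation-exchange trick. This is also why I would deliberately keep a brute-force scan of $\cup_{e'\in\mathcal{A}_p}\tilde{\mathcal{C}}_{e'}$ here rather than a cover-tree nearest-neighbor query, since the latter would add an $O(\log(\epsilon/\delta))$ factor and spoil the clean $O(nz\,t_{\mathtt{dis}})$ bound.
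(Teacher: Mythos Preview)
Your proposal is correct and follows essentially the same route as the paper: you confine all non-core points to cells $\mathcal{C}_e$ with $e\in E_2$ (since $\mathcal{C}_{c_p}\subseteq\mathbb{B}(p,\epsilon)$ forces $|\mathcal{C}_{c_p}|<MinPts$), bound the total work by the same double sum $\sum_{e\in E_2}\sum_{e'\in\mathcal{A}_e}O(|\mathcal{C}_e||\mathcal{C}_{e'}|\,t_{\mathtt{dis}})$ that appears as the second term of~(\ref{step2_com}), and then invoke the amortized summation-exchange argument of Lemma~\ref{lem-step1} verbatim. Your write-up is in fact a bit more explicit than the paper's (which simply points back to~(\ref{step2_com})), but the ideas are identical.
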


\begin{proof}
Similar to the proof of lemma \ref{lem-step1}, we define $E_1=\{e\in E\mid |\mathcal{C}_e|\ge MinPts\}$ and $E_2=E\setminus E_1$. 
  According to the definitions of border point and outlier, we know that 
  a non-core point cannot  locate inside some $\mathcal{C}_e$ with $e \in E_1$. 
  Therefore, the time complexity of Step (3) is
  \begin{equation}
     \sum_{e\in E_2}\sum_{{e'}\in \mathcal{A}_e}O(|\mathcal{C}_e||\mathcal{C}_{e'}|\cdot t_\mathtt{dis}),
  \end{equation}
  which is as same as the second term of equation (\ref{step2_com}).
  So by using the similar idea of Lemma~\ref{lem-step1}, we know  the time complexity of step (3) is also $O(nzt_\mathtt{dis})$.
\end{proof}


\textbf{The overall time complexity.} From the above Lemma~\ref{lem-step1}, Lemma~\ref{lem-step2}, and Lemma~\ref{lem-step3}, together with the complexity of Algorithm~\ref{exp_kcenter}, we know 
the overall time complexity for solving the DBSCAN problem under Assumption~\ref{ass-doubling2} is
$O\bigg{(}n\big{(}(\frac{\Delta}{\epsilon})^D + z\log \frac{\epsilon}{\delta}\big{)}t_\mathtt{dis}\bigg{)}$ which is linear in the input size $n$. Also note that usually $z\ll n$.  For example, in most of the test instances in our experiments and the previous articles ({{\em e.g.,}~\cite{gan2015dbscan}\cite{jang2019dbscan++}\cite{schubert2017dbscan}}), $z$ is less than $1\%$ $n$. 

\begin{remark} [\textbf{the setting for $\bar{r}$}]
 \label{rem-r}
 In our proposed method, we set $\bar{r} = \epsilon/2$ for simplicity when running Algorithm~\ref{exp_kcenter}. Actually, we can prove that any $\bar{r} \leq  \epsilon/2$ works for our algorithm. We also would like to emphasize that the pre-processing step ({\em i.e.,} Algorithm~\ref{exp_kcenter}) only needs to be performed once, if we have an estimated lower bound $\epsilon_0$ for $\epsilon$ and set $\bar{r} = \epsilon_0/2$. { When we magnify the parameter $\epsilon$ or adjust the parameter $MinPts$}, we only need to re-compute the $\mathcal{A}_p$ sets and run the above Step (1)-(3)   without running Algorithm~\ref{exp_kcenter} again. 
 So our method is quite efficient for parameter tuning, which is a significant advantage for practical implementation. { We also discuss this advantage in Section~\ref{prop_exp}.}

 \end{remark}

\subsection{When Outliers Also Have Low Doubling Dimension}
\label{covertree_dbscan}
In this section, we consider the scenario that the whole input data $X$ has a low doubling dimension $D$, which is a more stringent version of Assumption~\ref{ass-doubling2}. We show that the algorithm of the exact   DBSCAN  in Section~\ref{exact_dbscan-1}   can be further simplified.

\textbf{Our main idea.} With a slightly abuse of notations, we still use $\Delta$ and $\delta$ to denote the maximum and minimum pairwise distances of $X$, respectively. So the aspect ratio of $X$ is $\Phi=\Delta/\delta$. The main idea is to build a cover tree $T$ for the whole input data $X$ since we assume it has a low doubling dimension $D$.
Then, we can directly obtain the ball center set $E$ without running Algorithm~\ref{exp_kcenter}. Let $i_0=\lfloor \log \epsilon/2 \rfloor$. The set of nodes at the $i_0$-th level of $T$, which is denoted as $T_{i_0}$, is an $\bar{r}$-net with {$\bar{r} = 2^{\lfloor \log \epsilon/2 \rfloor}$ for $X$}. So we can simply let $E=T_{i_0}$. For each $p\in X$, we also define the neighbor ball center set $\mathcal{A}_p$ as (\ref{for-ap}). A key point here is that we can prove a much lower bound for $|\mathcal{A}_p|$.



{
\begin{lemma}
    If $(X,\mathtt{dis})$ is a metric space with constant doubling dimension $D$, then for any $p\in X$, we have $|\mathcal{A}_p| = O(1)$.
    \label{ap_const}
\end{lemma}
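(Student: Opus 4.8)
The plan is to mimic the proof of Lemma~\ref{Ap_num}, but to exploit two features of the present setting that were not available there: (i) the center set $E=T_{i_0}$ is a genuine $\bar r$-net of the \emph{entire} data set $X$, and (ii) the radius $\bar r = 2^{\lfloor \log \epsilon/2 \rfloor}$ is within a constant factor of $\epsilon$. Feature (i) removes the additive ``$+z$'' term of Lemma~\ref{Ap_num} (there we only had a doubling-dimension bound on $X_{\mathtt{in}}$ and had to charge each outlier of $\mathcal{A}_p$ separately), and feature (ii) collapses the factor $(\epsilon/\bar r)^D$ to a constant.

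First I would record the elementary numeric fact that $\epsilon/4 < \bar r \le \epsilon/2$, which follows directly from $i_0=\lfloor \log \epsilon/2\rfloor$ and $\bar r = 2^{i_0}$; in particular $\epsilon < 4\bar r$. Next, since $T_{i_0}$ is by definition a $2^{i_0}$-net of $T_{i_0-1}$ (hence of $X$) and $2^{i_0}=\bar r$, the $r$-packing property of an $r$-net gives $\mathtt{dis}(e_1,e_2)\ge \bar r$ for any two distinct $e_1,e_2\in E$. On the other hand, every $e\in\mathcal{A}_p$ satisfies $\mathtt{dis}(e,c_p)\le 2\bar r+\epsilon$ by the definition~(\ref{for-ap}), so the triangle inequality yields $\mathtt{dis}(e_1,e_2)\le 4\bar r+2\epsilon < 12\bar r$ for all $e_1,e_2\in\mathcal{A}_p$.

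Combining these two bounds, the aspect ratio of $\mathcal{A}_p$ is at most $12\bar r/\bar r = 12$ (the case $|\mathcal{A}_p|\le 1$ being trivial). Since $\mathcal{A}_p\subseteq X$ and $(X,\mathtt{dis})$ has doubling dimension $D$, Proposition~\ref{Ap_lemma} applies \emph{to all of $\mathcal{A}_p$ at once} and gives $|\mathcal{A}_p|\le 2^{D\lceil \log 12\rceil}=2^{4D}=O(1)$, which is the claim.

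There is no genuinely hard step here; the only points requiring care are the bookkeeping that makes the ``$+z$'' disappear — namely that Proposition~\ref{Ap_lemma} may now be invoked on the whole set $\mathcal{A}_p$ because the doubling-dimension hypothesis covers $X$ in its entirety rather than merely $X_{\mathtt{in}}$ — and the trivial edge case where $\mathcal{A}_p$ consists of a single center, for which the aspect ratio is undefined but $|\mathcal{A}_p|=1=O(1)$ holds regardless.
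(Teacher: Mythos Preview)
Your proof is correct and follows essentially the same approach as the paper: both use the $\bar r$-packing of $E$ for the lower bound on pairwise distances in $\mathcal{A}_p$, the triangle inequality on the definition of $\mathcal{A}_p$ for the upper bound, and then Proposition~\ref{Ap_lemma} applied to all of $\mathcal{A}_p\subseteq X$ together with $\bar r\ge \epsilon/4$ to conclude. Your version is in fact slightly more careful about the explicit constant and the trivial singleton case, but there is no substantive difference.
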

\begin{proof}
    Since $E$ is an $\bar{r}-net$, then $\min_{e_1, e_2 \in E} \mathtt{dis}(e_1,e_2) \ge \bar{r}$. 
For any $p_1, p_2 \in \mathcal{A}_p$, we have
  $\mathtt{dis}(p_1, p_2) \le \mathtt{dis}(p_1, p) + \mathtt{dis}(p, p_2)
      \le 4 \bar{r} + 2\epsilon$, where 
  the first inequality comes from the triangle inequality and the second inequality comes from the definition of $\mathcal{A}_p$. 
  According to Proposition~\ref{Ap_lemma} and using the same manner of the equality (\ref{alpha_forlula}), we have $|\mathcal{A}_p | \le 2^{D\lceil \log \alpha \rceil}$,
  where $\alpha \le \frac{4 \bar{r} + 2\epsilon}{r}$.
    So we have 
  $|\mathcal{A}_p| \le 2^{\lceil \log \frac{4 \bar{r} + 2\epsilon}{\bar{r}} \rceil D}
  = O\big{(}(\frac{\epsilon}{\bar{r}})^D\big{)}$. 
  By substituting $\bar{r} = 2^{\lfloor \log \epsilon/2 \rfloor} \ge 2^{ \log \epsilon/2 - 1} = \epsilon / 4$, we   immediately have $|\mathcal{A}_p| = O(4^D) = O(1)$.
\end{proof}
}

Then we can prove the time complexity of the exact DBSCAN algorithm with the new $O(1)$ bound of $|\mathcal{A}_p|$. In particular, we replace the previous $O(z)$ bound by $O(1)$ in Lemma~\ref{lem-step1}, Lemma~\ref{lem-step2}, and Lemma~\ref{lem-step3}, respectively. Finally, we obtain the following theorem.



\begin{theorem}
    If the metric space $(X,\mathtt{dis})$ has constant doubling dimension, our exact DBSCAN algorithm can be completed in $O(n \log \Phi $ $t_{\mathtt{dis}})$ time.
\end{theorem}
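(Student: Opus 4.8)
The plan is to bound the cost of each of the three DBSCAN steps plus the preprocessing, using the new bound $|\mathcal{A}_p|=O(1)$ from Lemma~\ref{ap_const} in place of the $O(z)$ bound that was used in Section~\ref{exact_dbscan-1}. First I would account for the preprocessing: instead of running Algorithm~\ref{exp_kcenter}, we build a single cover tree $T$ for the whole data set $X$, which by Claim~\ref{cover_insert} takes $O(2^{O(D)}n\log\Phi\cdot t_{\mathtt{dis}})=O(n\log\Phi\cdot t_{\mathtt{dis}})$ time since $D$ is constant. We then read off $E=T_{i_0}$ with $i_0=\lfloor\log\epsilon/2\rfloor$, so $E$ is an $\bar r$-net of $X$ with $\bar r=2^{\lfloor\log\epsilon/2\rfloor}\ge\epsilon/4$; the cover sets $\mathcal{C}_e$ and the neighbor sets $\mathcal{A}_p$ can be extracted from $T$ without extra asymptotic cost (the $\mathcal{A}_p$ sets only require, for each pair $e,e'\in E$ with $\mathtt{dis}(e,e')\le 2\bar r+\epsilon$, recording the adjacency, and by Lemma~\ref{ap_const} each node has only $O(1)$ such neighbors).

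Next I would revisit the three step-lemmas. For Step~(1), the argument of Lemma~\ref{lem-step1} goes through verbatim with $|\mathcal{A}_{e'}|=O(1)$ substituted for $O(z)$ in equation~(\ref{exchange-2}): the exchange-of-summation bound becomes $\sum_{e'\in E_1\cup E_2}O(|\mathcal{C}_{e'}|\cdot t_{\mathtt{dis}})\cdot O(|\mathcal{A}_{e'}|)=O(n\cdot t_{\mathtt{dis}})$, so Step~(1) costs $O(n\,t_{\mathtt{dis}})$. For Step~(2), Lemma~\ref{lem-step2}'s analysis gives cover-tree construction in $O(n\log\frac{\epsilon}{\delta}\cdot t_{\mathtt{dis}})$ and BCP queries in $\sum_{e\in E}|\tilde{\mathcal{C}}_e||\mathcal{A}_e|\,O(\log\Phi_{e'}\cdot t_{\mathtt{dis}})=O(n\log\frac{\epsilon}{\delta}\cdot t_{\mathtt{dis}})$, again using $|\mathcal{A}_e|=O(1)$; since $\epsilon/\delta\le\Delta/\delta=\Phi$, this is $O(n\log\Phi\cdot t_{\mathtt{dis}})$. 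For Step~(3), Lemma~\ref{lem-step3} reduces to the second term of~(\ref{step2_com}) and hence, by the same amortization with $|\mathcal{A}_e|=O(1)$, costs $O(n\,t_{\mathtt{dis}})$. Summing the preprocessing and the three steps yields $O(n\log\Phi\cdot t_{\mathtt{dis}})$.

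The only genuinely new ingredient relative to Section~\ref{exact_dbscan-1} is Lemma~\ref{ap_const} (already proved) and the observation that the cover tree $T$ on all of $X$ simultaneously supplies the net $E$, the cover sets, and the adjacency needed for the $\mathcal{A}_p$ sets, so no separate Gonzalez pass is needed. I expect the main obstacle — really just a bookkeeping subtlety rather than a deep difficulty — to be verifying that extracting $\mathcal{C}_e$ and building the $\mathcal{A}_p$ adjacency from $T$ genuinely costs no more than $O(n\log\Phi\cdot t_{\mathtt{dis}})$: one must check that $T_{i_0}$ is reached within the claimed level range (using $l_{\mathtt{bottom}}\ge\lfloor\log\delta\rfloor$ and $l_{\mathtt{top}}\le\lceil\log\Delta\rceil$, the level $i_0$ lies in this range as long as $\delta\le\epsilon/2\le\Delta$, which holds whenever DBSCAN is nontrivial), that each point of $X$ can be charged to its ancestor in $T_{i_0}$ to form $\mathcal{C}_e$ in $O(n)$ total, and that the pairwise net-distance lower bound $\mathtt{dis}(e_1,e_2)\ge\bar r$ for $e_1,e_2\in E$ (the $\bar r$-packing property of the $\bar r$-net) is exactly what makes the $O(1)$ neighbor count in Lemma~\ref{ap_const} applicable. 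Once these routine points are checked, the theorem follows by adding up the four contributions.
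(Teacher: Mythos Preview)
Your proposal is correct and follows essentially the same approach as the paper: build a single cover tree on all of $X$ to obtain the net $E=T_{i_0}$, invoke Lemma~\ref{ap_const} to get $|\mathcal{A}_p|=O(1)$, and then rerun the analyses of Lemmas~\ref{lem-step1}--\ref{lem-step3} with $O(1)$ in place of $O(z)$, summing to $O(n\log\Phi\cdot t_{\mathtt{dis}})$. The paper's own proof is in fact terser than yours---it simply states that one substitutes the $O(1)$ bound into the three step-lemmas---so your additional bookkeeping remarks (that $i_0$ lies in the level range, that cover sets and adjacency can be read off $T$ within budget, and that $\epsilon/\delta\le\Phi$) are more careful than what the paper spells out, but entirely in the same spirit.
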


{
\color{black}

}



{\color{black}
\subsection{Remark on The Comparison with \texorpdfstring{\cite{ding2021metric}}{}}
\label{remark33}    
As mentioned in Section~\ref{sec-intro}, \citet{ding2021metric} proposed a randomized k-center clustering based DBSCAN algorithm. 
Here, we elaborate on the differences between our proposed method and their algorithm. 


The first major difference comes from the way for pre-processing input data. In \cite{ding2021metric}, they use a k-center with outliers algorithm to partition the data into $k$ balls. But their algorithm needs to set the parameter ``$\tilde{z}$'' as an estimated upper bound  for the number of outliers, and the termination condition has to be manually set in practice. 
Thus a drawback of this method is that any improper parameter setting ({\em e.g.}, underestimate the number of outliers) may lead to high computational complexity or even returning an incorrect DBSCAN result. 
In contrast, our pre-processing method, the Radius-guided Gonzalez algorithm proposed in Section~\ref{gonzalez}, does not need to set those parameters and only 
constructs an $\epsilon/2$ -net in the data space. 
Moreover, our method is a deterministic approach while the method of \cite{ding2021metric} is based on a ``randomized'' k-center with outliers algorithm that always has a failure probability.

Secondly, we have fundamentally different ways for constructing the DBSCAN clusters. Recall the three steps of DBSCAN as described in Section~\ref{exact_dbscan-1}. 
Comparing with the original DBSCAN algorithm~\cite{ester1996density}, the only improvement proposed by \cite{ding2021metric} is  using the aforementioned randomized k-center with outliers algorithm to reduce the search range when performing brute force search for $\epsilon$-neighborhoods of each point. This is a heuristic improvement for Step (1), and its time complexity remains $O(n^2)$ in the worst case. 
In contrast, our proposed algorithm attempts to establish an $\epsilon/2$-net data structure and convert the DBSCAN problem to a connectivity problem among the $\epsilon/2$-net spheres. Specifically, in Step (1), we divide the $\epsilon/2$-net spheres into two categories: dense and sparse spheres {\color{black}(corresponding to $E_1$ and $E_2$ in the proof of Lemma \ref{lem-step1})}. Each point in dense spheres can be immediately marked as a core point, and then the complexity of Step (1) is reduced to be linear. In Step (2), instead of considering the connection between individual core points, we consider the connection between $\epsilon/2$-net spheres. Together with the cover tree technique, the time complexity of Step (2) is also improved to be linear. Consequently, our overall time complexity is linear in $n$.


Furthermore, another advantage of our techniques is that they can be extended to solve the approximate and streaming DBSCAN problems, as shown in the following Section~\ref{approx_dbscan}.

}

\section{\texorpdfstring{$\rho$}{}-approximate Metric DBSCAN}
\label{approx_dbscan}
In this section, we present a linear time $\rho$-approximate DBSCAN algorithm that can be applied to arbitrary metric space. 
The original $\rho$-approximate DBSCAN algorithm in \citep{gan2015dbscan} is based on the grid method in Euclidean space $\mathbb{R}^d$. However, their method cannot be directly extended to general metric space; moreover, 
it takes an $O(n \cdot (1/\rho)^{d-1})$ running time that is unaffordable in practice for large $d$. 

Comparing with our exact DBSCAN algorithm in Section~\ref{exact_dbscan-1}, though both of the two algorithms have the time complexities linear in $n$, the approximate algorithm compresses the data into a small-size summary and thus it can be easily implemented for streaming data.  {Also in practice (as shown in our experiments), the approximate algorithm usually is faster than the exact algorithm. 
The main reason is that for exact DBSCAN, we need to solve a number of BCP problems when merging the core points (Step (2) in Section~\ref{exact_dbscan-1}). Although this step can be accelerated by using the cover tree, it is still relatively time-consuming in practice. In our approximate DBSCAN algorithm we can avoid this step by using the ``summary'' idea as shown below. 
}

\subsection{Linear Time Algorithm via Core Points Summary}
\label{sec-summary}

\textbf{Our main idea.} As shown in Section~\ref{exact_dbscan}, we can apply our proposed radius-guided Gonzalez's algorithm to efficiently reduce the complexity for labeling core points, border points, and outliers.
To merge the core points to form the DBSCAN clusters, we consider to build a ``summary'' $\mathcal{S}_*$ for the set of core points $\cup_{e\in E}\tilde{C}_e$, which is different to the cover tree idea. The summary set should satisfy two conditions: (1) $|\mathcal{S}_*|\ll |\cup_{e\in E}\tilde{C}_e|$ and (2) we can correctly generate the $\rho$-approximate clusters (as Definition~\ref{def-rhoapprox}) through $\mathcal{S}_*$. Our idea for constructing $\mathcal{S}_*$ is as follows. 
We  check each point $e\in E$ and its corresponding set $\mathcal{C}_e$ (obtained in Algorithm~\ref{exp_kcenter} with an appropriate $\bar{r}$). If $e$ is a core point, we add it to $\mathcal{S}_*$ and ignore all the other points of $\mathcal{C}_e$; otherwise, we add all the  core points  $\tilde{\mathcal{C}}_e$ to $\mathcal{S}_*$. The details are shown in Algorithm~\ref{fa_dbscan}. 

Though the above construction method for $\mathcal{S}_*$ is simple, it is challenging to prove that it satisfies the two conditions. Below, we first show its correctness, that is, we can generate the $\rho$-approximate clusters   through $\mathcal{S}_*$. 
{
Here we slightly modify the previous definition of the \textbf{neighbor ball center set} $\mathcal{A}_p$ in (\ref{for-ap}). Recall that for any $p \in X$, we define 
$\mathcal{A}_p = \{e | e \in E, \mathtt{dis}(e,c_p) \le 2 \bar{r} + \epsilon\}$. Now we change it to 
\begin{eqnarray}
 \mathcal{A}_p = \{e | e \in E, \mathtt{dis}(e,c_p) \le 4 \bar{r} + \epsilon\}. \label{for-ap-new}
 \end{eqnarray}
 Since we enlarge the set $\mathcal{A}_p$ in this modification, the result of Lemma~\ref{neighbor_lemma} is still true. 
 Also, this modification does not change the asymptotic upper bound $O\big{(}(\frac{\epsilon}{\bar{r}})^D + z\big{)}$ of $|\mathcal{A}_p|$  in Lemma~\ref{Ap_num}. We set $\bar{r} = \frac{\rho \epsilon}{2}$ in  Algorithm~\ref{fa_dbscan}, and then  the equation (\ref{approx_ap}) is replaced by:
\begin{equation}
\label{approx_ap-new}
    |\mathcal{A}_p| = O((\frac{1}{\rho})^D + z).
\end{equation}

}


\vspace{-10pt}
\begin{algorithm}
  \SetAlgoLined
  \caption{\sc{Metric $\rho$-approximate DBSCAN via Core points Summary}}
  \label{fa_dbscan}
  \KwIn{$\epsilon$, $MinPts$, $\rho$, $X$}
  
  Run Algorithm \ref{exp_kcenter} with  $\bar{r} = \frac{\rho \epsilon}{2}$ and $X$, let $E$ be the output set.  {We can also identify the core points of $E$ after running Algorithm \ref{exp_kcenter} (see our explanation in the proof of Lemma~\ref{step1}).}
  Initialize $\mathcal{S}_*\leftarrow \emptyset$.
  \tcp*[r]{the summary for core points}

\For{{\rm \textbf{each}} $e\in E$}{
 
    \eIf{$e$ is a core point}
    {
      Add $e$ to $\mathcal{S}_*$.
    }
    {
       Add all the core points in $\mathcal{C}_e$ to $\mathcal{S}_*$. 
    }
  }
  
 \textbf{Merge inside $\mathcal{S}_*$:} 
 for any pair of $c_1, c_2\in \mathcal{S}_*$, label them with the same cluster ID if   
  $\mathtt{dis}(c_1,c_2) \le (1+\rho)\epsilon$. 
  \label{merge_core_points}
  
  \textbf{Label other points:} 
    \For{$p\in X\setminus\mathcal{S}_*$}{
    \label{label_other_points}
        \eIf{$c_p \in \mathcal{S}_*$}
        {
            Label $p$ with the same cluster ID of $c_p$.
        }
        {
            \eIf{there exists some $s\in \mathcal{S}_*$ such that $\mathtt{dis}(p,s)\le (\frac{\rho}{2} + 1)\epsilon$}
            {
            Label $p$ with the same cluster ID of  $s$.
            }
            {
            Label $p$ as an outlier.
            }
        }
    }\label{end_label_other_points}
  

\end{algorithm}
\vspace{-8pt}







\begin{theorem}[Correctness]
  \label{dbscan_correctness}
 Algorithm \ref{fa_dbscan} can correctly return a $\rho$-approximate DBSCAN solution.
\end{theorem}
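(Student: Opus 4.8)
The plan is to verify the two requirements of Definition~\ref{def-rhoapprox} directly: each cluster returned by Algorithm~\ref{fa_dbscan} is a $\rho$-approximate cluster, and every core point lies in exactly one of them. Everything hinges on two ``summary'' properties of $\mathcal{S}_*$ that I would record first. First, every element of $\mathcal{S}_*$ is a core point, and since a point of $E$ is its own center, $e\in\mathcal{S}_*$ iff $e$ is a core point (the core points of $E$ can be recognized while running Algorithm~\ref{exp_kcenter}). Second, every core point $p$ has a representative $\pi(p)\in\mathcal{S}_*$ with $\mathtt{dis}(p,\pi(p))\le\bar r=\rho\epsilon/2$: if $c_p$ is a core point take $\pi(p)=c_p$ (then $c_p\in\mathcal{S}_*$ and $p\in\mathcal{C}_{c_p}$ gives the bound), and otherwise $p$ itself was inserted into $\mathcal{S}_*$, so $\pi(p)=p$. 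An immediate consequence is that a core point $p\notin\mathcal{S}_*$ must have $c_p\in\mathcal{S}_*$, so it is handled by the first branch of the ``label other points'' loop and gets the ID of $c_p=\pi(p)$; together with the merge step this shows every core point receives exactly one cluster ID.

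Next I would argue the output is correct on the core points. For \textbf{connectivity}: if core points $p,q$ get the same ID, then $\pi(p)$ and $\pi(q)$ lie in one connected component of the graph on $\mathcal{S}_*$ with edge threshold $(1+\rho)\epsilon$; concatenating a path in that component with the hops from $p$ to $\pi(p)$ and from $\pi(q)$ to $q$ (each of length $\le\bar r\le(1+\rho)\epsilon$) yields a sequence of core points with consecutive gaps $\le(1+\rho)\epsilon$, i.e.\ $q$ is $\rho$-approximate density-reachable from $p$. Since $\rho$-approximate density-reachability composes whenever the joining point is a core point, every point of a returned cluster $C$ is $\rho$-approximate density-reachable from a fixed core point $o\in C$ (a non-core point inherits this through its associated core point), and connectivity follows. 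For \textbf{maximality} restricted to core points: if core points $p,q$ satisfy $\mathtt{dis}(p,q)\le\epsilon$ then $\mathtt{dis}(\pi(p),\pi(q))\le\bar r+\epsilon+\bar r=(1+\rho)\epsilon$, so they are merged, and iterating along an exact density-reachability path of core points shows every core point density-reachable from a core point of $C$ stays in $C$.

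The remaining and most delicate step is the labelling of non-core points. If $p$ is a \textbf{border point} there is a core point $q$ with $\mathtt{dis}(p,q)\le\epsilon$, hence $\pi(q)\in\mathcal{S}_*$ with $\mathtt{dis}(p,\pi(q))\le\epsilon+\bar r=(1+\rho/2)\epsilon$, so $p$ enters one of the first two branches and is never declared an outlier; conversely, if the loop declares $p$ an outlier then no element of $\mathcal{S}_*$ is within $(1+\rho/2)\epsilon$ of $p$, which via representatives forces $p$ to have no core point within $\epsilon$, so $p$ is genuinely an outlier (and by the first paragraph no core point is ever mislabelled). It then remains to check that the ID handed to a border point is consistent with maximality. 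When $c_p\in\mathcal{S}_*$, $\mathtt{dis}(c_p,\pi(q))\le\bar r+\epsilon+\bar r=(1+\rho)\epsilon$ shows $c_p$ and $\pi(q)$ are merged, so $p$ joins the cluster of $q$. When $c_p\notin\mathcal{S}_*$ the algorithm uses ``some'' admissible $s\in\mathcal{S}_*$ with $\mathtt{dis}(p,s)\le(1+\rho/2)\epsilon$, and I expect this to be the main obstacle: the triangle inequality alone only gives $\mathtt{dis}(s,\pi(q))\le(2+\rho)\epsilon$, so the proof must exploit that $p$ lies in the tiny cover set $\mathcal{C}_{c_p}$ of radius $\bar r$ (and the neighbor-ball-center set $\mathcal{A}_p$, which contains every center relevant to the $\epsilon$-neighborhood of $p$) to show that every admissible $s$ is already connected to $\pi(q)$ in the merge graph, so the choice is immaterial. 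Once this is settled, each returned set is a $\rho$-approximate cluster and the collection covers the core points exactly once, which is precisely Definition~\ref{def-rhoapprox}.
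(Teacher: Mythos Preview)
Your overall approach coincides with the paper's: the representative map $\pi(p)$ is exactly the paper's $s_p$, and your argument for core points (if $\mathtt{dis}(p,q)\le\epsilon$ then $\mathtt{dis}(\pi(p),\pi(q))\le(1+\rho)\epsilon$, hence merged) is the paper's Case~1. The paper disposes of connectivity in one sentence (``any two points are connected only when their distance is no larger than $(1+\rho)\epsilon$''), whereas you spell out the path through $\mathcal{S}_*$; this is a cosmetic difference.

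Where you diverge is precisely the case you flag as the ``main obstacle'': a border point $p$ with $c_p\notin\mathcal{S}_*$. The paper does \emph{not} resolve the ambiguity you worry about; it simply observes that $s_q$ is an admissible witness (since $\mathtt{dis}(p,s_q)\le(1+\rho/2)\epsilon$) and writes ``So we label $p$ with the same cluster ID with $s_q$'', tacitly treating the algorithm's nondeterministic ``some $s$'' as choosing $s_q$. Your hope that one can prove \emph{every} admissible $s$ is already connected to $\pi(q)$ via $\mathcal{A}_p$ will not succeed: take two genuine DBSCAN clusters whose summary representatives $s_1,s_2$ sit at distance in $((1+\rho)\epsilon,(2+\rho)\epsilon]$ with $p$ between them, $\mathtt{dis}(p,s_1)\le\epsilon$ and $\mathtt{dis}(p,s_2)\le(1+\rho/2)\epsilon$; then both $s_1$ and $s_2$ are admissible but lie in different merge components. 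This is exactly the classical border-point ambiguity acknowledged in the footnote to Definition~\ref{def-dbscan} (``a border point may be assigned to multiple clusters''), and it is inherited by Definition~\ref{def-rhoapprox}. Neither your route nor the paper's can eliminate it; the correctness statement is to be read modulo this standard caveat, and the paper's proof implicitly does so. So rather than trying to close that gap with $\mathcal{A}_p$, you should note the ambiguity and appeal to the same convention.
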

%
%
%
  
%
%
\begin{proof}
We prove the correctness following Definition~\ref{def-rhoapprox}. 
The \textbf{connectivity} is  easy to verify, since any two points are connected only when their distance is no larger than $(1+\rho)\epsilon$ in the algorithm. 
So we only need to prove the \textbf{maximality} of the obtained clusters. Namely,  $\forall p, q$, if $q$ is a core point and $p$ is density-reachable from $q$, then $p$ should have the same cluster ID with $q$.

To prove the \textbf{maximality}, we need to guarantee 
the uniqueness of the cluster ID for each core point $q$. If $q\in \mathcal{S}_*$, the uniqueness is satisfied when we merge inside $\mathcal{S}_*$; else, $q$ has the same cluster ID with $c_q$. So the uniqueness is guaranteed for all the core points, which implies that every core point belongs to exactly one cluster. Then we consider any two points $p, q\in X$ with $\mathtt{dis}(p, q)\leq \epsilon$. Let $q$ be a core point and $s_q$ be the ``representative'' of $q$ in $\mathcal{S}_*$ (according to the construction of $\mathcal{S}_*$, we let $s_q=c_q$ if $c_q$ is a core point; otherwise, $s_q=q$). 

\textbf{Case 1:} If $p$ is also a core point (its representative in $\mathcal{S}_*$ is denoted by $s_p$), then from the triangle inequality we know 
\begin{equation}
    \mathtt{dis}(s_p, s_q) \le \mathtt{dis}(s_p, p) + \mathtt{dis}(p, q) + \mathtt{dis}(q, s_q) \le (1 + \rho)\epsilon. \label{for-dbscan_correctness-1}
\end{equation}
It implies that $s_p$ and $s_q$ should have the same cluster ID; together with the ID uniqueness, we know $p$ and $q$ should have the same ID as well. 

\textbf{Case 2:} If $p$ is not a core point, actually $p$ should be a border point for this case.   
If $c_p \in \mathcal{S}_*$
, from the similar manner of (\ref{for-dbscan_correctness-1}) we know $\mathtt{dis}(c_p, s_q) \le (1+\rho) \epsilon$.  
So $c_p$ and $s_q$ should be merged when we merge inside $\mathcal{S}_*$.  Hence, $p$ and $q$ have the same cluster ID. 
If $c_p \notin \mathcal{S}_*$,  we have
    $\mathtt{dis}(p,s_q) \le \mathtt{dis}(p, q) + \mathtt{dis}(q, s_q) \le (\frac{\rho}{2} + 1)\epsilon$.
So  we label $p$ with the same cluster ID with $s_q$.
Also, $s_q$ and $q$ have same cluster ID, and thus   $p$ and $q$ have the same cluster ID.

Overall, we know that the \textbf{maximality} is guaranteed from the above two cases. 
So algorithm \ref{fa_dbscan} returns a qualified  $\rho$-approximate DBSCAN solution.
\end{proof}
Now, we analyze the complexity of Algorithm \ref{fa_dbscan}, { where the key is to prove the upper bounds for the complexity of constructing the summary  $\mathcal{S}_*$} and the induced query complexity in the labeling procedure. The  idea is similar with the proof of Lemma~\ref{lem-step1}, which considers the amortized complexity over the whole input instance.

\begin{theorem}[Complexity]
  \label{dbscan_com}
  Let $\rho \le 2$. 
  Algorithm~\ref{fa_dbscan} runs in  $O\Big{(}n\big{(}(\frac{\Delta}{\rho \epsilon})^D + z\big{)} \cdot t_\mathtt{dis}\Big{)}$ time.
\end{theorem}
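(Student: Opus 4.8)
The plan is to bound the running time of Algorithm~\ref{fa_dbscan} step by step, re-using the amortization argument from the proof of Lemma~\ref{lem-step1} wherever the same combinatorial structure reappears. First, the pre-processing call to Algorithm~\ref{exp_kcenter} with $\bar{r} = \frac{\rho\epsilon}{2}$ costs, by Lemma~\ref{kcenter_iteration}, at most $O\big{(}(\frac{\Delta}{\rho\epsilon})^D\big{)} + z$ iterations of $O(nt_\mathtt{dis})$ each; identifying the core points of $E$ during this run (as promised in the algorithm text and to be justified in Lemma~\ref{step1}) and computing the sets $\mathcal{A}_e$ adds no asymptotic overhead, so this part is already within the claimed bound $O\big{(}n((\frac{\Delta}{\rho\epsilon})^D + z)t_\mathtt{dis}\big{)}$ — note the additive-$z$ term multiplied by $n$ absorbs the standalone $z$.

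Next I would handle the construction of $\mathcal{S}_*$ and the ``Merge inside $\mathcal{S}_*$'' step. Partition $E$ into $E_1 = \{e : |\mathcal{C}_e|\ge MinPts\}$ and $E_2 = E\setminus E_1$; by the algorithm, for $e\in E_1$ only the single point $e$ enters $\mathcal{S}_*$, while for $e\in E_2$ at most $|\mathcal{C}_e| < MinPts = O(1)$ points enter. Hence $|\mathcal{S}_*| = O(|E_1| + |E_2|) = O(|E|) = O\big{(}(\frac{\Delta}{\rho\epsilon})^D + z\big{)}$, so a naive all-pairs merge would be quadratic in $|E|$, not linear in $n$. The key observation to fix this is that two summary points $c_1\in\mathcal{S}_*$ (representing some $e_1\in E$) and $c_2\in\mathcal{S}_*$ (representing $e_2\in E$) can only satisfy $\mathtt{dis}(c_1,c_2)\le(1+\rho)\epsilon$ when $e_2\in\mathcal{A}_{e_1}$ (using the modified definition~(\ref{for-ap-new}) with $\bar{r}=\frac{\rho\epsilon}{2}$, so $4\bar{r}+\epsilon = (1+2\rho)\epsilon \ge (1+\rho)\epsilon$ since $\rho\le 2$; the triangle inequality through the centers gives the containment). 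Therefore the merge need only examine, for each $e$, the pairs coming from $\mathcal{A}_e$, and the total work is $\sum_{e\in E}\sum_{e'\in\mathcal{A}_e}(\text{\# summary points in }\mathcal{C}_e)\cdot(\text{\# summary points in }\mathcal{C}_{e'})\cdot t_\mathtt{dis}$. Splitting the contributions from $E_1$ (one summary point each) and $E_2$ ($<MinPts$ each) and swapping the order of summation exactly as in~(\ref{exchange})--(\ref{exchange-2}), using $e'\in\mathcal{A}_e\iff e\in\mathcal{A}_{e'}$, $\sum_e|\mathcal{C}_e| = O(n)$, and $|\mathcal{A}_{e'}| = O((\frac{1}{\rho})^D + z)$ from~(\ref{approx_ap-new}), this telescopes to $O\big{(}n((\frac{1}{\rho})^D + z)t_\mathtt{dis}\big{)}$, which is dominated by the claimed bound.

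Finally, the ``Label other points'' loop runs over $p\in X\setminus\mathcal{S}_*$. For each such $p$ we either read off the cluster ID of $c_p$ in $O(1)$ time, or we must search for an $s\in\mathcal{S}_*$ with $\mathtt{dis}(p,s)\le(\frac{\rho}{2}+1)\epsilon$; again, by the triangle inequality any such $s$ lies in some $\mathcal{C}_e$ with $e\in\mathcal{A}_p$ (here $4\bar{r}+\epsilon = (1+2\rho)\epsilon > (\frac{\rho}{2}+1)\epsilon$ holds for $\rho\le 2$), so the search is confined to $\cup_{e\in\mathcal{A}_p}\mathcal{C}_e$ restricted to summary points. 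The cost is therefore of the same form $\sum_{e\in E_2}\sum_{e'\in\mathcal{A}_e}O(|\mathcal{C}_e||\mathcal{C}_{e'}|t_\mathtt{dis})$ as Step (3) in Lemma~\ref{lem-step3} (only non-core, hence non-$E_1$, points reach the inner search), and the identical amortization gives $O\big{(}n((\frac{1}{\rho})^D + z)t_\mathtt{dis}\big{)}$. Summing the three contributions and noting $(\frac{1}{\rho})^D \le (\frac{\Delta}{\rho\epsilon})^D$ (as $\Delta\ge\epsilon$ whenever any cluster is nontrivial; otherwise the problem is degenerate), the total is $O\big{(}n((\frac{\Delta}{\rho\epsilon})^D + z)t_\mathtt{dis}\big{)}$, as claimed. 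The main obstacle is the merge step: one must resist the temptation to merge all pairs in $\mathcal{S}_*$ and instead argue the locality restriction via $\mathcal{A}_p$ and then re-run the order-swapping amortization carefully, checking that the $MinPts = O(1)$ factor from the $E_2$ spheres stays a constant and does not interact badly with the $\mathcal{A}$-sums.
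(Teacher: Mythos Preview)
Your proposal is correct and follows essentially the same route as the paper: decompose the running time into the call to Algorithm~\ref{exp_kcenter}, the construction of $\mathcal{S}_*$, the merge inside $\mathcal{S}_*$, and the labeling of remaining points; restrict every search to $\cup_{e'\in\mathcal{A}_e}\mathcal{C}_{e'}$ via the modified definition~(\ref{for-ap-new}); and apply the order-swapping amortization of Lemma~\ref{lem-step1} together with the bound~(\ref{approx_ap-new}). The paper organizes the same ingredients into separate lemmas (Lemmas~\ref{ce_num}--\ref{label_others_app}) and obtains a slightly sharper intermediate bound for the merge step, namely $O\big(((\tfrac{\Delta}{\rho\epsilon})^D+z)((\tfrac{1}{\rho})^D+z)t_\mathtt{dis}\big)$ rather than your $O\big(n((\tfrac{1}{\rho})^D+z)t_\mathtt{dis}\big)$, but both are dominated by the pre-processing cost, so the final statement is unchanged.
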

 Usually $\rho$ is a small pre-specified number, since it measures the approximation degree and we do not want the result to be far away from the exact DBSCAN. So we  suppose $\rho\leq 2$ here (actually our analysis also works for the case $\rho>2$ with slight modification). 
To prove Theorem~\ref{dbscan_com}, we need to introduce some key lemmas first.  

\begin{lemma}
  \label{ce_num}
  For each $e\in E$, $|\mathcal{C}_e \cap \mathcal{S}_*| \le MinPts$. 
\end{lemma}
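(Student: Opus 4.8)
The claim is that for each center $e \in E$, the intersection $\mathcal{C}_e \cap \mathcal{S}_*$ has size at most $MinPts$. The plan is to split on the two cases that arise in the construction of $\mathcal{S}_*$ in Algorithm~\ref{fa_dbscan}. First I would fix an arbitrary $e \in E$ and recall that $\mathcal{C}_e = \{p \in X \mid c_p = e\}$ is the cover set of $e$ produced by Algorithm~\ref{exp_kcenter}. The summary $\mathcal{S}_*$ is built by iterating over all centers; the only step that can insert points lying in $\mathcal{C}_e$ is the iteration handling $e$ itself (since for a different center $e' \ne e$, the points added are either $e'$ itself or core points of $\mathcal{C}_{e'}$, and $\mathcal{C}_{e'}$ is disjoint from $\mathcal{C}_e$). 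Hence $\mathcal{C}_e \cap \mathcal{S}_*$ is exactly the set of points added to $\mathcal{S}_*$ during the $e$-iteration.

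Now I split into the two branches of the \texttt{if}. If $e$ is a core point, the algorithm adds only $e$ to $\mathcal{S}_*$ and ignores all other points of $\mathcal{C}_e$, so $\mathcal{C}_e \cap \mathcal{S}_* = \{e\}$, which has size $1 \le MinPts$ (recall $MinPts \in \mathbb{Z}^+$, so $MinPts \ge 1$). If $e$ is not a core point, the algorithm adds all core points of $\mathcal{C}_e$, i.e. the set $\tilde{\mathcal{C}}_e$; so $\mathcal{C}_e \cap \mathcal{S}_* \subseteq \mathcal{C}_e$ and in fact equals $\tilde{\mathcal{C}}_e$. Here the key observation is that because $e$ is not a core point, $|\mathbb{B}(e,\epsilon) \cap X| < MinPts$; and since $\bar{r} = \rho\epsilon/2 \le \epsilon$ (using $\rho \le 2$), every point $p \in \mathcal{C}_e$ satisfies $\mathtt{dis}(p, e) \le \bar{r} \le \epsilon$ — wait, I should double-check: the Gonzalez guarantee is $\mathtt{dis}(p, E) \le \bar{r}$ at termination, and $c_p = e$ is the closest center, so $\mathtt{dis}(p,e) \le \bar{r} \le \epsilon$. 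Hence $\mathcal{C}_e \subseteq \mathbb{B}(e,\epsilon) \cap X$, which gives $|\mathcal{C}_e| \le |\mathbb{B}(e,\epsilon)\cap X| < MinPts$, and therefore $|\mathcal{C}_e \cap \mathcal{S}_*| \le |\mathcal{C}_e| < MinPts$.

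Combining the two cases, in both we get $|\mathcal{C}_e \cap \mathcal{S}_*| \le MinPts$, which is the claim. The only subtlety — and the one place I would be careful — is justifying that $\bar{r} \le \epsilon$ so that $\mathcal{C}_e \subseteq \mathbb{B}(e,\epsilon)$; this uses the hypothesis $\rho \le 2$ stated right before the lemma in the surrounding text, so it is available. I do not expect any genuine obstacle here; the proof is a short case analysis, and the main thing to get right is the bookkeeping that $\mathcal{C}_e \cap \mathcal{S}_*$ receives contributions only from the $e$-iteration and that in the non-core branch the containment $\mathcal{C}_e \subseteq \mathbb{B}(e,\epsilon)$ forces the sparse-ball bound $|\mathcal{C}_e| < MinPts$.
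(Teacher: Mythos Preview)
Your proposal is correct and follows essentially the same approach as the paper: a case split on whether $e$ is a core point, with the non-core case handled via $\rho \le 2 \Rightarrow \bar{r} \le \epsilon \Rightarrow \mathcal{C}_e \subseteq \mathbb{B}(e,\epsilon)\cap X$ and the resulting bound $|\mathcal{C}_e| < MinPts$. Your extra bookkeeping that only the $e$-iteration contributes to $\mathcal{C}_e \cap \mathcal{S}_*$ (by disjointness of the cover sets) is a nice clarification the paper leaves implicit.
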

\begin{proof}
  If $e$ is a core point, then $\mathcal{C}_e \cap \mathcal{S}_* = \{e\}$ and obviously the size is no more than $MinPts$.

 If $e$ is not a core point, it implies $|\mathbb{B}(e, \epsilon)| \le MinPts$.
  Since we let $\rho \le 2$, then  $\frac{\rho \epsilon}{2} \le \epsilon$ and $\mathcal{C}_e \subseteq \mathbb{B}(e, \epsilon)$.
  Therefore, 
  \begin{equation}
    \label{cee_num}
    |\mathcal{C}_e \cap \mathcal{S}_*| \le |\mathcal{C}_e| \le |\mathbb{B}(e, \epsilon)| \le MinPts.
  \end{equation}
\end{proof}
\begin{lemma}
  \label{s_num}
  The size of $\mathcal{S}_*$ is $O\big{(}(\frac{\Delta}{\rho \epsilon})^D + z\big{)}$.
\end{lemma}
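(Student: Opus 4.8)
The plan is to bound $|\mathcal{S}_*|$ by accounting separately for the two ways a point enters $\mathcal{S}_*$ in Algorithm~\ref{fa_dbscan}: either as a core point $e\in E$ itself, or as one of the core points in $\mathcal{C}_e$ when $e$ is not a core point. Write $\mathcal{S}_* = \mathcal{S}_*' \cup \mathcal{S}_*''$ accordingly. For the first part, $\mathcal{S}_*' \subseteq E$, so $|\mathcal{S}_*'| \le |E|$, and by Lemma~\ref{kcenter_iteration} applied with $\bar{r} = \rho\epsilon/2$ we get $|E| = O\big((\frac{\Delta}{\rho\epsilon})^D + z\big)$. For the second part, each non-core $e\in E$ contributes at most $|\mathcal{C}_e|$ points, but by Lemma~\ref{ce_num} (or directly, since a non-core $e$ has $|\mathbb{B}(e,\epsilon)\cap X| < MinPts$ and $\mathcal{C}_e\subseteq\mathbb{B}(e,\epsilon)$ when $\rho\le 2$) it contributes at most $MinPts - 1$ points. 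Hence $|\mathcal{S}_*''| \le (MinPts)\cdot|E|$, which is again $O\big((\frac{\Delta}{\rho\epsilon})^D + z\big)$ once we treat $MinPts$ as a constant per Remark~\ref{rem-minpts}.

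First I would state the decomposition $\mathcal{S}_* = \mathcal{S}_*' \cup \mathcal{S}_*''$ explicitly and note $|\mathcal{S}_*| \le |\mathcal{S}_*'| + |\mathcal{S}_*''|$. Then I would invoke Lemma~\ref{kcenter_iteration}: since Algorithm~\ref{fa_dbscan} calls Algorithm~\ref{exp_kcenter} with $\bar{r} = \rho\epsilon/2$, the returned set $E$ satisfies $|E| = O\big((\frac{\Delta}{\bar r})^D\big) + z = O\big((\frac{2\Delta}{\rho\epsilon})^D + z\big) = O\big((\frac{\Delta}{\rho\epsilon})^D + z\big)$, absorbing the $2^D$ factor into the big-$O$. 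Next, bound each summand: $|\mathcal{S}_*'| \le |E|$ trivially, and for $|\mathcal{S}_*''|$ use Lemma~\ref{ce_num} which gives $|\mathcal{C}_e \cap \mathcal{S}_*| \le MinPts$ for every $e$, so summing over the (at most $|E|$) non-core centers yields $|\mathcal{S}_*''| \le MinPts \cdot |E|$. Combining, $|\mathcal{S}_*| \le (1 + MinPts)\,|E| = O\big((\frac{\Delta}{\rho\epsilon})^D + z\big)$ under the constant-$MinPts$ assumption.

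There is no real obstacle here — this is a short counting argument that reuses Lemma~\ref{kcenter_iteration} and Lemma~\ref{ce_num} essentially as black boxes. The only point requiring a little care is making sure the constant factors ($2^D$ from the $\bar r = \rho\epsilon/2$ substitution, and the $MinPts$ factor) are legitimately swept into the asymptotic notation; both are justified since $D$ and $MinPts$ are constants by Assumption~\ref{ass-doubling2} and Remark~\ref{rem-minpts} respectively. I would also remark in passing that $\mathcal{S}_*'$ and $\mathcal{S}_*''$ need not be disjoint, but that is harmless for an upper bound. This lemma then feeds directly into the proof of Theorem~\ref{dbscan_com}, where the size of $\mathcal{S}_*$ controls the cost of the ``merge inside $\mathcal{S}_*$'' step.
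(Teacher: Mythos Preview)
Your proposal is correct and essentially identical to the paper's proof: the paper also splits the count into core-point centers (each contributing $1$) and non-core centers (each contributing $O(|\mathcal{C}_e|) \le MinPts$ via inequality~(\ref{cee_num})), then sums to $O(MinPts \cdot |E|) = O(|E|)$ and applies Lemma~\ref{kcenter_iteration} with $\bar{r} = \rho\epsilon/2$. Your version is slightly more explicit about the decomposition notation and the absorption of the $2^D$ constant, but the argument is the same.
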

\begin{proof}
  The size of $\mathcal{S}_*$ is

   \begin{equation}
    \begin{aligned}
      |\mathcal{S}_*| &= \sum_{e \text{ is core point}} 1 + \sum_{e \text{ is not core point}} O(|\mathcal{C}_e|). 
    \end{aligned}
  \end{equation}
According to the  inequality (\ref{cee_num}), we have $|\mathcal{C}_e| \le MinPts$ { if $e$ is not a core point}, and thus 
  \begin{equation}
    \label{s_num2}
    \begin{aligned}
     |\mathcal{S}_*| &= O(|E|) + \sum_{e \text{ is not core point}} O(MinPts) \\
      &= O(MinPts \cdot |E|) \\
      &= O(|E|) 
      = O\big{(}(\frac{\Delta}{\rho \epsilon})^D + z\big{)},
    \end{aligned}
   \end{equation}
   {where the last equality follows from Lemma~\ref{kcenter_iteration}.}
\end{proof}

\begin{lemma}
  \label{step1}
  The construction of $\mathcal{S}_*$ can be completed in $O(((\frac{1}{\rho})^D + z)nt_\mathtt{dis})$ time.
\end{lemma}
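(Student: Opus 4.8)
\textbf{Proof proposal for Lemma~\ref{step1}.}

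The plan is to bound the cost of determining, for each $e\in E$, whether $e$ is a core point, and then the cost of building $\mathcal{S}_*$ from that information. Observe that constructing $\mathcal{S}_*$ itself, once we know which $e\in E$ are core points, costs only $O(\sum_{e\in E}|\mathcal{C}_e|)=O(n)$: we either add $e$ alone or add the (already-identified) core points of $\mathcal{C}_e$. Hence the real work is the core-point test for every $e\in E$. First I would note that this is exactly a restricted instance of Step~(1) from Section~\ref{exact_dbscan-1}: to decide whether $e$ is a core point we need $|\mathbb{B}(e,\epsilon)\cap X|$, and by Lemma~\ref{neighbor_lemma} it suffices to examine $\cup_{e'\in\mathcal{A}_e}\mathcal{C}_{e'}$. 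As in Lemma~\ref{lem-step1}, if $|\mathcal{C}_e|\ge MinPts$ then $e$ is immediately a core point (since $\mathcal{C}_{c_e}\subseteq\mathbb{B}(e,\epsilon)\cap X$), at $O(1)$ cost; otherwise we pay $\sum_{e'\in\mathcal{A}_e}O(|\mathcal{C}_{e'}|\cdot t_\mathtt{dis})$ to scan the neighboring cover sets.

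Next I would carry out the amortized summation exactly as in Lemma~\ref{lem-step1}. Let $E_1=\{e\in E\mid |\mathcal{C}_e|\ge MinPts\}$ and $E_2=E\setminus E_1$. The total cost is
\begin{equation}
\sum_{e\in E_1}O(1)+\sum_{e\in E_2}\sum_{e'\in\mathcal{A}_e}O(|\mathcal{C}_{e'}|\cdot t_\mathtt{dis}).
\end{equation}
Using the key symmetry $e'\in\mathcal{A}_e\iff e\in\mathcal{A}_{e'}$, I exchange the order of summation and pull out the factor $O(|\mathcal{C}_{e'}|\cdot t_\mathtt{dis})$, which is independent of $e$:
\begin{equation}
\sum_{e\in E_2}\sum_{e'\in\mathcal{A}_e}O(|\mathcal{C}_{e'}|\cdot t_\mathtt{dis})
=\sum_{e'\in E}\;\sum_{e\in\mathcal{A}_{e'}\cap E_2}O(|\mathcal{C}_{e'}|\cdot t_\mathtt{dis})
\le\sum_{e'\in E}O(|\mathcal{C}_{e'}|\cdot t_\mathtt{dis})\cdot O(|\mathcal{A}_{e'}|).
\end{equation}
Then I substitute the bound $|\mathcal{A}_{e'}|=O((\tfrac{1}{\rho})^D+z)$ from the modified estimate~(\ref{approx_ap-new}) (valid because we run Algorithm~\ref{exp_kcenter} with $\bar{r}=\rho\epsilon/2$), together with $\sum_{e'\in E}|\mathcal{C}_{e'}|=O(n)$, to get $O\big((\tfrac{1}{\rho})^D+z\big)n\cdot t_\mathtt{dis}$. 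Adding the $O(|E_1|)\le O(n)$ and the $O(n)$ for assembling $\mathcal{S}_*$ is absorbed, so the total is $O\big(((\tfrac{1}{\rho})^D+z)n\,t_\mathtt{dis}\big)$ as claimed.

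One point that needs a sentence of justification (and which I would fold into the proof rather than treat as a separate obstacle) is the parenthetical claim in the algorithm statement that the core points of $E$ can be identified right after running Algorithm~\ref{exp_kcenter}: this is precisely the computation just analyzed, so there is no circularity — the lemma's cost bound is the cost of that identification step plus the trivial assembly of $\mathcal{S}_*$. The only mild subtlety, and the part I would be most careful about, is making sure the amortization is charged correctly: a dense cover set $\mathcal{C}_{e'}$ (possibly of size $\Theta(n)$) may be scanned once for each of its $O(|\mathcal{A}_{e'}|)$ sparse neighbors $e\in E_2$, and it is the exchange-of-summation trick combined with $|\mathcal{A}_{e'}|=O((\tfrac{1}{\rho})^D+z)$ that prevents this from blowing up — exactly the mechanism already used in Lemma~\ref{lem-step1}, so the argument is a direct adaptation rather than anything genuinely new.
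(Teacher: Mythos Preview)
Your amortized summation is the right machinery, but you have misidentified which step carries the cost. The sentence ``constructing $\mathcal{S}_*$ itself, once we know which $e\in E$ are core points, costs only $O(\sum_{e\in E}|\mathcal{C}_e|)=O(n)$: we either add $e$ alone or add the (already-identified) core points of $\mathcal{C}_e$'' is the gap. For each non-core center $e$, the algorithm must add \emph{all core points of $\mathcal{C}_e$} to $\mathcal{S}_*$, and those core points are \emph{not} already identified anywhere in your argument --- you only analyze the core-point test for the centers $e\in E$, never for the remaining points $p\in\mathcal{C}_e$. Deciding whether a given $p\in\mathcal{C}_e$ is a core point costs $\sum_{e'\in\mathcal{A}_e}O(|\mathcal{C}_{e'}|\,t_\mathtt{dis})$ per point, and this is exactly the work you dismiss as $O(n)$. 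Conversely, the paper observes that deciding whether each center $e\in E$ is core is actually \emph{free}: all distances $\mathtt{dis}(e,p)$ for $p\in X$ are computed during Algorithm~\ref{exp_kcenter}, so $|\mathbb{B}(e,\epsilon)\cap X|$ is available at no extra cost. You have the decomposition backwards --- the part you bound carefully is free, and the part you wave away is the main work.

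The repair is immediate and uses exactly your amortization. For a non-core center $e$ one has $\mathcal{C}_e\subseteq\mathbb{B}(e,\epsilon)$ (since $\bar{r}=\rho\epsilon/2\le\epsilon$), hence $|\mathcal{C}_e|<MinPts$. The true cost is therefore
\[
\sum_{e\ \text{not core}}\ \sum_{p\in\mathcal{C}_e}\ \sum_{e'\in\mathcal{A}_e}O(|\mathcal{C}_{e'}|\,t_\mathtt{dis})
\ \le\ O(MinPts)\cdot\sum_{e\ \text{not core}}\ \sum_{e'\in\mathcal{A}_e}O(|\mathcal{C}_{e'}|\,t_\mathtt{dis}),
\]
and now your exchange-of-summation argument (together with $|\mathcal{A}_{e'}|=O((\tfrac{1}{\rho})^D+z)$ and $\sum_{e'}|\mathcal{C}_{e'}|=O(n)$) finishes the bound. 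So the final estimate is correct; only the attribution of where the time is spent needs to be fixed.
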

  \begin{proof}
    For each $e\in E$, we can determine whether $e$ is core point or not when running Algorithm \ref{exp_kcenter} without introducing additional time complexity, since we have the distance between any pair of $e\in E$ and $p\in X$. 
    Hence, in the step of constructing $\mathcal{S}_*$, if $e$ is a core point, it takes $O(1)$ time to add $e$ to $\mathcal{S}_*$.

    If $e$ is not core point, we need to compute $|\mathbb{B}(p, \epsilon)|$ for every $p\in \mathcal{C}_{e}$.
   Based on Lemma~\ref{neighbor_lemma}, we only need to consider the set  $\cup_{{e'}\in \mathcal{A}_e} \mathcal{C}_{e'}$.
    In this case, determining whether $p$ is a core point takes $O(\sum_{e'\in \mathcal{A}_e} |\mathcal{C}_{e'}| t_\mathtt{dis})$ time.
    
   So the time complexity of constructing $\mathcal{S}_*$ (denoted by $t_c$) is
   \begin{equation}
     \begin{aligned}
       t_c&=\sum_{e \text{ is core point}}O(1) + \sum_{e \text{ is not core point}} \sum_{p\in \mathcal{C}_e} \sum_{e'\in \mathcal{A}_e} |\mathcal{C}_{e'}| t_\mathtt{dis} 
     \end{aligned}
    \end{equation}
 { Similar with the idea for proving lemma~\ref{ce_num}, we know for a non-core point $e\in E$, $|\mathcal{C}_e| \le MinPts$.} As a consequence, 
    \begin{equation}
      \begin{aligned}
     t_c&= O(|E|) + O(MinPts) \cdot \sum_{e \text{ is not core point}} \sum_{e'\in \mathcal{A}_e} |\mathcal{C}_{e'}|t_\mathtt{dis}.
                 \end{aligned}
            \end{equation}
We then exchange the summation order in terms of $e$ and $e'$ {(as the equation (\ref{exchange}))} to have
            \begin{equation}
            \begin{aligned}
     \sum_{e \text{ is not core point}} \sum_{e'\in \mathcal{A}_e} |\mathcal{C}_{e'}|t_\mathtt{dis}&=  \sum_{e'\in E} \sum_{\substack{e \in \mathcal{A}_{e'} \\ e \text{ is not core point}}} |\mathcal{C}_{e'}| t_\mathtt{dis}.
                      \end{aligned}
            \end{equation}
Note that $|\mathcal{C}_{e'}| t_\mathtt{dis}$ is independent with $e$, so we have
            \begin{equation}
            \begin{aligned}
     t_c&=O(|E|) + O(MinPts) \cdot \sum_{e'} O(|\mathcal{A}_{e'}|) |\mathcal{C}_{e'}|t_\mathtt{dis}\\
     &=O(n) + O(MinPts \cdot ((\frac{1}{\rho})^D + z)) \sum_{e'} |\mathcal{C}_{e'}|t_\mathtt{dis}\\
     &=O(((\frac{1}{\rho})^D + z)nt_\mathtt{dis}),
                           \end{aligned}
            \end{equation}
            from the facts    $\sum_{e'} |\mathcal{C}_{e'}| = O(n)$ and {$|\mathcal{A}_{e'}| = O((\frac{1}{\rho})^D + z)$ in (\ref{approx_ap-new})}. 
  \end{proof}

   \begin{lemma}
    \label{step2}
   The step of merging inside $\mathcal{S}_*$ needs  $O\Big{(}\big{(}(\frac{\Delta}{\rho \epsilon})^D + z\big{)}((\frac{1}{\rho})^D + z)t_{\mathtt{dis}}\Big{)}$ time.
   \end{lemma}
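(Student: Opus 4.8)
The plan is to bound the cost of the ``Merge inside $\mathcal{S}_*$'' step, which brute-forces over all pairs $c_1, c_2 \in \mathcal{S}_*$ and tests whether $\mathtt{dis}(c_1, c_2) \le (1+\rho)\epsilon$. Naively this would be $O(|\mathcal{S}_*|^2 \cdot t_{\mathtt{dis}})$, but $|\mathcal{S}_*|^2 = O\big((\frac{\Delta}{\rho\epsilon})^D + z\big)^2$ is not the claimed bound. The key observation is that we do not actually need to examine every pair: by Lemma~\ref{neighbor_lemma} (which still holds under the enlarged $\mathcal{A}_p$ of~(\ref{for-ap-new})) and the triangle inequality, if $\mathtt{dis}(c_1, c_2) \le (1+\rho)\epsilon$ then $c_{c_2} \in \mathcal{A}_{c_1}$ — that is, only summary points whose associated centers are ``neighbor ball centers'' of each other can possibly be merged. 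So I would reorganize the double loop as: for each $c_1 \in \mathcal{S}_*$, iterate only over $e \in \mathcal{A}_{c_{c_1}}$, and within each such $\mathcal{C}_e$ iterate over the at most $MinPts$ points of $\mathcal{C}_e \cap \mathcal{S}_*$ (Lemma~\ref{ce_num}).

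The counting then goes as follows. First I would write the total work as
\begin{equation}
\sum_{c_1 \in \mathcal{S}_*} \sum_{e \in \mathcal{A}_{c_{c_1}}} \sum_{c_2 \in \mathcal{C}_e \cap \mathcal{S}_*} O(t_{\mathtt{dis}}).
\end{equation}
By Lemma~\ref{ce_num} the innermost sum has at most $MinPts = O(1)$ terms, and by~(\ref{approx_ap-new}) each $|\mathcal{A}_{c_{c_1}}| = O((\frac{1}{\rho})^D + z)$. Hence the total is $O\big(|\mathcal{S}_*| \cdot ((\frac{1}{\rho})^D + z) \cdot MinPts \cdot t_{\mathtt{dis}}\big)$. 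Plugging in $|\mathcal{S}_*| = O\big((\frac{\Delta}{\rho\epsilon})^D + z\big)$ from Lemma~\ref{s_num} and absorbing the constant $MinPts$ (Remark~\ref{rem-minpts}), this is exactly $O\Big(\big((\frac{\Delta}{\rho\epsilon})^D + z\big)((\frac{1}{\rho})^D + z)t_{\mathtt{dis}}\Big)$, the claimed bound.

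There is one subtlety I would make sure to address: I must verify that this restricted enumeration does not miss any pair that ought to be merged, i.e., that correctness (already argued in Theorem~\ref{dbscan_correctness}) is compatible with the efficient implementation. Concretely, if $c_1, c_2 \in \mathcal{S}_*$ with $\mathtt{dis}(c_1,c_2) \le (1+\rho)\epsilon$, then since $\bar{r} = \frac{\rho\epsilon}{2}$ we have $\mathtt{dis}(c_{c_1}, c_{c_2}) \le \mathtt{dis}(c_{c_1}, c_1) + \mathtt{dis}(c_1, c_2) + \mathtt{dis}(c_2, c_{c_2}) \le \bar{r} + (1+\rho)\epsilon + \bar{r} = (1+2\rho)\epsilon \le 4\bar{r} + \epsilon$ using $\bar{r} = \frac{\rho\epsilon}{2}$, so indeed $c_{c_2} \in \mathcal{A}_{c_{c_1}}$ per the definition~(\ref{for-ap-new}); this is precisely why the enlargement from $2\bar{r}+\epsilon$ to $4\bar{r}+\epsilon$ was needed. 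I expect this triangle-inequality bookkeeping — pinning down the exact constants so that the merge threshold $(1+\rho)\epsilon$ fits inside the radius $4\bar{r}+\epsilon$ of $\mathcal{A}_p$ — to be the main (though modest) obstacle; the rest is the same amortization/order-exchange argument already used in Lemma~\ref{lem-step1} and Lemma~\ref{step1}.
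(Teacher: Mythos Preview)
Your proposal is correct and follows essentially the same approach as the paper: restrict the pairwise check to summary points whose centers lie in each other's enlarged neighbor set $\mathcal{A}_p$ (justified by the triangle-inequality computation $\mathtt{dis}(c_{c_1},c_{c_2})\le(1+2\rho)\epsilon\le 4\bar r+\epsilon$), then bound the inner loop by $|\mathcal{C}_e\cap\mathcal{S}_*|\le MinPts$ (Lemma~\ref{ce_num}), the middle loop by $|\mathcal{A}_p|=O((1/\rho)^D+z)$ (equation~(\ref{approx_ap-new})), and the outer loop by $|\mathcal{S}_*|$ (Lemma~\ref{s_num}). One small note: no order-exchange is actually needed here (unlike in Lemma~\ref{lem-step1}); the bound follows directly once the three sizes are plugged in, and your write-up already reflects that.
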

   \begin{proof}
{  In Algorithm \ref{fa_dbscan}, we set $\bar{r} = \frac{\rho \epsilon}{2}$ for Algorithm \ref{exp_kcenter}. Hence, for any pair of $s_1, s_2\in \mathcal{S}_*$ whose pairwise  distance 
  $\mathtt{dis}(s_1,s_2) \le (1+\rho)\epsilon$, we have $\mathtt{dis}(c_{s_1},c_{s_2}) \le \mathtt{dis}(s_1,s_2) + \mathtt{dis}(c_{s_1},s_1)+\mathtt{dis}(c_{s_2},s_2) \le (1+2\rho)\epsilon$. According to (\ref{for-ap-new}), we have $s_1\in \cup_{e\in \mathcal{A}_{s_2}} \mathcal{C}_e $ and vise versa. 
 Therefore, when we merge inside $\mathcal{S}_*$, for every $s\in \mathcal{S}_*$, we only need to search inside $ \big(\cup_{e\in \mathcal{A}_s} \mathcal{C}_e\big) \cap \mathcal{S}_*$ for the points who have the distance to $s$ no larger than
   $ (1+\rho)\epsilon$. }
   Hence, the time complexity for merging insider $S_*$ is 
   \begin{equation}
    \begin{aligned}
      t_m&=\sum_{s\in \mathcal{S}_*} \sum_{e\in \mathcal{A}_s} |\mathcal{C}_e \cap \mathcal{S}_*| t_{\mathtt{dis}}.
    \end{aligned}
  \end{equation}

 According to Lemma \ref{ce_num}, $|\mathcal{C}_e \cap \mathcal{S}_*| \le MinPts$ and thus
  \begin{equation}
    \begin{aligned}
      t_m&=\sum_{s\in \mathcal{S}_*}\! \sum_{e\in \mathcal{A}_s}\!\! O(MinPts) t_{\mathtt{dis}}
      =\sum_{s\in \mathcal{S}_*}  O(|\mathcal{A}_s| \cdot MinPts) t_{\mathtt{dis}}.
      \end{aligned}
    \end{equation}
  {   Based on (\ref{approx_ap-new})} we have
    \begin{equation}
      \begin{aligned}
      t_m&=\sum_{s\in \mathcal{S}_*} O\big(((\frac{1}{\rho})^D + z) \cdot MinPts\big) t_{\mathtt{dis}}\\
      &=O(|\mathcal{S}_*|) \cdot O(((\frac{1}{\rho})^D + z) \cdot MinPts)t_{\mathtt{dis}}\\
      &=O\Big{(}\big{(}(\frac{\Delta}{\rho \epsilon})^D + z\big{)}((\frac{1}{\rho})^D + z)t_{\mathtt{dis}}\Big{)},
    \end{aligned}
   \end{equation}
  { where the last equality follows from Lemma~\ref{s_num}. }
  \end{proof}

  \begin{lemma}
  \label{label_others_app}
      The step of labeling other points (borders and outliers) needs $n((\frac{1}{\rho})^D + z)t_\mathtt{dis}$ time.
  \end{lemma}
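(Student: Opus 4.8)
The plan is to bound the cost of the ``Label other points'' loop (lines~\ref{label_other_points}--\ref{end_label_other_points} of Algorithm~\ref{fa_dbscan}) by the same amortization argument that was used in Lemma~\ref{lem-step1} and Lemma~\ref{step1}. For each point $p\in X\setminus\mathcal{S}_*$, the algorithm first looks at $c_p$; testing whether $c_p\in\mathcal{S}_*$ and, if so, copying its cluster ID costs $O(1)$ (no distance evaluations), so the total contribution of this branch over all $p$ is $O(n)$. The real work is the \textbf{else} branch, where we must decide whether there is some $s\in\mathcal{S}_*$ with $\mathtt{dis}(p,s)\le(\frac{\rho}{2}+1)\epsilon$. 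First I would observe, via the triangle inequality exactly as in the proof of Lemma~\ref{step2}, that any such $s$ must satisfy $\mathtt{dis}(c_s,c_p)\le(\frac{\rho}{2}+1)\epsilon+2\bar r = (1+\frac{3\rho}{2})\epsilon \le 4\bar r+\epsilon$ (using $\bar r=\frac{\rho\epsilon}{2}$ and $\rho\le 2$ so that $\frac{3\rho}{2}\epsilon = 3\bar r \le 4\bar r$), hence $s\in\bigcup_{e\in\mathcal{A}_p}\mathcal{C}_e$ with $\mathcal{A}_p$ as redefined in~(\ref{for-ap-new}). So for a point $p$ in this branch it suffices to scan $\bigl(\bigcup_{e\in\mathcal{A}_p}\mathcal{C}_e\bigr)\cap\mathcal{S}_*$, which costs $O\bigl(\sum_{e\in\mathcal{A}_p}|\mathcal{C}_e\cap\mathcal{S}_*|\cdot t_\mathtt{dis}\bigr)$.

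Next I would sum this over all $p$ lying in the \textbf{else} branch. A point $p$ reaches that branch only when $c_p\notin\mathcal{S}_*$, which by the construction of $\mathcal{S}_*$ means $c_p$ is not a core point; hence, as in the proof of Lemma~\ref{step1}, $p$ lies in some $\mathcal{C}_e$ with $e$ non-core, and for such $e$ we have $|\mathcal{C}_e|\le MinPts$. Therefore the total cost of the branch is
\begin{equation}
\sum_{e\text{ not core}}\;\sum_{p\in\mathcal{C}_e}\;\sum_{e'\in\mathcal{A}_p}|\mathcal{C}_{e'}\cap\mathcal{S}_*|\,t_\mathtt{dis}
\;=\;O(MinPts)\sum_{e\text{ not core}}\;\sum_{e'\in\mathcal{A}_e}|\mathcal{C}_{e'}\cap\mathcal{S}_*|\,t_\mathtt{dis},
\end{equation}
where I used that $\mathcal{A}_p$ depends only on $c_p$ (all $p\in\mathcal{C}_e$ share the same $\mathcal{A}_e$). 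Now exchange the order of summation over $e$ and $e'$ using the symmetry $e\in\mathcal{A}_{e'}\iff e'\in\mathcal{A}_e$, so that $|\mathcal{C}_{e'}\cap\mathcal{S}_*|\,t_\mathtt{dis}$ is pulled out and multiplied by $|\mathcal{A}_{e'}|=O((\frac1\rho)^D+z)$ from~(\ref{approx_ap-new}); applying Lemma~\ref{ce_num} ($|\mathcal{C}_{e'}\cap\mathcal{S}_*|\le MinPts$) and $|E|=O((\frac{\Delta}{\rho\epsilon})^D+z)$ gives $O\bigl(MinPts^2\cdot((\frac1\rho)^D+z)|E|\,t_\mathtt{dis}\bigr)$. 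A cruder but sufficient bound is obtained by instead noting $\sum_{e'}|\mathcal{C}_{e'}|=O(n)$ and bounding $|\mathcal{C}_{e'}\cap\mathcal{S}_*|\le|\mathcal{C}_{e'}|$, which yields $O\bigl(MinPts\cdot((\frac1\rho)^D+z)\,n\,t_\mathtt{dis}\bigr)=O\bigl(((\frac1\rho)^D+z)\,n\,t_\mathtt{dis}\bigr)$ under the constant-$MinPts$ convention of Remark~\ref{rem-minpts}. Adding the $O(n)$ cost of the $c_p\in\mathcal{S}_*$ branch does not change this, so the total is $O\bigl(n((\frac1\rho)^D+z)\,t_\mathtt{dis}\bigr)$ as claimed.

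The one point needing slight care — and the part I expect to be the only real obstacle — is justifying that a nearest qualifying $s\in\mathcal{S}_*$ is always found within $\bigcup_{e\in\mathcal{A}_p}\mathcal{C}_e$, i.e. that the enlargement of $\mathcal{A}_p$ from $2\bar r+\epsilon$ in~(\ref{for-ap}) to $4\bar r+\epsilon$ in~(\ref{for-ap-new}) is exactly what is required to cover the radius $(\frac\rho2+1)\epsilon$ search used in the labeling step. This is purely a triangle-inequality bookkeeping check: $\mathtt{dis}(c_p,c_s)\le\mathtt{dis}(c_p,p)+\mathtt{dis}(p,s)+\mathtt{dis}(s,c_s)\le \bar r+(\tfrac\rho2+1)\epsilon+\bar r = (\tfrac\rho2+1)\epsilon+2\bar r$, and since $\bar r=\tfrac{\rho\epsilon}{2}$ this equals $(1+\tfrac{3\rho}{2})\epsilon = \epsilon + 3\bar r \le \epsilon+4\bar r$, so indeed $c_s$ (hence $s$) lies in a cover set indexed by $\mathcal{A}_p$ as defined in~(\ref{for-ap-new}). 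Everything else is a verbatim repetition of the amortization machinery already established, so no new idea is needed beyond this constant-tuning verification.
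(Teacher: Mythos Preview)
Your proof is correct and follows essentially the same amortization route as the paper, which simply says the result ``can be simply obtained from the proof of Lemma~\ref{lem-step3} by substituting $|\mathcal{A}_p|=O(z)$ with $|\mathcal{A}_p|=O((\tfrac{1}{\rho})^D+z)$.'' You have in fact supplied the algorithm-specific details that the paper's one-liner glosses over: the triangle-inequality check that the $(\tfrac{\rho}{2}+1)\epsilon$ search radius is covered by the enlarged $\mathcal{A}_p$ of~(\ref{for-ap-new}), and the observation that $c_p\notin\mathcal{S}_*$ forces $c_p$ to be non-core (hence $|\mathcal{C}_{c_p}|\le MinPts$), which is the analogue of the $E_2$ restriction used in Lemma~\ref{lem-step3}.
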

  Lemma~\ref{label_others_app} can be simply obtained from the proof of Lemma \ref{lem-step3} by substituting $|\mathcal{A}_p| = O(z)$ with $|\mathcal{A}_p| = ((\frac{1}{\rho})^D + z)$.

  { 
  Recall the time complexity of Algorithm \ref{exp_kcenter} with  $\bar{r} = \frac{\rho \epsilon}{2}$ is $O\Big{(}n$ $\big{(}(\frac{\Delta}{\rho \epsilon})^D + z\big{)} t_{\mathtt{dis}}\Big{)}$. 
  Overall, through combining Lemma~\ref{step1} - Lemma~\ref{label_others_app}, we obtain the complexity of Theorem~\ref{dbscan_com}.}

\begin{remark}
\label{rem-rho}
{  
} 
 Similar with the discussion in Remark~\ref{rem-r}, we do not need to run Algorithm~\ref{exp_kcenter} repeatedly for  $\rho$-approximate DBSCAN  when tuning the parameters in practice.
\end{remark}






\subsection{Implementation for Streaming Data}
\label{stream_dbscan}
We further consider designing streaming algorithm for $\rho$-approximate DBSCAN 
with the space complexity being independent of the size of $X$. 
The major challenge for designing a streaming  algorithm is that  the number of core points can be as large as  $O(n)$, which may result in large memory usage. Fortunately, we show that our proposed core point summary technique in Section~\ref{sec-summary} can help us to neatly circumvent this issue. 

\textbf{Sketch of our algorithm.} We implement Algorithm~\ref{fa_dbscan} in a ``streaming'' fashion. Our algorithm contains three stages. \textbf{Stage~$1$}: 
{
we apply a streaming incremental technique instead of directly running the radius-guided Gonzalez's algorithm. {Our main idea} is trying to assign every point in the data stream into existing balls within radius $\bar{r}$; if fail, we just build a new ball centered at this point. 
}
In this stage, we aim to obtain the ball centers $E$.
At the same time, we construct part of the summary $\mathcal{S}_*$ by adding the core points in $E$ into it. {We also need to keep a set {$\mathcal{M}=\{p \in X\mid \text{$c_p$ is not core point}\}$. Intuitively, the purpose of keeping this set is to identify the potential core points in next stage.}
\textbf{Stage $2$}: we identify the other {necessary core points that should be added to $S_*$}. {Namely, we add $p\in \mathcal{M}$ to $\mathcal{S}_*$ if $p$ is a core point.
This ensures the completeness of  the constructed summary $S_*$. }} 
Then, we can perform the merge on $\mathcal{S}_*$ offline in our memory. \textbf{Stage~$3$}: we scan $X$ in the final pass to assign other points {(as line \ref{label_other_points}-\ref{end_label_other_points} in Algorithm~\ref{fa_dbscan})}. Overall, we have the following Theorem~\ref{streaming_com}.
\begin{algorithm}[ht]

  \SetAlgoLined 
  \caption{\sc{Streaming $\rho$ -Approximate DBSCAN}}
  \label{sa_dbscan}
  \KwIn {$\epsilon$, $MinPts$, $X$, $\rho$}
  Initialize $E=\emptyset$, $\mathcal{M}=\emptyset$ and $\mathcal{S}_* = \emptyset$. 
  
  \textbf{The first pass:} 
  Scan $X$ , \For{every $p\in X$}{

    \If{there is no $e\in E$ such that $\mathtt{dis}(p, e) \le \bar{r}=\frac{\rho \epsilon}{2}$}
    {
        Add $p$ to $E$.
    }
        
            
            
        

    \For{every $e\in E$}{

    \uIf {$\mathbb{B}(e,\epsilon)$ has detected at least $MinPts$ points
  } 
  {$\mathcal{S}_* \leftarrow \mathcal{S}_* \cup \{e\}$.}
  \ElseIf{$\mathtt{dis}(p, e) \le \bar{r}=\frac{\rho \epsilon}{2}$} {$\mathcal{M} \leftarrow \mathcal{M} \cup \{p\}$.}
    }

  }
  
  {
  \textbf{The second pass:} While scanning $X$, identify the core points from $\mathcal{M}$ and add them to $\mathcal{S}_*$.

  }
  Merge inside $\mathcal{S}_*$ offline {as line \ref{merge_core_points} of Algorithm~\ref{fa_dbscan}.}

  \textbf{The third pass:} {Label the border points and outliers as line \ref{label_other_points}-\ref{end_label_other_points} of Algorithm~\ref{fa_dbscan}.}
\end{algorithm}
\vspace{-5pt}

\begin{theorem}
  \label{streaming_com}
   Let $\rho \le 2$, our streaming algorithm takes an $O\big{(}(\frac{\Delta}{\rho \epsilon})^D + z\big{)}$ memory usage (which is independent of $n$) and 
    an $O\Big{(}n\big{(}(\frac{\Delta}{\rho \epsilon})^D + z\big{)}\cdot t_\mathtt{dis}\Big{)}$ overall running time. 
\end{theorem}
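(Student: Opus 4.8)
The plan is to account separately for the three passes of Algorithm~\ref{sa_dbscan} together with the offline merge step, and to show that each respects the claimed memory and time bounds. For the memory bound, the key observation is that at the end of the first pass $E$ is precisely the center set produced by the (streaming version of the) radius-guided Gonzalez procedure with $\bar r=\rho\epsilon/2$, so Lemma~\ref{kcenter_iteration} gives $|E|=O\big((\frac{\Delta}{\rho\epsilon})^D+z\big)$. By Lemma~\ref{s_num}, $|\mathcal{S}_*|$ has the same asymptotic size, so the summary can be held in memory within the stated bound. The one subtlety is the auxiliary set $\mathcal{M}$, which could a priori be as large as $n$; so the first thing I would do is argue that we never need to store $\mathcal{M}$ explicitly --- instead, for each non-dense center $e$ we only need to remember that $e$ is ``pending'', and in the second pass we re-scan $X$ and, for each incoming point $p$, check whether $c_p$ is such a pending center and whether $p$ itself is a core point (using the $\mathcal{A}_{c_p}$ neighborhood as in Lemma~\ref{neighbor_lemma}); the points we actually add to $\mathcal{S}_*$ in the second pass are by Lemma~\ref{ce_num} at most $MinPts$ per center, hence again $O\big((\frac{\Delta}{\rho\epsilon})^D+z\big)$ in total. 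This shows the working memory is $O\big((\frac{\Delta}{\rho\epsilon})^D+z\big)$ and independent of $n$, modulo the $O(1)$ space for the current stream element.

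For the running time, I would bound each pass. The first pass processes each of the $n$ points once; for a point $p$ we compare it against all current centers (cost $O(|E|\,t_\mathtt{dis})$) and we update, for each center $e$, whether $\mathbb{B}(e,\epsilon)$ has already accumulated $MinPts$ witnesses --- but this second check only needs to be done against centers within distance $4\bar r+\epsilon$ of $c_p$, i.e.\ against $\mathcal{A}_{c_p}$, whose size is $O\big((\frac1\rho)^D+z\big)$ by (\ref{approx_ap-new}). So the first pass costs $O\big(n(|E|+((\frac1\rho)^D+z))t_\mathtt{dis}\big)=O\big(n((\frac{\Delta}{\rho\epsilon})^D+z)t_\mathtt{dis}\big)$ since $|E|$ dominates. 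The second pass is an amortized argument identical in spirit to Lemma~\ref{step1}: for each point $p$ whose center is a non-dense $e$, deciding whether $p$ is a core point costs $O\big(\sum_{e'\in\mathcal{A}_e}|\mathcal{C}_{e'}|t_\mathtt{dis}\big)$; exchanging the order of summation over $(e,e')$ exactly as in (\ref{exchange})--(\ref{exchange-2}) and using $|\mathcal{A}_{e'}|=O\big((\frac1\rho)^D+z\big)$ and $\sum_{e'}|\mathcal{C}_{e'}|=O(n)$ yields $O\big(((\frac1\rho)^D+z)\,n\,t_\mathtt{dis}\big)$. The offline merge on $\mathcal{S}_*$ is bounded by Lemma~\ref{step2}, which is $O\big(((\frac{\Delta}{\rho\epsilon})^D+z)((\frac1\rho)^D+z)t_\mathtt{dis}\big)$ and is absorbed into $O\big(n((\frac{\Delta}{\rho\epsilon})^D+z)t_\mathtt{dis}\big)$ provided $|\mathcal{S}_*|=O(n)$, which holds. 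The third pass is bounded by Lemma~\ref{label_others_app}, giving $O\big(n((\frac1\rho)^D+z)t_\mathtt{dis}\big)$. Summing, the overall running time is $O\big(n((\frac{\Delta}{\rho\epsilon})^D+z)t_\mathtt{dis}\big)$, matching Theorem~\ref{dbscan_com}.

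The last thing to verify is correctness of the streaming implementation, i.e.\ that Algorithm~\ref{sa_dbscan} produces the same output as Algorithm~\ref{fa_dbscan}. Here the only point that needs care is that the incremental ball-assignment in Stage~$1$ really does produce a valid $\bar r$-net with the separation property $\min_{e_1,e_2\in E}\mathtt{dis}(e_1,e_2)\ge\bar r$ used throughout Section~\ref{gonzalez}: a new center is created only when the current point is at distance $>\bar r$ from every existing center, which gives exactly that separation, and every point ends up within $\bar r$ of some center, giving the covering property; this is enough for Lemmas~\ref{neighbor_lemma}, \ref{Ap_num} and \ref{ce_num} to go through verbatim (we do not need the Gonzalez ``farthest-point'' guarantee, only the net property). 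Given that, the core-point labels, the summary $\mathcal{S}_*$, and the cluster IDs computed by the three passes coincide with those of Algorithm~\ref{fa_dbscan}, so Theorem~\ref{dbscan_correctness} transfers.

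I expect the main obstacle to be the memory accounting for the intermediate structures --- specifically, arguing rigorously that one can avoid materializing $\mathcal{M}$ (or any other $\Theta(n)$-sized object) while still being able, in the second pass, to detect exactly the core points among $\{p: c_p\text{ is a non-dense center}\}$; everything else is a routine repackaging of the amortized-complexity arguments already established for Algorithm~\ref{fa_dbscan}.
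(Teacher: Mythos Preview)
Your time-complexity accounting and the correctness argument (that the incremental construction yields a genuine $\bar r$-net, so Lemmas~\ref{neighbor_lemma}--\ref{ce_num} carry over) are sound and in fact more detailed than the paper's own sketch. The one place you diverge from the paper is the memory bound for $\mathcal{M}$, and there you have both overcomplicated things and introduced a small gap.

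The paper does \emph{not} avoid materializing $\mathcal{M}$; it simply bounds $|\mathcal{M}|$ directly. The observation you missed is that in Stage~1 a point $p$ is added to $\mathcal{M}$ only through a center $e$ with $\mathtt{dis}(p,e)\le\bar r$ for which $\mathbb{B}(e,\epsilon)$ has not yet accumulated $MinPts$ witnesses. Since $\rho\le 2$ gives $\bar r=\rho\epsilon/2\le\epsilon$, every such $p$ is itself one of those witnesses; hence each center $e$ can cause at most $MinPts$ insertions into $\mathcal{M}$ before it is promoted to $\mathcal{S}_*$. This yields $|\mathcal{M}|=O(MinPts\cdot|E|)=O\big((\frac{\Delta}{\rho\epsilon})^D+z\big)$ immediately, and then $\mathcal{S}_*\subseteq E\cup\mathcal{M}$ gives the memory bound with no further work.

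Your proposed workaround (mark centers ``pending'' and, in the second pass, decide for each arriving $p$ with pending $c_p$ whether $p$ is core) does not actually close in one pass: deciding core-point status for $p$ requires the count $|\mathbb{B}(p,\epsilon)\cap X|$, and you cannot accumulate that count against stream elements that arrived \emph{before} you knew $p$ was a candidate. You would need to first buffer the candidates and then run another pass to count their neighbors---which is exactly what storing $\mathcal{M}$ accomplishes, and the paper's bound shows that buffer is small. So drop the workaround and use the direct $|\mathcal{M}|=O(MinPts\cdot|E|)$ argument; everything else in your write-up then goes through.
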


\textbf{Sketch of the proof.} By using the similar idea of Lemma \ref{kcenter_iteration}, we can conclude that $O(|E|)=O\big{(}(\frac{\Delta}{\rho \epsilon})^D + z\big{)}$. And from the process of building $\mathcal{M}$, we know that $|\mathcal{M}|=O(MinPts\cdot|E|)=O(|E|)$. The elements in the set $\mathcal{S}_*$ are selected from $E$ and $\mathcal{M}$, so $\mathcal{S}_*$ does not occupy any extra space. Hence the overall memory usage is $ O(|E|+|\mathcal{M}|)=O\big{(}(\frac{\Delta}{\rho \epsilon})^D + z\big{)}$. Further, we can show that Algorithm \ref{sa_dbscan} can be completed in $O(n|E|t_\mathtt{dis})$ time, {\em i.e.,} its time complexity is $O\Big{(}n\big{(}(\frac{\Delta}{\rho \epsilon})^D + z\big{)}\cdot t_\mathtt{dis}\Big{)}$.

\section{Experiments}
\label{sec-exp}

\begin{table}[!ht] \small
\caption{Datasets}
    \label{dataset}
    \centering
    \begin{tabular}{cccc}
    \toprule
        \textbf{Dataset} & \textbf{Dimension} & \textbf{n} \\ \cmidrule(r){1-3}
        Moons~\cite{sklearn} & 2 & 10,000 \\ 
        Cancer\citep{Dua:2019} & 32 & 569 \\ 
        Arrhythmia\citep{guvenir1997supervised}  & 262 & 452 \\
        Biodeg\citep{mansouri2013quantitative}  & 41  & 1,055 \\
        MNIST
        \tablefootnote{
To enhance the density of high dimensional data in the space, 
we uniformly sampled 1000 data points from the original dataset and then duplicated them 10 times  with adding  {random noise in the range of $[-5, 5]$ for each dimension}. We perform the same operation to process CIFAR 10, USPS HW, and Fashion MNIST.}
\cite{lecun1998gradient} 
& 784 & 10000 \\
Fashion MNIST \cite{fashion_MNIST} & 784 & 10,000 \\ 
USPS HW\cite{hull1994database} & 256 & 10,000 \\ 
CIFAR 10\citep{cifar10} & 3072 & 10,000 \\ 
{
\color{black}
DEEP1B~\cite{babenko2016efficient}} & 96 & 9,990,000 \\
{\color{black} GIST \cite{jegou2010product}} & 960 & 1.000.000 \\ 
\color{black} GloVe25~\cite{pennington2014GloVe} & 25 & 1,183,514 \\ 
\color{black} SIFT\cite{jegou2010product} & 128 & 1,000,000 \\ 
\color{black}PCAM\cite{veeling2018rotation} & 1024 & 2,493,440 \\
\color{black}Spotify\_Session\cite{brost2019music} & 21 & 2,072,002,577 \\
\color{black}LSUN\cite{yu15lsun} & 1024 & 2,943,300 \\
COLA \cite{COLA} & n/a &515 \\ 
AG News \cite{ag_news} & n/a &7,600 \\ 
MRPC \cite{MRPC} & n/a &1,725 \\ 
MNLI \cite{MNLI} & n/a &9,815 \\

        
\bottomrule 
    \end{tabular}
\end{table}

        


\begin{figure*}[ht]
    \centering
    \includegraphics[width = 0.9\textwidth]{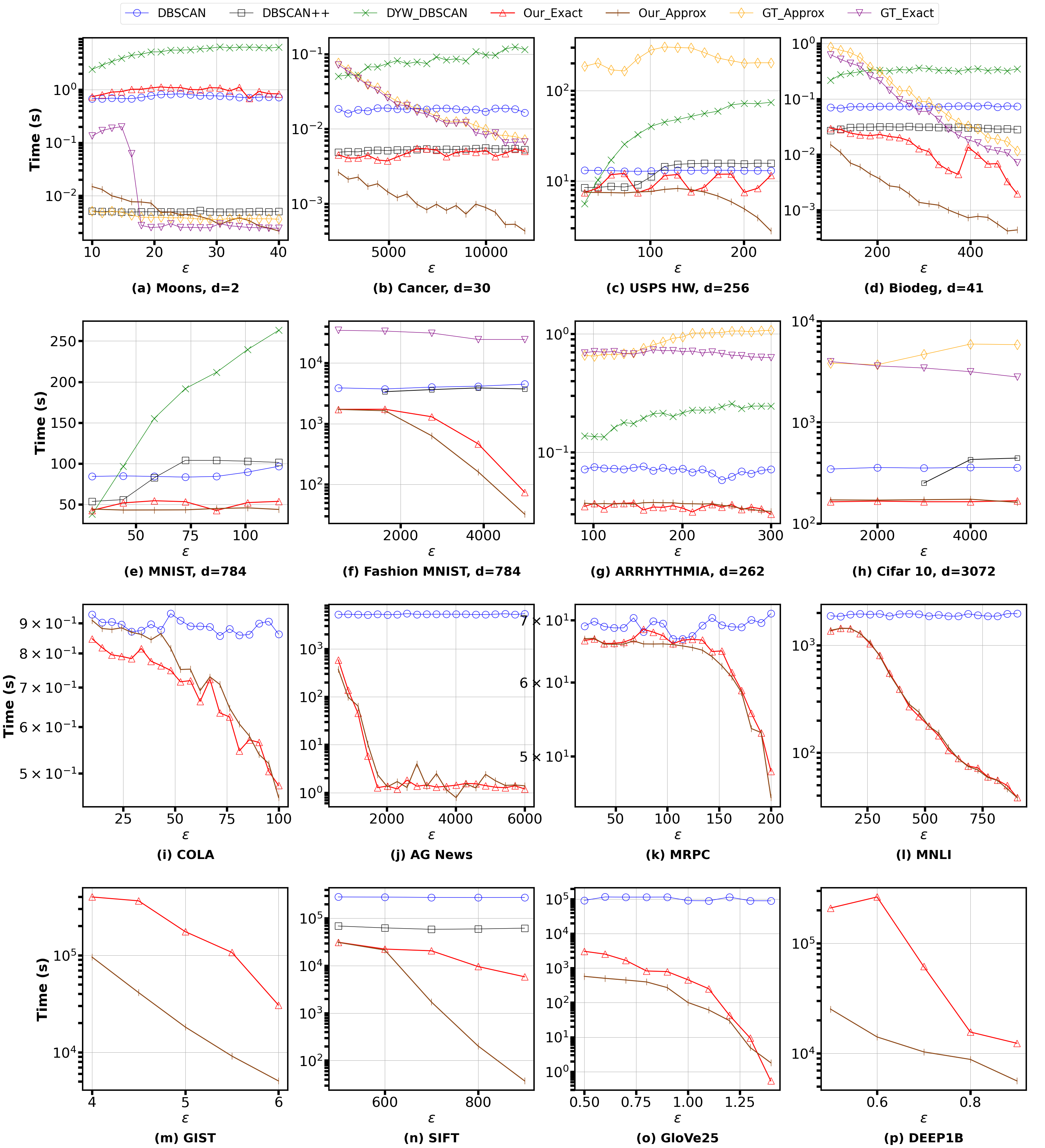}
    \caption{{Running time with varying $\epsilon$.} Some baseline algorithms are not tested in some figures,  because either they run too slowly ($>10^6$s) on the high-dimensional data, or they cannot run on the non-Euclidean data.\protect\footnotemark}
    \label{approx_time}
\end{figure*}

\subsection{Experimental environment and datasets}
\label{sec-data}

Our experiments were conducted on a server equipped with Intel(R) Xeon(R) Gold 6154 CPU @ 3.00GHz CPU and 512GB memory. We implement our algorithms in C++.

The datasets are shown in Table \ref{dataset} that includes both low dimensional and high dimensional datasets. 
In particular, to demonstrate the performance of our algorithm on general metric space, we also consider some non-Euclidean datasets, which include 4 real-world text datasets: {\textbf{AG News}~\cite{ag_news}, \textbf{COLA}~\cite{COLA}, \textbf{MNLI}~\cite{MNLI}, \textbf{MRPC}\cite{MRPC}. We use edit distance\cite{edit} to measure the distance between texts.} 
{\color{black} As shown in the table, we also take several million-scale datasets to evaluate the performance.}
In our experiments, each test instance is repeated 5 times and we report the average result.

\subsection{Comparison on Running time}
\label{sec-comptime}

{We compare the running time with several existing DBSCAN algorithms. In this section, we fix the parameters $MinPts=10$ and $\rho=0.5$, and adjust $\epsilon$ within a reasonable range for each dataset 
(more detailed discussion on $\rho$ is shown in Section \ref{sec-ari}). 
}

We compare the running time of our exact metric DBSCAN  (denoted by Our\_Exact) and approximate DBSCAN  (denoted by Our\_Approx) with the following algorithms: the original DBSCAN \citep{ester1996density}  (denoted by DBSCAN), DBSCAN++\cite{jang2019dbscan++},  { the metric DBSCAN  \citep{ding2021metric} (denoted by DYW\_DBSCAN)}, the exact and approximate DBSCAN \citep{gan2015dbscan} (denoted by GT\_Exact and GT\_Approx, respectively). {
Note that the clustering performance of the DBSCAN++ algorithm 
depends on the sampling ratio; we choose the ratio to be $30\%$ as suggested in their paper for achieving promising clustering results. 
}

We show the running time curves in Figure \ref{approx_time}. 
The datasets given in Table \ref{dataset} can be categorized into four classes corresponding to the four rows of Figure \ref{approx_time} from top to bottom: {4 low/medium-dimensional datasets}, 4 high-dimensional datasets , 4 text datasets, {\color{black}and 4 million-scale datasets. }
Overall, our exact and approximate algorithms can achieve much lower running time comparing with the baselines, especially for large high-dimensional and non-Euclidean  datasets. 
 {\color{black}For example, for the largest two datasets GIST and DEEP1B, only our algorithms can complete within $10^6$s (about 12 days) on our workstation;} for the dataset CIFAR 10, our algorithm runs in almost half the time of the original DBSCAN, and about 1/10 of GT\_Exact and GT\_Approx;  for the text dataset AG News, our algorithm takes less than 1\% of the time of the  original DBSCAN. Also,   our proposed approximate DBSCAN algorithm is   comparable or faster than our exact DBSCAN algorithm  in the experiments.

\subsection{{Discussion on clustering quality with \texorpdfstring{$\rho$}{}}} 
\label{sec-ari}


In this section, we discuss the impact of $\rho$ on the clustering quality. We run our approximation algorithm with a fixed $\epsilon$ and different values of $\rho$ and compare their clustering performance with the exact algorithm. 
{We choose four datasets (MNIST, USPS HW, Fashion MNIST, and CIFAR 10), and we take the widely used measures \emph{Adjusted Rand Index} (ARI)~\citep{hubert1985comparing} and \emph{Adjusted Mutual Information} (AMI)~\citep{vinh2009information}  to  evaluate their performances (higher values for ARI and AMI indicate better clustering quality). \footnotetext{In the version published at the SIGMOD '24 conference, we identified an error in the coding of the program, which resulted in incorrect displays of subfigures (i), (j), (k), and (l) in Figure 3. We have corrected this error in the current version of Figure 3. We apologize for this oversight.}

From \label{rho-start} Figure \ref{rect}  we can observe that when $\rho$ is set to 0.5, our approximate algorithm can achieve  similar clustering qualities with the exact algorithm for most instances. As an illustration, Figure~\ref{perform} (a)-(d) show two examples of the clustering results of exact DBSCAN and our approximate algorithm with $\rho=0.5$; we can see that the results are very close. \label{rho-end}

{\color{black}
\begin{remark}
\label{rem-ari}
    In a few cases of Figure \ref{rect}, we observe that the ARI or AMI score is slightly improved from $\rho=0.1$ to $\rho=0.5$. One possible explanation is that the goal of DBSCAN may not be exactly consistent
    with the best ARI and AMI values. The change of the clustering performance with the varying of $\rho$  can be complex. So when $\rho$ becomes 

\begin{figure}[]
    \centering
    \includegraphics[width=0.4\textwidth]{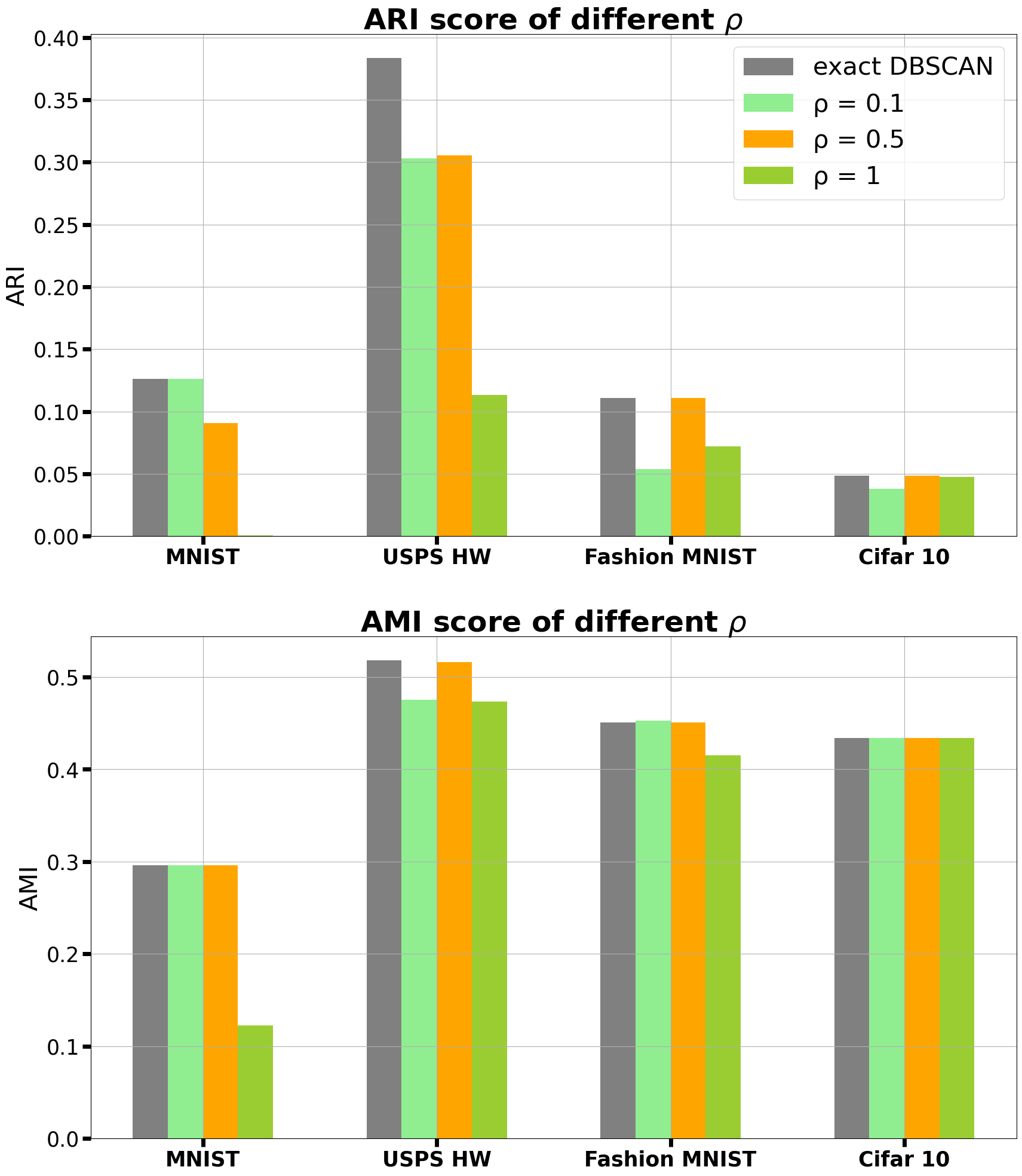}
    \caption{The ARI and AMI with fixed $\epsilon$ and different $\rho$.}
    \label{rect}
\end{figure}
\begin{figure}[]
    \centering
    \includegraphics[width = 0.4\textwidth]{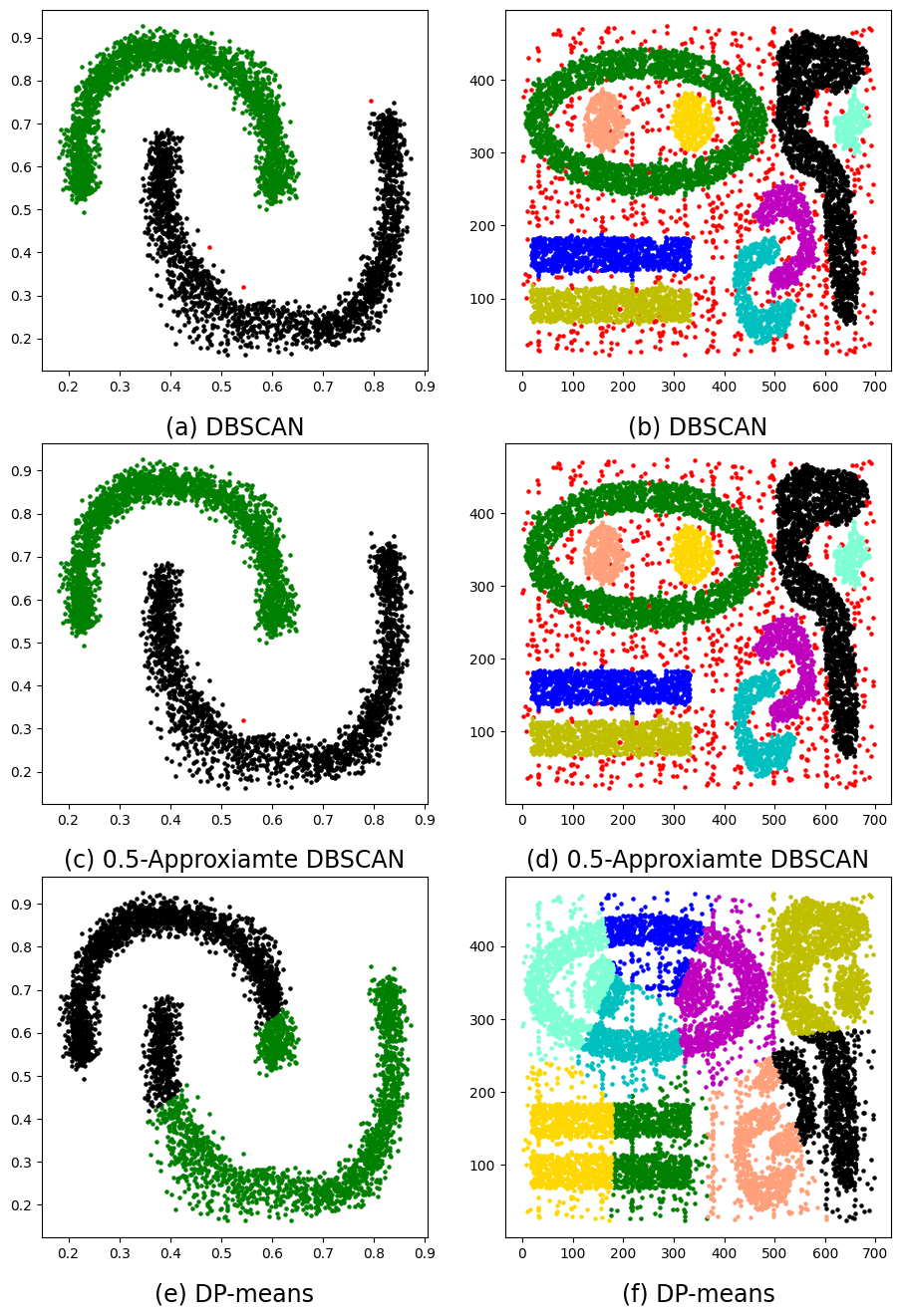}
    \caption{ Clustering results of exact DBSCAN, our approximate algorithm with $\rho=0.5$ and DP-means. The points with same color belong to the same cluster, and the red points are outliers.}
    \label{perform}
\end{figure}

\noindent smaller, though the clustering result is closer to the exact DBSCAN, it does not always have increased ARI or AMI. The experiments in \cite{jang2019dbscan++} also show that DBSCAN may not yield perfect ARI and AMI. 
\end{remark}
}

\vspace{-8pt}
{\color{black}
\subsection{Comparison with non-DBSCAN Algorithms}
\label{non-dbscan}
For completeness, we also consider the comparison with some non-DBSCAN clustering algorithms. Specifically, we consider the following baselines: {\color{black}``Dirichlet process (DP)-means''~\cite{kulis2011revisiting} (a non-parametric $k$-means clustering algorithm that does not require to input the number of clusters $k$)}, 
the streaming $k$-means algorithm ``BICO''~\cite{fichtenberger2013bico}, and two density clustering algorithms ``Density-peak''~\cite{rodriguez2014clustering} and ``Meanshift''~\cite{comaniciu2002mean} \footnote{We are aware of the recent improved algorithm Meanshift++~\cite{jang2021Meanshift++}, but Meanshift++ is only applicable to low-dimensional data, and thus is not suitable for most of our experimental datasets.}
.
For fairness in the comparison, we try our best to tune the parameters for each baseline. It is also worth noting that BICO  requires to manually specify the number of clusters $k$. In other words, the performance of BICO is likely to have downgrade in practical applications, if the number of clusters is not given. As for DP-means, the cluster penalty parameter $\lambda$ is set by taking the maximum distance of the k-center initialization, following the suggested setting in their original paper.

To evaluate their performances on high-dimensional dense datasets, we particularly construct the ``MNIST\_noisy'' and ``Fashion\_noisy'' datasets: we duplicate each point in the MNIST and Fashion MNIST datasets $10$ times and add a random perturbation in the range of $[-5, 5]$  to each dimension of each duplicated point, and then randomly generate $1\%$  noisy points within   $[0, 255]^d$ where $d$ is the dimensionality of the image space. 
From Table~\ref{k-means}, we can see that our exact and approximate DBSCAN outperform the baselines in terms of the ARI and AMI scores on most of the datasets. 

{\color{black}As for the runtime, we observe that BICO has similar speed with our algorithms (for example, the average difference on dataset MNIST is less than $20\%$), and the two baseline density clustering algorithms (Density-peak and Meanshift) are much slower (at least 5 times on average, and Density-peak even encounters memory overflow issue on some large datasets.).
The baseline DP-means, though usually has worse clustering performance than our algorithms as shown in Table~\ref{k-means} (two examples are also given in Figure~\ref{perform} (e)-(f)), is always the fastest one, which is about $12$ times faster on average than our algorithms and BICO.} We attribute this to the inherent algorithmic simplicity of DP-means than other testing algorithms (especially the density-based algorithms often need to conduct more complicated operations towards spatial neighborhood).


}

\begin{table}[ht] \small
\centering
\caption{{The runtime proportions of Algorithm~\ref{exp_kcenter} in our exact DBSCAN algorithm}}
\vspace{-0.1in}
\begin{tabular}{cccc} \\ \toprule
\textbf{Dataset}    & \begin{tabular}{c}\textbf{Radius-guided} \\ \textbf{Gonzalez (ms)}\end{tabular} & 
\begin{tabular}{c}
     \textbf{Total time} \\ \textbf{(ms)}
\end{tabular}
 & \textbf{Proportion} \\ \cmidrule(r){1-4}
Moons      & 4.16        & 4.44       & 94\%         \\
Cancer & 5.10      & 5.75         & 89\%         \\
USPS HW    &6090   &7994   &76\% \\

Biodeg     & 35.1     & 38.8         & 90\%          \\


MNIST   &40474  &43597  &93\% \\
Fashion MNIST     & 5113      & 5136          & 99\%         \\
Arrhythmia     & 35.1     & 38.8         & 90\%          \\
CIFAR 10      & 155966       & 156611        & 99\%         \\ 

COLA    &702   &779   &90\% \\
AG News    &73131   &114830   &64\% \\
MRPC & 67148    &67158 & 99\% \\

\bottomrule
\end{tabular}
\vspace{-10pt}
\label{kcenter-time}
\end{table}



  \begin{table*}[ht] \small
  \centering
  \captionsetup{justification=centering}
  \caption{\color{black} Comparison of ARI and AMI scores with the non-DBSCAN algorithms. \\The symbol "-" denotes values less than 0.01 and "*" denotes memory overflow (>500GB)}  
  \begin{tabular}{ccccccccccccccccc} \toprule
    \multirow{2}{*}{\textbf{Dataset}} & \multicolumn{2}{c}{
    \textbf{DBSCAN}} 
    & \multicolumn{2}{c}{
    \begin{tabular}{c}
         \textbf{0.5-approx}\\
         \textbf{DBSCAN} 
    \end{tabular}
    } &\multicolumn{2}{c}{\textbf{DP-means}} & \multicolumn{2}{c}{\textbf{BICO}}
    &\multicolumn{2}{c}{\textbf{Density-peak}} & \multicolumn{2}{c}{\textbf{Meanshift}}
\\
      & ARI     & AMI & ARI   & AMI  & ARI     & AMI  & ARI   & AMI & ARI     & AMI  & ARI   & AMI \\ \hline
Moons &\textbf{1.0}    & \textbf{1.0} &0.99 & 0.96  & 0.26 &0.30 &  0.19   & 0.19  & 0.51 &0.67& 0.39 &0.30\\
Cluto & 0.94 & 0.91 &\textbf{0.95} & \textbf{0.92} & 0.28 &0.40 & 0.59 & 0.50 & 0.41 &0.62&0.19 &0.33  \\
MNIST & 0.25 & 0.48 &0.28 & \textbf{0.51} & - & - & \textbf{0.31} & 0.41 & 0.01 &0.06& 0.02 &0.43 \\
MNIST\_Noisy & \textbf{0.28} & \textbf{0.57} &\textbf{0.28} & 0.51 & - & 0.05 & 0.17 & 0.22 & * &* & 0.02 & 0.49  \\
Fashion & 0.14 & 0.46 &\textbf{0.25} & \textbf{0.50} & - &- & 0.23 & 0.39 & 0.02 &0.06& 0.02 &0.43\\
Fashion\_Noisy & 0.15 & 0.52 &\textbf{0.24} & \textbf{0.52} &- &0.05 & 0.14 & 0.38 & * & * & 0.02 &0.49\\
PCAM & 0.02 & \textbf{0.13} &0.02 & \textbf{0.13} & \textbf{0.05} &0.04 & 0.01 & 0.02 & * &*& - &-\\
LSUN & 0.03 & \textbf{0.13} &0.02 & 0.12 & \textbf{0.04} &0.02 & 0.01 & 0.02 & * &*& - &-\\


\bottomrule
  \end{tabular}
  \label{k-means}
  \end{table*}

}







\subsection{Time Taken by the Radius-guided Gonzalez}
\label{prop_exp}


As discussed in Remark \ref{rem-r} and Remark~\ref{rem-rho}, we do not need to repeatedly run Algorithm~\ref{exp_kcenter} when tuning the parameters $\epsilon$ or $MinPts$. This property is quite useful in practical implementation because the time of Algorithm~\ref{exp_kcenter} often takes a large part of the whole DBSCAN procedure. To verify this, we illustrate the  runtime proportions taken by Algorithm~\ref{exp_kcenter}  in {our exact DBSCAN algorithm} in 
Table~\ref{kcenter-time}.

\begin{table*}[ht] \small
  \centering
  \caption{Results of the streaming algorithms, the symbol "-" denotes values less than 0.01}
  \begin{tabular}{ccccccccccc} \toprule
    \multirow{2}{*}{\textbf{Dataset}} & \multicolumn{2}{c}{
    \textbf{Our algorithm}} 
    & \multicolumn{2}{c}{\textbf{DBStream}} & \multicolumn{2}{c}{\textbf{D-Stream}} & \multicolumn{2}{c}{\textbf{evoStream}} & \multicolumn{2}{c}{\textbf{BICO}} \\
      & ARI     & AMI    & ARI   & AMI  & ARI     & AMI  & ARI   & AMI  & ARI   & AMI \\ \hline
Moons &\textbf{0.97}    & \textbf{0.90}  & 0.42  & 0.38  &  0.19   & 0.19  & 0.23  & 0.30  & 0.21  & 0.24 \\
Cancer & \textbf{0.70} & \textbf{0.58}  & 0.13 & 0.13  & 0.59 & 0.50  & 0.44 & 0.40 & 0.53 & 0.40 
\\
Arrhythmia                        & \textbf{0.23}        & \textbf{0.13}               & 0.16          & 0.13        & 0.01              & 0.01             & 0.18               & 0.12             & 0.05        & 0.10     \\
Biodeg &\textbf{0.09} &\textbf{0.06} &- &0.03 &0.01 &0.02&0.01 &0.05 &0.01 &0.04 \\
MNIST & 0.28    & \textbf{0.51}              & 0.04              & 0.12             & -              & -            & 0.01               & 0.07            & \textbf{0.31}        & 0.41   \\
CIFAR 10                              & 0.02                 & \textbf{0.43}      & 0.04          & 0.07                 & 0.01              & 0.02            & -               & 0.01             & \textbf{0.03}        & 0.13     \\
Fashion MNIST & \textbf{0.26} & \textbf{0.50} & - & - & - & - & - & - & 0.23 & 0.39
\\
USPS HW                             & 0.19                 & \textbf{0.53}               & -          & -                  & 0.01   & 0.10  & 0.22      & 0.40    & \textbf{0.37}        & 0.49\\
\color{black} PCAM & \textbf{0.02} & \textbf{0.03} & - & -  & - & - & - & -  &0.01 & 0.02  \\
\color{black} LSUN & - & \textbf{0.04} & - & -  & - & - & - & -  &\textbf{0.01} & -  \\
\color{black}Spotify\_Session 1\%  & \textbf{0.02}                 & \textbf{0.14}              & 0.02          & -            & -   & -     &-     & -      & 0.01        & -  \\ 
\color{black}Spotify\_Session 10\%  & \textbf{0.02}                 & \textbf{0.15}              & 0.02          & -              & -   & -   &-     & -    & 0.01        & - \\
\color{black}Spotify\_Session 50\%  & \textbf{0.02}                 & \textbf{0.15}             & 0.02          & -             & -   & -    &-     & -    & 0.01        & -  \\ 
\color{black}Spotify\_Session 100\%  & \textbf{0.02}                 & \textbf{0.09}             & 0.02          & -            & -   & -  &-     & -     & 0.01        & - \\
\bottomrule
  \end{tabular}

  \label{streaming_ari} 
  \end{table*}

From Table~\ref{kcenter-time} below, we can see that the runtime of Algorithm~\ref{exp_kcenter} takes more than 60\%  of the total DBSCAN procedure; it indicates that when we increase $\epsilon$ or adjust $MinPts$ during parameter tuning,  a large amount of of runtime can be saved without repeatedly running Algorithm~\ref{exp_kcenter}. 
{ In our approximate algorithm, this ratio is even higher,  \emph{e.g.}, in our approximate algorithm on the MNIST dataset, the Radius-Gonzalez procedure takes almost 98\% of the total time. We  leave the detailed results 
to our full version}.

\begin{figure*}[ht]
    \centering
    \includegraphics[width=0.9\textwidth]{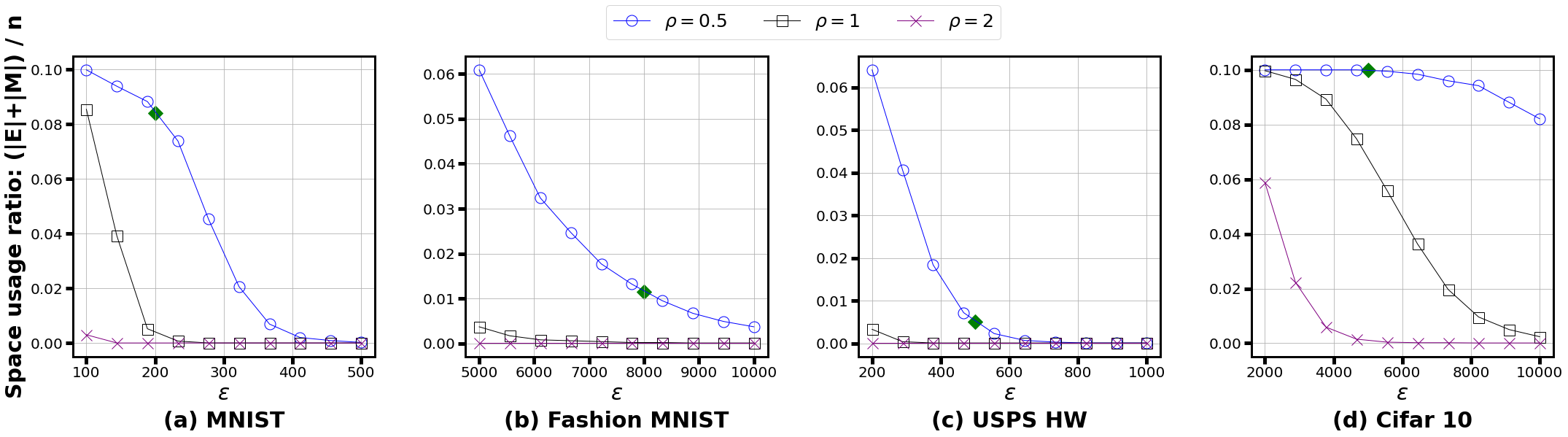}
    \vspace{-5pt}
    \caption{Memory usage of our streaming algorithm; the green diamond in each figure corresponds to the parameter that we use in Table~\ref{streaming_ari}\vspace{-10pt}}
    \label{streaming-space}
\end{figure*}

\subsection{Streaming DBSCAN Algorithm}
\label{sec-expstreaming}
We compare the clustering result of our streaming algorithm ({\em i.e.,} Algorithm~\ref{sa_dbscan}) with several popular streaming density clustering algorithms such as DBStream \citep{hahsler2016clustering}, D-Stream \citep{chen2007density}, evoStream \citep{carnein2018evoStream}, and BICO~\citep{fichtenberger2013bico}.
We set $\rho = 0.5$ for our algorithm.
For each baseline algorithm, we show its performance under the properly tuned parameters.
{\color{black} We also take some real stream data with large scales: the billion-scale dataset Spotify\_Session, and the two million-scale datasets PCAM and LSUN (please refer to Table~\ref{dataset} for details).
Additionally, we divide the Spotify\_session dataset into four datasets by date, namely the earliest 1\%, 10\%, 50\%, and the full set; since the recorded data in Spotify\_Session may have changing trend over time, we can view them as four different datasets.} We also use ARI and AMI  to evaluate their performances.  The results are shown in Table \ref{streaming_ari}.
{We can see that for most of the test instances, our streaming algorithm achieves better quality over  other baselines. 
We also consider   the memory usage of our streaming algorithm. 
Recall that in Algorithm \ref{sa_dbscan}, we need to store the set $E$ and $\mathcal{M}$ ($\mathcal{S}_* \subseteq E \cup \mathcal{M}$, so that we do not need extra memory for $\mathcal{S}_*$) in memory.
So we evaluate the memory usage by the ratio   $\frac{|E|+|\mathcal{M}|}{n}$. 
For each dataset, we conducted the experiments with different values of $\rho\in \{0.5$, $1$, $2\}$, and the results are shown in Figure \ref{streaming-space}. 
We can see that our algorithm can greatly save the memory usage, \emph{e.g.}, for the Fashion MNIST dataset, our algorithm only needs to store about 1\% of the data points (see the green diamond in the curve).


}

\section{Conclusion and Future work}
\label{sec-con}

In this paper, we study the metric DBSCAN problem and present its exact, approximate, and streaming algorithms.
We first study their quality guarantees in theory, and then conduct a set of experiments to compare  with other DBSCAN algorithms.
{\color{black}
{\color{black} Following this work, there are several opportunities to further improve our methods from both theoretical and practical perspectives.} For example,   is it possible to design a faster Gonzalez's algorithm under Assumption~\ref{ass-doubling2}? We are aware that a fast Gonzalez's algorithm was proposed by~\citet{DBLP:journals/siamcomp/Har-PeledM06} in doubling metric, but they assume that  the whole input data (including both inliers and outliers) has a low doubling dimension. As discussed in Section~\ref{non-dbscan}, we will also consider to reduce the overall time complexity by designing new algorithmic techniques, such as  
new $r$-net method with lower complexity. In particular, the improvement on efficiency may lead to real-world applications for database systems research ({\em e.g.,} preprocessing large-scale NLP database to support training machine learning model more efficiently).  
As for streaming DBSCAN, we may consider to reduce the number of passes for our streaming algorithm via developing some more succinct  data structure in memory; it also has certain practical significance to deal with other operations and issues, like data deletion and  drift.  
 

}

\section{Acknowledgement}
The research of this work was supported in part by the National Natural Science Foundation of China 62272432, the National Key Research and Development Program of China 2021YFA1000900, and the Natural Science Foundation of Anhui Province 2208085MF163.
We want to thank the anonymous reviewers for their helpful comments. We are grateful to Qingyuan Yang for his help on refining our code and for inspiring discussions.

\bibliographystyle{plainnat}
\bibliography{main}

\received{October 2023}
\received[revised]{January 2024}
\received[accepted]{February 2024}

\end{document}